\documentclass[11pt]{article}
\pdfoutput=1

\usepackage{jheppub} 

\usepackage{latexsym,amsmath,amsfonts,amssymb, tensor,xfrac}
\usepackage{graphicx}
\usepackage{silence}
\WarningFilter{amsmath}{Foreign command \over}
\usepackage{epsf}
\usepackage{makeidx}
\usepackage{multirow}
\usepackage[all]{xy}
\usepackage{hyperref}
\usepackage{verbatim} 
\usepackage{rotating}
\usepackage{xcolor}
\usepackage{multirow}
\usepackage{bm}
\usepackage{subcaption}
\usepackage{cleveref}
\usepackage{slashed}
\usepackage[normalem]{ulem}
\usepackage{amsthm}
\usepackage{ytableau}
\usepackage{pifont}
\usepackage{booktabs, cellspace, hhline}
\setlength\cellspacetoplimit{4pt}
\setlength\cellspacebottomlimit{4pt}
\interfootnotelinepenalty=10000

\usepackage{graphicx}
\usepackage{latexsym,amsmath,amsfonts,amssymb}
\usepackage{mathrsfs}
\usepackage{bbm}
\usepackage{bm}
\usepackage{paralist}
\usepackage{youngtab}
\usepackage{bclogo}
\usepackage{mathrsfs}
\usepackage{mathtools}

\usepackage{esvect} 


\usepackage{tikz}
\usepackage{tikz-cd}
\usepackage{diagbox}
\usetikzlibrary{positioning}
\usetikzlibrary{calc}
\usetikzlibrary{decorations.pathreplacing,calligraphy}

\usepackage{xstring}
\usetikzlibrary{decorations.pathmorphing} 
\usetikzlibrary{decorations.markings} 
\usetikzlibrary{arrows} 
\usetikzlibrary{shapes} 
\usetikzlibrary{matrix} 
\usetikzlibrary{positioning} 
\usepackage[english]{babel} 
\usepackage[autostyle]{csquotes}
\usetikzlibrary{positioning,fit}


\DeclarePairedDelimiter{\bra}{\langle}{\rvert}%
\DeclarePairedDelimiter{\ket}{\lvert}{\rangle}%
\DeclarePairedDelimiterX\braket[2]{\langle}{\rangle}{#1\delimsize\vert\mathopen{}#2}%

\newtheorem{lemma}{Lemma}
\newtheorem{proposition}{Proposition}

\newcommand{\kron}[2]{\delta_{#1 \text{ mod }#2,0}}
\newcommand{\IW}{\textbf{I}_{\text{W}}}
\newcommand{\SBE}{\mathcal{S}_{\text{BE}}}
\newcommand{\lcm}{\text{lcm}} 
\newcommand{\stab}{\text{Stab}} 
\newcommand{\orb}{\text{Orb}}

\newcommand{\sss}[1]{\medskip \noindent  \textbf{#1}}



\newcommand{\eg}{\textit{e.g.}}

\newcommand{\ie}{\textit{{\it i.e.} }}

\numberwithin{equation}{section}

\newcommand{\nn}{\nonumber}

\newcommand{\be}{\begin{equation}} 
\newcommand{\ee}{\end{equation}}
\newcommand{\bea}{\begin{equation} \begin{aligned}} \newcommand{\eea}{\end{aligned} \end{equation}}

\newcommand{\bit}{\begin{itemize}} 
\newcommand{\eit}{\end{itemize}}

\newcommand{\bC}{\mathbb{C}}

\newcommand{\Z}{\mathbb{Z}}
\newcommand{\C}{\mathbb{C}}
\newcommand{\R}{\mathbb{R}}

\newcommand{\Hilb}{\mathscr{H}}

\newcommand{\U}{\text{U}}
\newcommand{\SU}{\text{SU}}
\newcommand{\PSU}{\text{PSU}}
 
\newcommand{\SO}{\text{SO}}

\newcommand{\Sp}{\text{Sp}}

\renewcommand{\t}{\widetilde }
\renewcommand{\d}{\partial }

\renewcommand{\b}{\bar }

\newcommand{\half}{{\frac 12}}

\newcommand{\CA}{\mathcal{A}}

\newcommand{\CC}{\mathcal{C}}

\newcommand{\CG}{\mathcal{G}}
\newcommand{\CH}{\mathcal{H}}

\newcommand{\CJ}{\mathcal{J}}

\newcommand{\CM}{\mathcal{M}}
\newcommand{\CN}{\mathcal{N}}
\newcommand{\CO}{\mathcal{O}}
\newcommand{\CP}{\mathcal{P}}

\newcommand{\CR}{\mathcal{R}}

\newcommand{\CT}{\mathcal{T}}
\newcommand{\CU}{\mathcal{U}}

\newcommand{\CW}{\mathcal{W}}

\newcommand{\FR}{\mathfrak{R}}
\newcommand{\Fg}{\mathfrak{g}}



\newcommand{\m}{\mathfrak{m}}

\newcommand{\h}{\hat}


\DeclareMathOperator{\Tr}{Tr}

\DeclareMathOperator{\sign}{sign}


\newcommand{\SL}{{\mathscr L}}


\newcommand{\ov}{\over}




\newcommand{\dilog}{{\text{Li}_2}}










\newcommand{\floor}[1]{ \left\lfloor{ #1} \right\rfloor}


\DeclareMathAlphabet{\pazocal}{OMS}{zplm}{m}{n}

\usepackage{diagbox}
\usepackage{array}
\makeatletter
\newcommand{\thickhline}{%
    \noalign {\ifnum 0=`}\fi \hrule height 1pt
    \futurelet \reserved@a \@xhline
}
\newcolumntype{"}{@{\hskip\tabcolsep\vrule width 1pt\hskip\tabcolsep}}
\makeatother

\makeatletter
\g@addto@macro{\endtabular}{\rowfont{}}
\makeatother
\newcommand{\rowfonttype}{}
\newcommand{\rowfont}[1]{
   \gdef\rowfonttype{#1}#1%
}
\newcolumntype{L}{>{\rowfonttype}l}

\usepackage{multirow}
\usepackage{arydshln}


\begin{document}

\baselineskip=18pt  
\numberwithin{equation}{section}  
\allowdisplaybreaks  


%
%


\thispagestyle{empty}

\vspace*{0.8cm} 
\begin{center}
{\huge  
One-form symmetries and the 3d $\CN=2$ $A$-model:\\
\medskip  Topologically twisted indices and CS theories \\
}

 \vspace*{1.5cm}
Cyril Closset,  Elias Furrer, Osama Khlaif

 \vspace*{0.7cm} 

 {   School of Mathematics, University of Birmingham,\\ 
Watson Building, Edgbaston, Birmingham B15 2TT, UK}\\
 \end{center}

\vspace*{0.5cm}

\noindent
We study three-dimensional $\mathcal{N}=2$ supersymmetric Chern--Simons-matter gauge theories with a one-form symmetry in the $A$-model formalism on $\Sigma_g\times S^1$. We explicitly compute expectation values of topological line operators that implement the one-form symmetry. This allows us to compute the topologically twisted index on the closed Riemann surface $\Sigma_g$ for any real compact gauge group $G$ as long as the ground states are all bosonic. All computations are carried out in the effective $A$-model on $\Sigma_g$, whose $S^1$ ground states are the so-called Bethe vacua. We discuss how the 3d one-form symmetry acts on the Bethe vacua, and also how its 't Hooft anomaly constrains the vacuum structure. In the special case of the $\text{SU}(N)_K$ $\mathcal{N}=2$ Chern--Simons theory, we obtain results for the $(\text{SU}(N)/\Z_r)^{\theta}_K$ $\mathcal{N}=2$ Chern--Simons theories, for all non-anomalous $\Z_r \subseteq \Z_N$ subgroups of the centre of the gauge group, and with a $\Z_r$ $\theta$-angle turned on. In the special cases with $N$ even, ${N\ov r}$ odd and ${K\ov r}$ even, we find a mixed 't Hooft anomaly between gravity and the $\Z_r^{(1)}$ one-form symmetry of the $\text{SU}(N)_K$ theory, and the infrared 3d TQFT after gauging is spin. In all cases, we count the Bethe states and the higher-genus states in terms of refinements of Jordan's totient function. This counting gives us the twisted indices if and only if the infrared 3d TQFT is bosonic. Our results lead to precise conjectures about integrality of indices, which appear to have a strong number-theoretic flavour. 
 {\it Note:} this paper directly builds upon unpublished notes by Brian Willett from 2020.
\newpage

\tableofcontents


\section{Introduction}

Three-dimensional $\CN=2$ supersymmetric gauge theories are free theories in the ultraviolet (UV) but flow to strong coupling in the infrared (IR). Over the past 15 years, a huge amount of non-perturbative results were obtained for such 3d theories that also have a $\U(1)_R$ $R$-symmetry, using powerful supersymmetric localisation techniques, starting with the exact computations of the $S^2\times S^1$ (so-called) superconformal index~\cite{Kim:2009wb,Imamura:2011su} and of the three-sphere partition function~\cite{Kapustin:2009kz, Jafferis:2010un} -- see~{\it e.g.}~\cite{Pestun:2016zxk, Willett:2016adv, Closset:2019hyt} for reviews.

Almost all of these 3d $\CN=2$ supersymmetric partition functions can be understood in the framework of the 3d $A$-model, which allows us to evaluate supersymmetric path integrals on any Seifert 3-manifold $\CM_3$~\cite{Closset:2017zgf, Closset:2018ghr, Closset:2019hyt}. The 3d $A$-model is a two-dimensional cohomological topological field theory (Coh-TQFT) obtained as the topological $A$-twist of the 2d $\CN=(2,2)$ supersymmetric effective field theory description of the 3d $\CN=2$ gauge theory compactified on a circle. In the simplest instance, the 3d $A$-model computes the twisted index on a genus-$g$ Riemann surface, $\Sigma_g$~\cite{Nekrasov:2014xaa, Benini:2015noa, Benini:2016hjo, Closset:2016arn}:
\be\label{Zg def intro}
Z_{\Sigma_g\times S^1} =
{\rm Tr}_{\Sigma_g}\left( (-1)^{\rm F} y^{Q_F} \right)  \\
= \sum_{\h u\in \SBE} \CH(\h u, \nu)^{g-1}~.
\ee
Here the trace in the first equality is over the supersymmetric ground states on $\Sigma_g$ with a topological twist, where $(-1)^{\rm F}$ is the fermion number and $y=e^{2\pi i \nu}$ denote fugacities for some flavour charges $Q_F$. The second equality of~\eqref{Zg def intro} computes this twisted flavoured index as a sum over the so-called Bethe vacua $\{\h u\}$. 
These are the ground states of the 3d theory compactified on $T^2$. 
 More generally, for any Seifert-fibered three-manifold $\CM_3$, there exists a similar formula for the supersymmetric partition function~\cite{Closset:2018ghr}. Let
\be
\Hilb_{S^1}\cong {\rm Span}_\C \{|\h u\rangle \}
\ee
denote the Hilbert space of $A$-model ground states,  spanned by the Bethe vacua, where the states $|\h u\rangle$ are orthonormal.
We then have:
\be
Z_{\CM_3} = {\rm Tr}_{\Hilb_{S^1}}\left(\CH^{g-1} \CG_{\CM_3} \right) =\sum_{ \h u\in \SBE} \langle \h u| \CH^{g-1} \CG_{\CM_3} |\h u\rangle  = \sum_{\h u\in \SBE} \CH(\h u, \nu)^{g-1}\,  \CG_{\CM_3}(\h u, \nu)~.
\ee
Here, $\CH$ and $\CG_{\CM_3}$ are called the handle-gluing operator and the Seifert fibering operator, respectively. They are both diagonalised by the Bethe vacua (with $\CO(\h u)$ denoting the eigenvalue of $\CO$). The fibering operator $\CG_{\CM_3}$ induces a non-trivial fibration of $S^1$ over $\Sigma_g$~\cite{Closset:2017zgf, Closset:2018ghr}.

While this 3d $A$-model formalism is very powerful, it was originally developed under the assumption that the gauge group $\t G$ be a product of simply-connected and of unitary compact gauge groups -- that is, the compact Lie group $\t G$ is such that $\pi_1(\t G)$ is a free abelian group. In this paper, we are interested in more general cases with a gauge group of the form:
\be
G= \t G/\t \Gamma~,\qquad\qquad \t\Gamma \subseteq Z(\t G)~,
\ee
where the discrete abelian group $\t \Gamma$ is a subgroup of the centre of $\t G$. This group is also a subgroup of the one-form symmetry group $\Gamma_{\rm 3d}^{(1)}$ of the $\t G$ gauge theory:
\be
\Gamma_{\rm 3d}^{(1)}\cong  \Gamma~,\qquad\qquad \t \Gamma \subseteq  \Gamma~,
\ee
which is the group under which Wilson lines can be charged~\cite{Kapustin:2005py, Aharony:2013hda}. 
Supersymmetric observables in the gauge theory with gauge group $G$ can be obtained from the original $\t G$ gauge theory by gauging a non-anomalous subgroup $\t \Gamma$ of the one-form symmetry~\cite{Gaiotto:2014kfa}. Schematically, this corresponds to summing over all background gauge fields $B$ for the one-form symmetry $\t\Gamma^{(1)}_{\rm 3d}\cong \t \Gamma$, namely:
\be
Z_{\CM_3}[\CT/\t\Gamma^{(1)}_{\rm 3d}] = \sum_{B} Z_{\CM_3}[\CT](B)~.
\ee
Here, $Z_{\CM_3}[\CT](B)$ denotes the path integral of the original $\t G$ gauge theory, $\CT$, in the presence of a background gauge field $B\in H^2(\CM_3, \t\Gamma)$.

Turning on a background gauge field $B$ for $\Gamma_{\rm 3d}^{(1)}$ is equivalent to inserting a  web of topological line operators $\CU^\gamma$~\cite{Gaiotto:2014kfa}. In this paper, we discuss such insertions in the 3d $A$-model formalism. More generally, we discuss the action of the one-form symmetry on the Bethe vacua. In particular, we explain how 't Hooft anomalies for the one-form symmetry $\Gamma_{\rm 3d}^{(1)}$ usefully constrain the structure of the Bethe vacua. We then discuss the gauging of non-anomalous subgroups of $\Gamma_{\rm 3d}^{(1)}$ as explicitly as possible. 

The core ideas explored in this work were first discussed by Brian Willett  in unpublished notes~\cite{willett}; they were also briefly explained in~\cite{Eckhard:2019jgg}. The gauging of non-anomalous one-form and zero-form symmetries in 2d TQFTs was also discussed in detail by Gukov, Pei, Reid and Shehper~\cite{Gukov:2021swm}, and we will recover many of their results. In this paper, we give a detailed account of the physics of  3d $\CN=2$ supersymmetric gauge theories with one-form symmetries when compactified on $S^1$, including the dynamical consequences of 't Hooft anomalies. We then study topologically-twisted supersymmetric indices for 3d $\CN=2$ gauge theories with general gauge groups $G$. For concreteness, after discussing the formalism in general terms, our main computations will focus on the case of theories with $\t G= \SU(N)$ and a $\Z_N$ one-form symmetry. As we will see, the story can become particularly intricate depending on the arithmetic properties of $N$ -- here, $N$ prime will be the easy case, but for general $N$ a number of interesting phenomena will arise, including a subtle 3d modular anomaly.

We plan to discuss further aspects of generalised symmetries of 3d $\CN=2$ supersymmetric field theories, including a more thorough discussion of supersymmetric partition functions for general Seifert three-manifolds, as well as what happens in the presence of fermionic Bethe states, in future works~\cite{Closset:2025lqt,CFKK-24-II}. The present work is also related in various ways to many recent papers on 3d field theory, including {\it e.g.}~\cite{Creutzig:2021ext, Bullimore:2021auw, Dedushenko:2021mds, Ferrari:2022opl, Hosseini:2022vho, Comi:2023lfm, Gang:2023rei, Ferrari:2023fez, Dedushenko:2023cvd, Argurio:2024oym, Sharpe:2024ujm, Bhardwaj:2023zix}.

\subsection{Warm-up example: The \texorpdfstring{$\SU(2)_K$}{SU(2)K} supersymmetric Chern--Simons theory}
\label{sec:SU(2)warmup}

Before discussing our general result, let us first discuss the example of the 3d $\CN=2$ supersymmetric Chern--Simons (CS) theory $\SU(2)_K$. Gauging the $\Z_2$ one-form symmetry, one obtains the $\SO(3)_K$ CS theory. This is an interesting example which already involves many of the intricacies of the general $\SU(N)_K$ case to be discussed below.

Upon integrating out the gauginos in the vector multiplet, the $\CN=2$ $\SU(2)_K$ CS theory is equivalent to the bosonic CS theory $\SU(2)_k$ with $k=K-2$. The theory has $K-1$ Bethe vacua, corresponding to the allowable solutions to the Bethe equations:
\be
\Pi(u) \equiv x^{2 K} =1~, \qquad x\equiv e^{2\pi i u}~, \qquad u \rightarrow u+1~.
\ee
Here, $u$ is the 2d Coulomb branch parameter, with the $\Z_2$ Weyl group acting as $u \rightarrow -u$. The operator $\Pi$ is called  the gauge flux operator, and it is given in terms of the effective twisted superpotential $\CW$ of the $A$-model:
\be
\Pi(u) = \exp\left(2\pi i {\d \CW\ov \d u}\right)~, \qquad \CW= K u^2~.
\ee
The Bethe vacua corresponds to pairs of solutions $\pm \h u_l$ related by the Weyl symmetry, with:
\begin{equation}
   \h u_l =   \frac{l}{2K}~, \qquad l =1, \cdots, K-1~.
\end{equation}
The handle-gluing operator of this theory reads:
\begin{equation}
    \mathcal{H}(u) = \frac{K}{2\sin^2(2\pi u)}~,
\end{equation}
and therefore the $\Sigma_g$ twisted index~\eqref{Zg def intro}  is given by:
\begin{equation}
    Z_{\Sigma_g\times S^1}[\SU(2)_K] \equiv \langle 1\rangle_{\Sigma_g} =  \sum_{l=1}^{K-1}\left[\sqrt{2\ov K}\sin\left(\frac{\pi l}{K}\right)\right]^{2-2g}~.
\end{equation}
Setting $K=k+2$, this is the well-known Verlinde formula for the pure CS theory $\SU(2)_k$ -- equivalently, it gives the number of conformal blocks of the $\SU(2)_k$ WZW model~\cite{Verlinde:1988sn} (see also~\cite{beauville1996conformal,alekseev2000formulas,Blau:1993tv,Dowker1992,Schottenloher2008} and references therein).

\medskip
\noindent
{\bf Higher-form symmetries and their anomalies.} The 1-form symmetry $\Gamma_{\rm 3d}^{(1)} = \mathbb{Z}_2$ of the 3d CS theory is realised as  $\Z_2^{(1)}\oplus \Z_2^{(0)}$ in the 2d description. Let us now explain how these two symmetries are manifested in the $A$-model.
 First of all, the non-trivial charge operator for $\Z_2^{(1)}$ is a local topological operator. As we will explain later on, it is given by a square root of the gauge flux operator, which we thus denote as:
\be\label{Pihalf su2 intro}
\Pi^\half(u) = - x^K~.
\ee
Here, the sign of the square root is chosen to match known results. 
Inserting this operator on $\Sigma_g$, we obtain:
\be
 \langle \Pi^\half  \rangle_{\Sigma_g} =  \sum_{l=1}^{K-1} (-1)^{l+1} \left[\sqrt{2\ov K}\sin\left(\frac{\pi l}{K}\right)\right]^{2-2g}~.
\ee
Note that this vanishes if $K$ is odd.
In particular, the insertion on the torus gives us:
\be\label{Pi12Nc=2}
    \langle\Pi^{\frac{1}{2}}\rangle_{T^2} = \begin{cases}
        1 \qquad &\text{for} \, K \,\text{even}~, \\
        0 \qquad &\text{for} \, K \, \text{odd}~.
    \end{cases}
\ee
Let $\CT$ denote the $A$-model for this theory. The gauging of $\Z_2^{(1)}$ consists simply of summing over insertions of the flux operator:
\be\label{ZgSU2/Z21}
    Z_{\Sigma_g}[\SU(2)_K/\mathbb{Z}_2^{(1)}] = \half\left(
        \langle1\rangle_{\Sigma_g} + \langle\Pi^{\frac{1}{2}}\rangle_{\Sigma_g}
    \right)    =  \sum_{j=0}^{{\floor{K-2\ov 2}}}  \left[\sqrt{2\ov K}\sin\left(\frac{\pi (2j+1)}{K}\right)\right]^{2-2g}~.
\ee
When gauging, we have the freedom of introducing a coupling of the $\Z_2^{(1)}$ gauge field $B$ to a background gauge field $\theta$ for the so-called quantum $(-1)$-form symmetry, as we will review; here we chose $\theta=0$ for simplicity of presentation.

The zero-form symmetry $\Z_2^{(0)}$ is generated by topological line operators $\CU^\gamma$, for $\gamma\in \Z_2^{(0)}$. To study the action of the $0$-form symmetry on $\Hilb_{S^1}$, we first consider the $A$-model on a cylinder. We thus have a single generator $\CU^\gamma(\CC)$ wrapping the cylinder, with the fusion $\CU^2= {\bf 1}$, and we find  that:
\be\label{CU action SU2 intro}
\CU(\CC)  \ket{\hat{u}_l } = \ket{\hat{u}_{K-l} }~.
\ee
Note that $\CU(\CC)$ has a single fixed point, $ \ket{\hat{u}_{K\ov 2}}$, if and only if $K$ is even.
The $\SU(2)_K$ 3d $\CN=2$ CS theory has a 't Hooft anomaly ${K\ov 2}$ (mod $1$) for $\Gamma_{\rm 3d}^{(1)}=\Z_2$ -- that is, the $\Z_2$ one-form symmetry is anomalous if $K$ is odd~\cite{Gaiotto:2014kfa}. In the 3d $A$-model, this manifests itself as a mixed $\Z_2^{(1)}$-$\Z_2^{(0)}$ anomaly.  The topological operators acting on $\Hilb_{S^1}$ satisfy the twisted commutation relation:
\be\label{comm anomaly SU2K}
\CU(\CC) \,\Pi^\half= (-1)^K \Pi^\half\, \CU(\CC)~,
\ee
which follows from~\eqref{CU action SU2 intro} and the fact that $\Pi^\half\ket{\h u}=(-1)^{l+1}  \ket{\h u}$.

Let us now set $g=1$, with the $\CC$ being one generator of $H_1(T^2,\Z)\cong \Z^2$ and $\t\CC$ denoting the other generator (the Euclidean time direction). Inserting a topological line along $\CC$, we find:
\be\label{0-sym-op-SU(2)}
\langle \CU(\CC) \rangle_{T^2} =\sum_{l=1}^{K-1} \bra{\hat{u}_l }\CU(\CC)\ket{\hat{u}_{l} } = \sum_{l=1}^{K-1} \braket{\hat{u}_l }{\hat{u}_{K-l} } = \begin{cases}
    1 \quad &\text{if}\; K \; \text{is even}~,\\
    0 \quad &\text{if}\; K \; \text{is odd}~.
\end{cases}
\ee
Finally, the insertion of a topological line operator along $\t\CC\subset T^2$ corresponds to a trace over the non-trivial twisted-sector Hilbert space, which arises from the Bethe vacuum $\h u_{K\ov 2}$ that preserves $\Z_2^{(0)}$ -- such a twisted sector only exists for $K$ even. We can compute:
\be
\langle \CU(\t\CC) \rangle_{T^2}=\langle \CU(\CC)\rangle_{T^2}=\langle \Pi^\half \rangle_{T^2}~.
\ee
The first equality is expected from modular invariance in 2d. Moreover, the second equality is also a consequence of modular invariance for the 3d  Chern--Simons theory on $T^3$, and it is one way to fix the sign in~\eqref{Pihalf su2 intro}. Similarly, if we insert both the $\Z_2^{(1)}$ operator $\Pi^{\half}$ and $\Z_2^{(0)}$ operator $\CU(\CC)$ on $T^2$, we find:
\be\label{T3 anomaly SU2 intro}
\langle \Pi^\half \CU(\CC) \rangle_{T^2} = \sum_l \Pi^\half(\hat{u}_l) \braket{\hat{u}_l }{ \hat{u}_l +\tfrac12}=\begin{cases}
    (-1)^{K-2\ov 2} \quad &\text{if}\; K \; \text{is even~,}\\
    0 \quad &\text{if}\; K \; \text{is odd~.}
\end{cases}
\ee
Note that this vanishes if and only if the theory is anomalous, which is the expected result given the commutator~\eqref{comm anomaly SU2K}. Finally, we note that:
\be\label{modular anom SU2}
\langle \Pi^\half \CU(\CC) \rangle_{T^2} =  (-1)^{K-2\ov 2} \langle  \CU(\CC) \rangle_{T^2}~, 
\ee
for $K$ even. From the point of view of the 3d CS theory on $T^3$, we would naively have expected that relation to hold with a trivial sign, but this only happens if ${K\ov 2}$ is odd. For ${K\ov 2}$ even, we have a sort of modular anomaly on $T^3$, which is a consequence of a mixed anomaly between the 3d one-form symmetry and gravity. We will discuss this subtle and important point in section~\ref{subsec:T3 modular anom}, in the context of the $\SU(N)_K$ $\CN=2$ CS theory.

\medskip\noindent
{\bf The \texorpdfstring{$\SO(3)_K$}{SO(3)_K} $T^2$ partition function.} 
Let us now assume that $K$ is even, so that we can gauge both $\mathbb Z_2^{(1)}$ and $\mathbb Z_2^{(0)}$. Let us first consider $\Sigma_g=T^2$. Naively, this should give us the Witten index of the $\SO(3)_K$ $\CN=2$ supersymmetric CS theory. The gauging corresponds to inserting all possible topological operators for  $\mathbb Z_2^{(1)} \times \mathbb Z_2^{(0)}$ on $T^2$. We thus find:
\bea\label{SO(3)-index}
&Z_{T^2}[\SO(3)_K] &=&\; 
\frac{1}{4}\sum_{n, n', n''\in \Z_2} \left\langle \CU(\CC)^n  \CU(\t\CC)^{n'} \Pi^{n''\ov 2} \right\rangle_{T^2}~ \\
&&=&\;   {1\ov 4} \left(K-1 +3  + 1+3 (-1)^{K-2\ov 2} \right) 
 = \begin{cases}
     {K\ov 4}+{3\ov 2}\qquad & \text{for } \, {K\ov 2} \, \text{ odd~,}\\
    {K\ov 4} \qquad & \text{for } \, {K\ov 2} \, \text{ even~.}\\
 \end{cases}
\eea
It is illuminating to compare this result to the non-supersymmetric $\SO(3)_k$ CS theory (with $k=K-2$). The above result should give us the number of states of the $\SO(3)_k$ theory on $T^2$, which then reads:
\be\label{ZT3SO3k}
Z_{T^3}[\CN=0\; \SO(3)_k]= \begin{cases}
     {k\ov 4}+2\qquad & \text{for } \, {k\ov 2} \, \text{ even~,}\\
    {k\ov 4} +{1\ov 2} \qquad & \text{for } \, {k\ov 2} \, \text{ odd~.}\\
 \end{cases}
\ee
We have a bosonic 3d TQFT for $k\in 4\Z$, while $\SO(3)_k$ with $k+2\in 4\Z$ is a spin-TQFT~\cite{Dijkgraaf:1989pz}. In either case, one can check that~\eqref{ZT3SO3k} corresponds to the number of $\SO(3)_k$ Wilson lines. The 3d one-form gauging $\SU(2)_k \rightarrow \SO(3)_k = \SU(2)_k/\Z_2$ can be performed directly as follows. There are $k+1$ $\SU(2)_k$ Wilson lines, or anyons, for the $\SU(2)$ representations of spin  $j=0, \cdots, {k\ov 2}$, namely:
\be\label{lines SU2k}
\CN=0 \;\; \SU(2)_k \quad :\quad \{ W_j\} = \{ {\bf 1}~, \, W_\half~, \, W_1~, \, \cdots~, W_{k\ov 2} \}~,
\ee
with the fusion rules~\cite{Gepner:1986wi}:
\be
W_{j_1} W_{j_2}= \sum_{j=|j_1-j_2|}^{{\rm min}(j_1+j_2, k-j_1-j_2)} W_j~.
\ee
The lines~\eqref{lines SU2k} include exactly two abelian anyons, ${\bf 1}$ and $a\equiv W_{k\ov 2}$, with fusion $a^2={\bf 1}$, which are therefore the $\Z_2^{(1)}$ topological lines  in 3d. 
We have:
\be
(\Z_2^{(1)})_{\rm 3d}\, :\, W_j \rightarrow (-1)^{2j} W_j~, \quad \qquad a W_j = W_{{k\ov 2}-j}~,
\ee
for the action of $a$ on $W_j$ by linking and for the fusion, respectively. The gauging of  $(\Z_2^{(1)})_{\rm 3d}$ consists of three steps~\cite{Hsin:2018vcg}. First, we discard all lines which are not invariant under $(\Z_2^{(1)})_{\rm 3d}$, leaving us with $\{ W_j\}$, $j\in \Z$. Secondly, we identify the lines $W$ and $a W$. Thirdly, any line that is a fixed point of the fusion with $a$ gives rise to two distinct lines in the gauged theory. The fixed-point line is $W_{k\ov 4}$, and thus only survives the first step of gauging for ${k\ov 2}$ even. We thus find the $\SO(3)_k$ lines:
\be
\CN=0 \; \; \SO(3)_k \quad :\quad 
\begin{cases}
    \{ {\bf 1}~, \, W_1~, \, W_2~, \cdots~, \, W_{{k\ov 4}-1}~, \, W_{{k\ov 4}; (1)}~, \, W_{{k\ov 4}; (2)} \}   \qquad & \text{for }   {k\ov 2}   \text{ even,}\\
    \{ {\bf 1}~, \, W_1~, \, W_2~, \cdots~, \, W_{{k\ov 4}-{1\ov 2}} \} \qquad & \text{for }  {k\ov 2}   \text{ odd~.}\\
 \end{cases}
\ee
This precisely reproduces~\eqref{ZT3SO3k}. It is important to note, however, that the above result is the correct Witten index of the $\SO(3)_K$ $\CN=2$ supersymmetric theory if and only if ${K\ov 2}$ is odd (hence if ${k\ov 2}$ is even), so that the `modular anomaly' in~\eqref{modular anom SU2} disappears. When ${K\ov 2}$ is even, the infrared 3d TQFT $\SO(3)_k$ is actually a spin-TQFT because the abelian anyon has spin $h[a]=\half$, and additional care must be taken in interpreting our result. In that case, we should have explicitly chosen a spin structure on $\Sigma_g$ and the index will depend on that choice. One can show that the true Witten index of $\SO(3)_K$ for ${K\ov 2}$ even and in the RR sector on $T^2$ is equal to ${K\ov 4}-2$~\cite{Delmastro:2021xox}; the extension of the $A$-model formalism to include this case will be discussed elsewhere~\cite{CFKK-24-II}.

\medskip\noindent
{\bf The higher-genus twisted index for $\SO(3)_K$.} 
We can generalise the computation~\eqref{SO(3)-index} to obtain the $\SO(3)_K$ twisted index for any $\Sigma_g$, for $K$ even. First, we find that the insertion of the non-trivial $\Z_2^{(0)}$ line along any generator $\CC_i$ of $H_1(\Sigma_g, \Z)\cong \Z^{2g}$ gives us:
\be\label{CUSigmag SU2 intro}
\langle \CU(\CC_i) \rangle_{\Sigma_g \times S^1} =\sum_{l=1}^{K-1} \bra{\hat{u}_l }\CU(\CC_i) \CH^{g-1} \ket{\hat{u}_{l} }  
= \CH(\h u_{K\ov 2})^{g-1} = \left({K\ov 2}\right)^{g-1}~.
\ee
Moreover, the insertion of any non-trivial set of $\Z_2^{(0)}$ lines on $\Sigma_g$ is equal to~\eqref{CUSigmag SU2 intro}, due to invariance under large diffeomorphism. We similarly find that:
\be
\langle \CU(\CC_i) \Pi^\half \rangle_{\Sigma_g \times S^1} =  (-1)^{K-2\ov 2} \left({K\ov 2}\right)^{g-1}~.
\ee
The $\Sigma_g\times S^1$ partition function for $\SO(3)_K$ is obtained as:
\be\label{Zg SO3 thm}
Z_{\Sigma_g}[\SO(3)_K] ={1\ov 2^{2g}} \sum_{n'\in \Z_2}\sum_{(n_i)\in \Z_2^{2g}} \left\langle \Pi^{n'\ov 2}  \, \prod_{i=1}^{2g} \CU(\CC_i)^{n_i} \right\rangle_{\Sigma_g \times S^1}~,
\ee 
which gives us:
\be\label{ZSO3K general g}
Z_{\Sigma_g}[\SO(3)_K] = {K^{g-1}\ov 2^{3g-1}}  \left( 2\sum_{j=0}^{{K\ov 2}-1}\left[\sin\left(\pi (2j +1)\ov K\right)\right]^{2-2g}  + (2 ^{2g} -1)\left(1+(-1)^{K-2\ov 2}\right)\right)~.
\ee
This matches known results. In the case where ${K\ov 2}$ is odd, so that $k=K-2\in 4\Z$, \eqref{Zg SO3 thm} is in perfect agreement with the number of conformal blocks for the $\SO(3)_k$ WZW model -- see~{\it e.g.}~\cite{beauville1996verlinde} for a mathematical derivation. In the spin-TQFT case, $k+2\in 4\Z$, our result can also be recovered from the corresponding fermionic WZW model, but the full story is much more subtle~\cite{Delmastro:2021xox}, as already mentioned. As a special case, we note that WZW$[\SO(3)_2]$ is a free CFT~\cite{Dijkgraaf:1989pz} (equivalent to three Majorana fermions), which explains why we find $Z_{\Sigma_g\times S^1}=1$ for any $g$  when setting $K=4$ in~\eqref{ZSO3K general g}; however, the true (RR-sector)  Witten index would be equal to $-1$ in this case~\cite{CFKK-24-II}.

\subsection{Results for general \texorpdfstring{$G$}{G} and for the \texorpdfstring{$\SU(N)_K$}{SU(N)K} gauge theory}

In the rest of this introduction, we summarise the main results of this paper, essentially in the order in which they will be presented.

\sss{One-form symmetries in the 3d $\boldsymbol{A}$-model.}
Given a 3d $\CN=2$ gauge theory with gauge group $\t G$ compactified on $S^1$, as discussed above, its one-form symmetry $\Gamma_{\rm 3d}^{(1)}$ descends to both a one-form symmetry and a zero-form symmetry in 2d, denoted by $ \Gamma^{(1)}$ and $ \Gamma^{(0)}$, respectively. Thus, we must first understand how these two distinct symmetries act on the $A$-model Hilbert space. For any 2d TQFT, this question was recently addressed in~\cite{Gukov:2021swm}. The one-form symmetry $\Gamma^{(1)}$ acts on the ground states by a phase, and the zero-form symmetry $\Gamma^{(0)}$ acts by permutation. Indeed, we have \cite{willett,Eckhard:2019jgg}:
\be\label{symms Gamma01 action}
\Gamma^{(1)} \, : \,  |\h u \rangle\longrightarrow \Pi(\h u)^\gamma |\h u \rangle~,\qquad \qquad\qquad
\Gamma^{(0)} \, : \,  |\h u \rangle\longrightarrow  |\h u +\gamma\rangle~,
\ee
where $\gamma\in \Gamma$ is a group element. Here $\Pi(\h u)^\gamma \in {\rm Hom}(\Gamma^{(1)}, \U(1))$ is a phase, and  $|\h u +\gamma\rangle$ denotes the new vacuum under the action of $\gamma\in \Gamma^{(0)}$. We will explain these transformations and their consequences in much detail throughout this work.

Focusing on $\CM_3=\Sigma \times S^1$, the 0-form symmetry $\Gamma^{(0)}$ along $\Sigma$ is implemented by topological lines, while the 1-form symmetry $\Gamma^{(1)}$ is implemented by the topological point operators that arise from the 3d topological lines wrapping $S^1$. The Bethe vacua diagonalise the $\Gamma^{(1)}$ topological operators $\Pi^\gamma$, as shown in~\eqref{symms Gamma01 action}, while the 0-form symmetry $\Gamma^{(0)}$ is (partially) spontaneously broken in most 2d vacua. In particular, the Bethe vacua organise themselves into orbits of $\Gamma^{(0)}$.

\sss{'t Hooft anomalies.} The three-dimensional one-form symmetry can be anomalous, as the topological lines can themselves be charged under $\Gamma_{\rm 3d}^{(1)}$. This results in a mixed $\Gamma^{(1)}$-$\Gamma^{(0)}$ anomaly in the $A$-model on $\Sigma$. This 't Hooft anomaly implies general constraints on the structure of the orbits of Bethe vacua under the action of $\Gamma^{(0)}$. For instance, some Bethe vacua can be fixed under the whole of $\Gamma^{(0)}$ if and only if the anomaly vanishes. This follows from the fact that $\Gamma^{(1)}$ cannot be spontaneously broken in the 2d vacuum~\cite{Gaiotto:2014kfa}, and thus the mixed anomaly must be matched by distinct 2d symmetry-protected topological (SPT) phases for $\Gamma^{(1)}$ in the $A$-model description. 

We will also see that the 3d 't Hooft anomaly is entirely determined by the bare Chern--Simons levels $K$ of the UV gauge theory. We further comment on the structure of the $(d+1)$-dimensional anomaly theory, in both the $d=3$ and $d=2$ descriptions, in appendix~\ref{app:tHooftanomlay}.

In the explicit $\t G= \SU(N)_K$ example at the core of this paper, a key role will also be played by a mixed 't Hooft anomaly between gravity and the 1-form symmetry in 3d. Such an anomaly can appear in more general gauge theories, and can change the interpretation of our results significantly. The general results of this paper, as presented in section~\ref{sec:1-form and A-model}, are thus obtained assuming that this gravitational anomaly vanishes -- that happens if the 3d topological line that generates  $\Gamma_{\rm 3d}^{(1)}$ has trivial braiding with the transparent line $\psi$ that captures the dependence on the spin structure of the 3-manifold. The more general case will be discussed in future work~\cite{CFKK-24-II}.

\sss{Background gauge fields.}
The partition function of the $\widetilde G$ gauge theory on $\Sigma_g\times S^1$ can be found by inserting powers of the handle-gluing operator $\CH$ on $\Sigma_g= T^2$~\cite{Nekrasov:2014xaa}. We wish to compute the topologically twisted index in the presence of topological operators for the discrete symmetry $\Gamma$, which is equivalent to turning on some background gauge fields for $\Gamma^{(1)}\times \Gamma^{(0)}$. Schematically, we have:
\be\label{ZBC gen intro}
Z_{\Sigma_g\times S^1}(B, C) = \left\langle \Pi^{\gamma^{(1)}} \CU^{\gamma^{(0)}} \right\rangle_{\Sigma_g}~.
\ee
Here, $B\in H^2(\Sigma_g, \Gamma^{(1)})$ and $C\in H^1(\Sigma_g, \Gamma^{(0)})$ are the background gauge fields,  while $\Pi^{\gamma^{(1)}}$  for $\gamma^{(1)}\in \Gamma^{(1)}$ and $\CU^{\gamma^{(0)}}$ for $\gamma^{(0)}\in \Gamma^{(0)}$ denote the point and line operators, respectively, corresponding to the background gauge fields.
 We explicitly evaluate~\eqref{ZBC gen intro} using the 2d TQFT point of view. This simply corresponds to using the pair-of-pants decomposition of the Riemann surface with appropriate lines inserted. This prescription gives unambiguous results for insertions of the $\Gamma^{(0)}$ lines. The insertion of the $\Gamma^{(1)}$ operator is slightly ambiguous, as the flux operator $\Pi^{\gamma^{(1)}}$ is only defined up a certain root of unity. In all examples we will consider, we will be able to fix this ambiguity by demanding 3d modularity for the expectation values of elementary topological lines at $g=1$, namely on $\CM_3= T^3$.

\sss{Discrete gauging.}
Gauging the zero-form and one-form symmetries corresponds to summing over all possible background gauge fields~\cite{Gaiotto:2014kfa}. Note that, in the $A$-model description, we can gauge $\Gamma^{(1)}$ and $\Gamma^{(0)}$ separately. 
 We also keep track of the background gauge fields for the dual $(-1)$-form and $0$-form symmetries, respectively.  
We are careful to fix the overall normalisation of the partition functions for the gauged theories in order to preserve the 2d Hilbert-space interpretation in the gauged theories.

The $\Gamma^{(1)}$ symmetry induces a so-called decomposition~\cite{Hellerman:2006zs, Sharpe:2022ene} of the $A$-model into disconnected sectors, also called `universes', and the gauging of the $1$-form symmetry projects us onto one particular sector. These sectors are indexed by $\theta$-angles which are the background gauge fields for the dual $(-1)$-form symmetry. The gauging of the  $\Gamma^{(0)}$ symmetry on $\Sigma_g$ is a standard orbifolding procedure familiar from string theory: in canonical quantisation on $T^2$, we identify the states related by $\Gamma^{(0)}$  while also adding in the twisted sectors states induced by non-trivial stabiliser subgroups. Overall, our general analysis closely follows previous discussions obtained in the 2d TQFT framework~\cite{willett, Gukov:2021swm}; the main improvement in our analysis is that we carefully study the Hilbert spaces spanned by the Bethe states before and after the gauging, and that we elucidate how 't Hooft anomalies constrain the structure of 2d ground states. We also keep track of all the dual symmetries that arise upon gauging.

\sss{Symmetries of the $\boldsymbol{\CN=2}$ $\boldsymbol{\SU(N)_K}$ CS theory.}
We apply the general descriptions of the higher-form symmetries of the 3d $A$-model to the pure $\CN=2$ supersymmetric CS theory with gauge group  $\SU(N)$ at CS level $K$. We study this theory in depth in section~\ref{sec:SU(Nc)K}. In this example, we can compare the results obtained using our formalism to many previous results in the literature, and we find perfect agreement. 

 The Bethe equations for $\SU(N)_K$ have  $\binom{K-1}{N-1}$ solutions, which gives us the Witten index $\IW[\SU(N)_K]$~\cite{Ohta:1999iv}. This 3d CS theory has a $\mathbb Z_{N}^{(1)}$ one-form symmetry, the centre symmetry, with a 't Hooft anomaly given by $K$ modulo $N$. The 3d $1$-form symmetry descends  to $\Gamma^{(1)}=\mathbb Z_{N}^{(1)}$ and  $\Gamma^{(0)}=\mathbb Z_{N}^{(0)}$ in the $A$-model. While the $\mathbb Z_{N}^{(1)}$ symmetry acts on the Hilbert space by $N$-th roots of unity, the $\mathbb Z_{N}^{(0)}$ symmetry organises the Bethe vacua into orbits whose lengths are divisors of $N$. There exists a unique fixed point if and only if the t' Hooft anomaly vanishes. 

The gauging pattern of the $\mathbb Z_{N}$ centre symmetry is determined by the subgroups of $\mathbb Z_{N}$, which are labelled by the divisors $d$ of $N$. 
 For instance, for $N$ a prime number and with $K\in N \Z$, so that the non-anomalous $\mathbb Z_{N}$ has no non-trivial subgroups, we easily determine the 3d Witten index of the fully gauged theory $\PSU(N)_K \equiv \SU(N)_K/\mathbb Z_{N}$  to be:
\begin{equation}\label{PSU_index_Nc_prime_intro}
 \IW[\PSU(N)_{K}] =\frac{1}{N^2}\left[ \binom{K-1}{N-1} -1 + N^3\right]~.
\end{equation}
This index is always an integer, since it can be written as a trace over the $\PSU(N)_K$ Hilbert space on $T^2$, and we can prove this fact using number-theoretic identities.
 We find a much richer structure for general values of $N$, since we can consider the discrete gauging of any non-anomalous subgroup $\mathbb Z_r\subset \Z_{N}$. We study in detail the insertion of all topological operators on $\Sigma_g\times S^1$, for all values of $N$ and $K$.  

\sss{$\boldsymbol{\SU(N)_K}$ at genus 1.}
The Witten index for the gauged theory can be obtained by enumerating the fixed points under all non-anomalous subgroups  $\Z_d\subseteq \Z_N$. Three-dimensional modularity implies many relations amongst the expectation values of topological lines inserted on 1-cycles of $T^3$. These relations can be affected by a 3d modular anomaly, which we determine explicitly. This anomaly is a sign that only depends on the arithmetic properties of $N$, $K$ and $d$, we discuss its relation to spin structures, quantisation conditions of WZW models and mixed 1-form/gravity anomalies. 
 For instance, it allows us to determine explicitly the number of Bethe states for any $N$ and for $K\in N\Z$, as a sum over divisors of $N$:
\begin{equation}\label{IWPSUall_intro}
  \IW[\PSU(N)_K]=\frac{1}{N^2}\sum_{ d|N}\mathscr J_3^{N,K}(d)\binom{\frac Kd-1}{\frac{N}{d}-1}~.
\end{equation}
Here, $\mathscr J_3^{N,K}$ is a refinement of Jordan's totient function $J_3$ which takes into account the 3d `modular anomaly'.%
\footnote{The Jordan's totient function itself is a generalisation of the more familiar Euler totient $\varphi$, also known as Euler's $\varphi$ function. The functions $J_k$ (for some integer $k$) can be understood as enumeration functions of interactions of $k$ instances of $\mathbb Z_{N}$ groups, similar to how $\varphi(d)$ counts the number of generators of $\mathbb Z_d$.} When that `modular anomaly' vanishes, $\mathscr J_3^{N,K} =J_3$ and then~\eqref{IWPSUall_intro} is the proper Witten index of the $\PSU(N)_K$ theory, while in the presence of the modular anomaly (that is, when the infrared $\PSU(N)_k$ pure CS theory, with $k=K-N$, is a spin-TQFT) the actual Witten index needs to be computed more carefully~\cite{CFKK-24-II}.

A generalisation to the discrete gauging of a non-anomalous subgroup $\mathbb Z_r\subset \Z_{N}$ is straightforward, by truncating the sum~\eqref{IWPSUall_intro} to the corresponding divisors. This thus computes the Witten index of the 3d $\CN=2$ CS theories for all groups $\SU(N)/\mathbb Z_r$ whose simply-connected cover is $\SU(N)$, at least when the infrared TQFT is bosonic. 
It can be checked that the index $\IW[\SU(N)_K/\Z_r]$ is indeed an integer, as expected. Including a $\theta$-angle for $\Gamma^{(-1)}\cong \Z_r$, which corresponds to an holonomy for the dual $\Gamma^{(0)}_{\rm 3d}\cong \Z_r$ discrete symmetry in 3d, provides an interesting refinement~\cite{Gaiotto:2014kfa,Aharony:2013hda,Hsin:2020nts,Gaiotto:2017yup}, and probes additional arithmetic properties of the integer $N$. We find that this modifies~\eqref{IWPSUall_intro} by a further generalisation of Jordan's totient.

\sss{$\boldsymbol{\SU(N)_K}$ at genus $\boldsymbol{g}$.}
Riemann surfaces of arbitrary genus $g$ admit $2g$ elementary topological operators associated with the zero-form symmetry $\Gamma^{(0)}\cong \Z_{N}$. By inserting the  handle-gluing operator $\CH$, we find that the genus-$g$ partition function for $K\in N\Z$ can be written as: 
\begin{equation}\label{genus_g_PSU}
    Z_{\Sigma_g\times S^1}[\PSU(N)_K^{\theta_s}]=\frac{1}{N^{2g-1}}
  \sum_{d|N}J_{2g}(d)\!\!\!\sum_{\hat u\in\SBE^{\left(\frac{N}{d}\gamma_0\right)}\cap \, \SBE^{\vartheta_s} } \CH(\hat u)^{g-1}~,
\end{equation}
where $\gamma_0$ is a generator of $\mathbb Z_{N}$, $\SBE^{(\gamma)}$ is the set of Bethe vacua fixed under the action of $\gamma\in\mathbb Z_{N}$, and $\SBE^{\vartheta_s}$ is the set of Bethe vacua that span the smaller `universe' indexed by the $\theta$-angle $\theta_s$. 
The  genus-$g$ Jordan totient $J_{2g}$  has previously appeared in the study of Verlinde bundles over curves~\cite{Oprea+2011+181+217, alekseev2000formulas, beauville1998picard, osti_10067008, oprea2010note}. Our result~\eqref{genus_g_PSU} generalises various partial results in the literature~\cite{Oprea+2011+181+217,Gukov:2021swm,alekseev2000formulas,Meinrenken2018}. 

\sss{Comparison with abelian anyon condensation.}
In the $\SU(2)_K$ warm-up example of section~\ref{sec:SU(2)warmup}, it was demonstrated that the one-form gauging can be performed directly in three dimensions using the 3d TQFT approach, and that the result on $T^3$ agrees with the $\SO(3)_K$ $T^3$ index obtained by gauging both zero-form and one-form symmetries in 2d. 
In section~\ref{sec:gauging 1-form sym in 3d}, we revisit the calculation of the $\PSU(N)_K$ Witten index from one-form gauging in the 3d TQFT description, also known as abelian anyon condensation. The $\SU(N)_K$ $\CN=2$ supersymmetric CS theory has a spectrum of Wilson lines $W_{\boldsymbol{\lambda}}$ indexed by Young tableaux $\boldsymbol{\lambda}$. We discuss their fusion rules and identify the abelian anyons ({\it i.e.} $\Z_{N}$ topological lines), which allows us to implement the three-step gauging procedure discussed in~\cite{Hsin:2018vcg}. For $N$ prime, one can understand each step as a simple modification of the $\SU(N)_K$ index, and one recovers~\eqref{PSU_index_Nc_prime_intro} as a result.  For arbitrary values of $N$, the fusion with the abelian anyons furnishes a group action on the $\SU(N)_K$ Wilson lines, and the 3d gauging partitions them into orbits. We express the $\PSU(N)_K$ index as a sum over those orbits, and find precise numerical agreement  with~\eqref{IWPSUall_intro} for small values of $N$ and $K$.
\footnote{The exact equality between the results obtained using the two distinct methods is left as a conjecture.} This also clarifies the reason why we have a mixed gravitational/one-form symmetry anomaly in some cases, precisely when the condensing anyon is fermionic instead of bosonic. An explicit treatment of anyon condensation in that `spin-TQFT' case appeared in~\cite{Hsin:2018vcg,Delmastro:2020dkz, Delmastro:2021xox}, and we will explore this further in the 3d $A$-model language in future work~\cite{CFKK-24-II}.

\sss{Including matter.}
While we focussed on the pure $\SU(N)_K$ CS theories in this work, the inclusion of chiral multiplets is straightforward. In general, the one-form symmetry of the $\t G$ theory is the discrete subgroup of the centre of $\t G$ that is preserved by the matter content. In section~\ref{sec:Including matter}, we briefly study $\U(1)_k$ theories with matter, as well as the $\SU(N)_k$ Chern--Simons theory with adjoint matter.  Models with an adjoint chiral multiplet are of particular interest since they provide equivariant generalisations of the Verlinde formula~\cite{Gukov:2015sna,Gukov:2021swm}, as has been studied in the past from various perspectives -- see~\eg~\cite{Okuda:2013fea, Nekrasov:2014xaa,Gukov:2015sna,Benini:2016hjo,Closset:2016arn,Kanno:2018qbn,Ueda:2019qhg,halpern2016equivariant}.

\medskip
\noindent
This paper is organised as follows. In section~\ref{sec:1-form and A-model}, we present the general results on one-form symmetries of 3d $\CN=2$ supersymmetric gauge theories on $\Sigma_g\times S^1$. In sections~\ref{sec:SU(Nc)K} and~\ref{sec:further checks SU(N)}, we study in much detail the 3d $\CN=2$ $\SU(N)_K$ Chern--Simons theory. In section~\ref{sec:Including matter}, we briefly discuss $\U(1)$ and $\SU(N)$ $\CN=2$ gauge theories with matter.  Section~\ref{sec:conclusion} presents our conclusions and outlook. Some useful review material and the more technical computations are relegated to several appendices.

\section{One-form symmetries in the 3d \texorpdfstring{$A$}{A}-model}\label{sec:1-form and A-model}
Consider a 3d $\CN=2$ supersymmetric field theory with a discrete one-form symmetry $\Gamma_{\rm 3d}^{(1)}$. In this work, we will specifically study 3d $\CN=2$ gauge theories, and the one-form symmetry will be (part of) the centre symmetry for some gauge group $G$. For instance, a gauge theory with gauge group $\SU(N)$ and matter fields transforming in the adjoint representation will have a one-form symmetry $\Gamma_{\rm 3d}^{(1)}= \Z_N$.

\subsection{Discrete symmetries in the 2d description}\label{sec:discrete sym 2d}
The 3d $A$-model (in the terminology of~\cite{Closset:2017zgf}) is a two-dimensional topologically-twisted $\CN=(2,2)$ supersymmetric field theory that captures all the information about the twisted chiral ring $\CR^{\rm 3d}$ of the 3d $\CN=2$ theory compactified on a circle, $S^1_A$, including twisted indices~\cite{Nekrasov:2014xaa, Benini:2015noa, Benini:2016hjo},  correlation functions of half-BPS lines wrapping $S^1_A$~\cite{Kapustin:2013hpk, Closset:2016arn}, and supersymmetric partition functions on Seifert 3-manifolds~\cite{Closset:2018ghr, Blau:2018cad, Closset:2019hyt, Fan:2020bov}.

Let us start by explaining how the 3d one-form symmetry manifests itself in the two-dimensional description. We start with the 3d theory on $\Sigma \times S^1_A$, with the topological $A$-twist along a two-manifold $\Sigma$. We will mostly consider $\Sigma=\Sigma_g$,  a compact closed Riemann surface of genus $g$. Here, translation along the circle $S^1_A$ is part of the 3d $A$-model supersymmetry algebra, and the corresponding momentum is the Kaluza-Klein (KK) charge~\cite{Closset:2017zgf}. In the 2d description on $\Sigma$, it manifests itself as a $\U(1)_{\rm KK}$ symmetry. In three dimensions,  the operators charged under $\Gamma^{(1)}_{\text{3d}}$ are line operators, which are acted on by topological line operators~\cite{Gaiotto:2014kfa}.%
\footnote{Hence, the invertible topological lines that implement the one-form symmetry are a subset of the set of all line operators of the 3d theory, and symmetry operators can themselves be charged, as we will review.}    The 3d one-form symmetry $\Gamma_{\rm 3d}^{(1)}\cong \Gamma$, for $\Gamma$ a finite abelian group, manifests itself as two distinct symmetries in 2d, namely a one-form and a zero-form symmetry:
\be\label{symms Gamma01}
\Gamma_{\rm 3d}^{(1)} \quad\qquad \longrightarrow\qquad\quad \Gamma^{(1)} \cong \Gamma~,  \qquad \Gamma^{(0)}\cong \Gamma~.
\ee
Here, the abelian zero-form symmetry $\Gamma^{(0)}$ action is generated by topological lines with support on one-cycles $\CC \subset \Sigma$, which act on local operators on $\Sigma$. These 2d local operators correspond to the 3d lines wrapping $S^1_A$. We will be particularly interested in such operators that also commute with the $A$-model supercharge -- these are the twisted chiral operators $\SL\in \CR^{\rm 3d}$. 
The 2d one-form symmetry $\Gamma^{(1)}$, on the other hand, is generated by topological point operators that arise as the 3d topological lines wrapping $S^1_A$. We shall denote the topological line operators that implement $\Gamma^{(0)}$ by:
\be
\CU^{\gamma}(\CC)~,
\ee
 and the topological point operators that implement $\Gamma^{(1)}$ by: 
 \be\label{Pigamma Gamma1}
 \Pi^{\gamma}(p) \equiv \CU^{\gamma}(S^1_A)~, 
\ee
 for every $\gamma \in \Gamma$.  These topological operators are depicted in figure~\ref{fig:0-1 form symmetry operators}. 
The support of $\Pi^{\gamma}$ at a point $p\in \Sigma$ will be kept implicit in the following. 
\begin{figure}[t]
    \centering
    \begin{subfigure}{0.48\textwidth}
    \centering
        \includegraphics[scale=1.5]{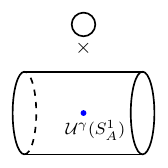}
        \caption{ }
        \label{fig:1-form-cylinder}
    \end{subfigure} %
    \begin{subfigure}{0.48\textwidth}
    \centering
    \includegraphics[scale=1.5]{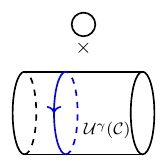}
    \caption{ }
    \label{fig:0-form-cylinder}
    \end{subfigure}
    \caption{Depiction of the $\Gamma^{(1)}$ and $\Gamma^{(0)}$ symmetry operators $\CU^\gamma(S_A^1)\equiv \Pi^\gamma$ and $\CU^\gamma(\CC)$, respectively, on $\Sigma\times S^1_A$. Here, $\CU^\gamma(S_A^1)$ wraps the $S^1_A$ factor in the 3d geometry, depicted here by a small circle, and it is thus a local operator on $\Sigma$. The topological line $\CU^\gamma(\CC)$ is supported on a cycle $\CC$ on $\Sigma$, in general. Here $\Sigma$ is the cylinder, which is the relevant configuration to discuss the Hilbert space $\Hilb_{S^1}$ of the $A$-model.}
    \label{fig:0-1 form symmetry operators} 
\end{figure}

\subsubsection{Hilbert spaces and Bethe vacua for \texorpdfstring{$G=\t G$}{G=G}} \label{sec:Hilbert spaces and Bethe vacua}
In the $A$-model description, we are mainly concerned with the supersymmetric ground states of the 2d $\CN=(2,2)$ effective field theory. These are indexed by the so-called Bethe vacua, $\h u$,
which should be viewed as states in a (cohomological) 2d topological quantum field theory (TQFT) that assigns the (ground-state) Hilbert space $\Hilb_{S^1}$ to the spatial circle:
\be
  \Hilb_{S^1} \cong {\rm Span}_\C\Big\{ \,|\h u\rangle \; \Big| \; \h u \in \SBE\Big\}~.
\ee 
 In order to discuss the Bethe vacua more explicitly, let us focus on a 3d $A$-model that arises from a 3d $\CN=2$ gauge theory with compact gauge group $G$. For now, let us assume that the fundamental group of $G=\t G$ is a free abelian group:
\be\label{pi1ofG}
 \pi_1(\t G)\cong \Z^{n_T}~,
\ee
for some non-negative integer $n_T$. 
More concretely,  $\t G$ is a product of simply-connected groups and of $\U(N)$ factors, so that each $\U(N)$ factor contributes a $\Z$ factor  in~\eqref{pi1ofG}, and $n_T$ is the number of topological $\mathfrak{u}(1)$ symmetries of the 3d gauge theory. Then the Bethe vacua are essentially given by the critical point of some effective twisted superpotential $\CW(u)$ in some Coulomb-branch description, where $u$ denote 2d scalars in abelianised vector multiplets.  We are interested in the one-form symmetry of this 3d gauge theory, which is an ordinary centre symmetry:
\be
\Gamma \subseteq Z(\t G)~.
\ee
Namely, $\Gamma$ is the maximal subgroup of the centre of $\t G$ which is preserved by the chiral multiplets and by the CS interactions.

\medskip
\noindent
The basic construction of the 3d $A$-model with a gauge group $\t G$ is summarised in appendix~\ref{app:Amodel}, which also spells out our conventions in more details. For our present purpose, let us note that:%
\footnote{See  appendix~\protect\ref{app: Lattices} for a definition of the relevant lattices.}
\be
u \in \mathfrak{t}_\C~,\qquad\qquad \m \in \Lambda_{\rm mw}^{\t G}\subset  \mathfrak{t}~.
\ee
Namely, the Coulomb-branch scalar $u$ is valued in the complexified Cartan algebra of $\t G$, $\mathfrak{t}_\C$, and the magnetic fluxes $\m$ are valued in the magnetic weight lattice of $\t G$, which is a discrete sublattice of the Lie algebra.
Large gauge transformations along $S^1_A$ lead to the identifications:
\be\label{u to u plus m}
u \sim u+ \m~, \qquad\qquad \forall \m \in  \Lambda_{\rm mw}^{\t G}~,
\ee
hence the classical Coulomb branch is given by $ \mathfrak{t}_\C/ \Lambda_{\rm mw}^{\t G} \cong (\C^\ast)^{\text{rank}(\t G)}$. We pick a basis $\{e^a\}$ of $\mathfrak{t}$ such that $u=u_a e^a$ and $u_a\sim u_a+n_a$, $n_a\in \Z$, under any large gauge transformation~\eqref{u to u plus m}. 
The {\it gauge flux operators} are defined in terms of the effective twisted superpotential of the gauge theory on $\R^2\times S^1$:
\be
\Pi_a(u, \nu) \equiv \exp\left(2\pi i {\d \CW(u, \nu) \ov \d u_a}\right)~, \qquad a= 1~, \cdots~, \text{rank}(\t G)~,
\ee
with our conventions for $u_a$ as outlined above and in appendix~\ref{app:Amodel}. 
Then, the Bethe vacua correspond to elements of the set of allowable Bethe solutions: 
\begin{equation}\label{bethe-vacua-simply-connected}
    \SBE \equiv \left\{ \hat{u}\in \mathfrak{t}/\Lambda_{\rm mw}^{\t G} , \Big | \; \Pi_{a}(\hat{u}, \nu) = 1~,\forall a\quad \text{and}\quad w\cdot \hat{u}\neq \hat{u}~, \forall w\in W_{\t G} \right\}/{W_{\t G}}~,
\end{equation}
where $W_{\t G}$ is the Weyl group of $\t G$.  
Here, recall that we need to exclude putative solutions to the Bethe equations, $\{\Pi_a=1~, \forall a\}$, which are not acted on freely by $W_{\t G}$. These Bethe equations can be written more covariantly as:
\be\label{Bethe eq any m}
\Pi(u,\nu)^\m \equiv  \exp\left(2\pi i \m  {\d \CW \ov \d u}\right) =1~, \qquad \forall \m\in \Lambda_{\rm mw}^{\t G}~.
\ee
We denote the Bethe vacua by $\h u$, corresponding to the allowed solutions to the Bethe equations modulo the action of the Weyl group.

\medskip
\noindent
Let us now discuss how the symmetries~\eqref{symms Gamma01} are realised in the $A$-model description.

\medskip
\noindent
{\bf Action of $\Gamma^{(1)}$ on $\Hilb_{S^1}$.}  The Bethe vacua diagonalise the topological operator $\CU^{\gamma}(S^1_A)$ defined in~\eqref{Pigamma Gamma1}. Indeed, the insertion of the 3d topological line along $S^1_A$ is equivalent to the insertion of a background flux $\gamma$ along $\Sigma$ -- that is,  a non-trivial background gauge field for $\Gamma^{(1)}$ corresponds to a non-trivial $\t G/\Gamma$ bundle over $\Sigma$ which cannot be lifted to a $\t G$ bundle. In the $A$-model description, this operator is given explicitly by a flux operator:
\be\label{def Piu gamma}
\Pi(u,\nu)^\gamma \equiv  \exp\left(2\pi i \gamma  {\d \CW \ov \d u}\right)~, \qquad\quad \gamma\in \Lambda_{\rm mw}^{\t G/\Gamma} \subseteq \Lambda_{\rm mw}^{\Fg}~,
\ee
where $\gamma$ corresponds to a magnetic flux valued in the larger GNO lattice $\Lambda_{\rm mw}^{\t G/\Gamma}$, with:
\be
 \Lambda_{\rm mw}^{\t G/\Gamma}  \supset \Lambda_{\rm mw}^{\t G}~, \qquad \quad
\Gamma \cong {\Lambda_{\rm mw}^{\t G/\Gamma} / \Lambda_{\rm mw}^{\t G}}~.
\ee
See also appendix~\ref{app: Lattices} for a definition of these lattices.  
On Bethe vacua, the Bethe equations~\eqref{Bethe eq any m} ensure that:
\be
\Pi(\h u)^{\gamma+\m}=\Pi(\h u)^{\gamma}~,
\qquad \forall \m\in \Lambda_{\rm mw}^{\t G}~.
\ee
The one-form symmetry operators  then act on Bethe vacua as:
\be\label{Pigamma on uhat}
\Pi^\gamma |\h u\rangle = \Pi(\h u)^\gamma  |\h u\rangle~, \qquad \quad \Pi(\h u)^\gamma\in \C^\ast~,
\ee
at fixed flavour parameters, for $\gamma\in \Gamma^{(1)}$. In fact, since we also know that, due to the Bethe equations, $\Pi(\h u)^{\gamma n}=1$ for some positive integer $n\leq |\Gamma|$,  the one-form symmetry acts on Bethe vacua by multiplication by a root of unity. This is in agreement with the general discussion of 2d TQFTs in~\cite{Gukov:2021swm}.

It is important to note that the definition for $\Pi^\gamma$ given in~\eqref{def Piu gamma} is ambiguous, because the twisted superpotential suffers from the ambiguity~\cite{Closset:2017zgf}:
\be
\CW \rightarrow \CW + \rho(u)~, \qquad\quad \rho\in \Lambda^{\t G}_{\text{w}}~.
\ee
Any such linear shift corresponds to multiplying the flux operator~\eqref{def Piu gamma} by a phase:
\be\label{flux_ambiguity}
\Pi^\gamma \rightarrow  \chi_\rho(\gamma)\, \Pi^\gamma~, \qquad \chi_\rho \equiv e^{2\pi i \rho} \in \h \Gamma^{(1)}~,
\ee
where $\h \Gamma^{(1)}$ denotes the Pontryagin dual:
\be\label{Gamma1 hat def}
\h \Gamma^{(1)} \equiv {\rm Hom}(\Gamma^{(1)}, \U(1))~.
\ee
We will discuss how to fix this ambiguity in section~\ref{sec:bckgrd}.

\medskip
\noindent
The action of the 1-form symmetry splits the Hilbert space into sectors: 
\be\label{decomposition HFS1}
\Hilb_{S^1}= \bigoplus_{\chi \in \h \Gamma^{(1)}} \Hilb_{S^1}^{\chi}~, \qquad \qquad  \Hilb_{S^1}^{\chi}\equiv {\rm Span}_\C\Big\{ \,|\h u\rangle \; \Big| \; \h u \in \SBE^\chi\Big\}~,
\ee
where each sector corresponds to the Bethe vacua that return a specific phase $\chi(\gamma)$ for $\Pi^\gamma$ in~\eqref{Pigamma on uhat}. More precisely, for each $\chi\in \h\Gamma^{(1)}$, we define:
\be\label{chi-vacua}
\SBE^\chi \equiv \Big\{  \h u \in \SBE \; \Big| \; 
 \Pi(\h u, \nu)^\gamma = \chi(\gamma)~, \; \forall \gamma\in \Gamma^{(1)} \Big\}~,
\ee
and we then have that $\SBE=\oplus_\chi \SBE^\chi$.
 This decomposition of the Bethe vacua will be useful below. 
 Let us also note that the existence of a 1-form symmetry in a 2d QFT always implies a so-called decomposition~\cite{Hellerman:2006zs, Sharpe:2022ene} of the theory into disjoint sectors.

\medskip
\noindent
{\bf Action of $\Gamma^{(0)}$ on $\Hilb_{S^1}$.}   Let $\CU^\gamma(\CC)$ be a topological line operator wrapping the cylinder, as shown in figure~\ref{fig:0-1 form symmetry operators}. To understand its action on the Bethe vacua, recall that $x_a \equiv e^{2\pi i u_a}$ can be seen as complexified holonomies of a maximal torus $\prod_a \U(1)_a \subset \t G$ along $S^1_A$, and that any Wilson loop in the representation $\FR$ of $\t G$ is represented in the $A$-model by:
\be
W_\FR = \sum_{\rho \in \FR} e^{2\pi i \rho(u)}~.
\ee
 Then, the insertion of $\CU^\gamma(\CC)$ acts on Wilson loops wrapping $S^1_A$ as the 3d centre symmetry, which precisely shifts $u$ to $u+\gamma$, for $\gamma\in \Lambda_{\rm mw}^{\t G/\Gamma}$ -- this only depends on $\gamma\in \Gamma$ because of~\eqref{u to u plus m}. The action:
 \be\label{0-sym-op-action}
 \CU^\gamma(\CC)   \; :\; \mathfrak{t}_\C\rightarrow \mathfrak{t}_\C~:\; u\mapsto u+\gamma
 \ee
 is a symmetry of the Bethe equations for any $\gamma\in \Gamma^{(0)}$, by assumption (since $\Gamma$ is precisely the subgroup of  $Z(\t G)$ that is preserved by the matter content and by the CS levels).
 Hence we have:
\be\label{Ugamma on Bethevac}
\CU^\gamma(\CC)  |\h u\rangle =   |\h u+\gamma \rangle~. 
\ee
Therefore, the one-form symmetry element $\gamma \in \Gamma^{(0)}$  acts as a permutation of the Bethe vacua, as expected on general grounds~\cite{Gukov:2021swm}.

\medskip
\noindent
{\bf Twisted sector Hilbert spaces $\Hilb_{S^1}^{(\delta)}$.} The twisted sector for any $\delta\in \Gamma^{(0)}$ corresponds to adding a topological line along the time direction on the cylinder $\Sigma\cong   \R_\tau \times S^1$. The twisted-sector ground states are Bethe vacua which
 satisfy the additional twist condition:
\be
u+ \delta = w\cdot u~,
\ee
for some $w\in W_{\t G}$. These states, denoted by $|\h u; \delta\rangle$,   form a basis for the twisted Hilbert space:
\be\label{twisted-sector}
  \Hilb_{S^1}^{(\delta)} \cong {\rm Span}_\C\Big\{ \,|\h u; \delta\rangle \; \Big| \; \h u \in \SBE~, \h u+ \delta \sim \h  u \Big\}~.
\ee 
Of course, we have $\Hilb_{S^1}^{(0)}= \Hilb_{S^1}$ for $\delta=0$ the zero element. For future reference, let us define the set of Bethe solutions that enter in~\eqref{twisted-sector} as:
\be\label{gamma0 vacs}
\SBE^{(\gamma)} \equiv \Big\{  \h u \in \SBE \; \Big| \; 
 \h u + \gamma \sim \h u \Big\}~,\qquad \gamma\in \Gamma^{(0)}~,
\ee
where $\h u\sim \h u'$ means that the respective Weyl-group orbits of the solutions $\h u$ and $\h u'$ are equal. Note that~\eqref{gamma0 vacs} should not be confused with~\eqref{chi-vacua}.

\begin{figure}[t]
    \centering
    \begin{subfigure}{0.48\textwidth}
    \centering
        \includegraphics[scale=1.5]{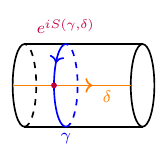}
        \caption{Intersection of two $\Gamma^{(0)}$ lines. \label{fig: cylinder gamma delta intersect}}
        \label{fig:SPT1}
    \end{subfigure} %
    \begin{subfigure}{0.48\textwidth}
    \centering
    \includegraphics[scale=1.5]{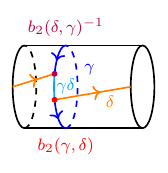}
    \caption{Resolution into trivalent vertices.}
    \label{fig:SPT2}
    \end{subfigure}
    \caption{The 2d SPT phase arising from the intersection of the topological lines $\CU^\gamma$ and $\CU^\delta$ on the cylinder, see~\eqref{phases SPT}. The phase is obtained by resolving the intersection into two trivalent junctions, where we assign the phase $b_2(\gamma, \delta)$ to each junction (with some customary orientation). That $b_2$ is a group cohomology class follows from the associativity of fusion and from gauge invariance.}
    \label{fig:SPT} 
\end{figure}

The discrete group $\Gamma^{(0)}$ acts on the set of all Bethe vacua, $\SBE$, as in~\eqref{Ugamma on Bethevac}. Let ${\rm Orb}(\h u) \subseteq \SBE$ denote the orbit of the vacuum $\h u$ under the action of $\Gamma^{(0)}$, and let ${\rm Stab}(\h u)\subseteq \Gamma^{(0)}$ denote the stabiliser of $\h u$ by $\Gamma^{(0)}$. The orbit-stabiliser theorem gives us the cardinality of ${\rm Stab}(\h u)$:
\be\label{orb-stab-thm}
\left|{\rm Stab}(\h u) \right| ={\left| \Gamma^{(0)} \right|\ov \left|{\rm Orb}(\h u) \right|}~,
\ee
which is the number of twisted sectors in which $\h u$ appears:
\be
|\h u; \delta\rangle \in   \Hilb_{S^1}^{(\delta)}~, \qquad \forall \delta \in {\rm Stab}(\h u)~.
\ee
Finally, let us discuss the action of $\Gamma^{(0)}$ on the twisted sector $\Hilb_{S^1}^{(\delta)}$. In general, the action of $\CU^\gamma(\CC)$ on the twisted-sector states may take the form: 
\be\label{Gamma0 action on twisted sector}
\CU^\gamma(\CC)  |\h u; \delta \rangle =  e^{i S_{\rm SPT}(\gamma, \delta)}\,  |\h u+\gamma; \delta \rangle~,
\ee
where $S_{\rm SPT}$ is a 2d SPT phase for $\Gamma^{(0)}$ (see {\it e.g.} \cite{Bhardwaj:2023kri, Dumitrescu:2023hbe} for recent discussions). More concretely, the phase this introduces in~\eqref{Gamma0 action on twisted sector} is given in terms of a $\U(1)$-valued group 2-cocycle $b_2$, as follows:
\be\label{phases SPT}
e^{i S_{\rm SPT}(\gamma, \delta)} = {b_2(\gamma,\delta)\ov b_2(\delta, \gamma)}~,\qquad \qquad [b_2] \in H^2(\Gamma^{(0)}, \U(1))~.
\ee
In the two-dimensional description, this phase arises because of the intersection of the topological lines $\CU^\gamma$ and $\CU^\delta$ on the cylinder, as shown in figure~\ref{fig:SPT}; see {\it e.g.}~\cite{Bhardwaj:2022lsg} for a recent discussion. 
Note that, in the 3d description wherein these lines arise as topological operators for a 3d one-form symmetry, they could be separated along $S^1_A$. 
In this work, we will only consider examples where the phase~\eqref{phases SPT} is necessarily trivial, $e^{i S_{\rm SPT}}=1$; indeed, we will briefly focus on $\Gamma= \Z_N$, in which case $H^2(\Z_N, \U(1))\cong 1$
(see also~\cite{Gaiotto:2020iye}).  We hope to discuss cases where this 2d SPT phase can be non-trivial in future work.

\subsubsection{'t Hooft anomalies and vacuum structure}\label{sec:tHooftanomalies_vacuum}
 The one-form symmetry $\Gamma_{\rm 3d}^{(1)}$ of the 3d theory can have a non-trivial 't Hooft anomaly captured by a four-dimensional anomaly theory:
\be\label{PB_anomaly}
S^{\rm anom}_{\rm 4d}[B] = 2\pi \int_{\mathfrak{M}_4} \CP_\mathfrak{a}(B)~,
\ee
where $\CP_\mathfrak{a}(B)$ is an anomaly-dependent modification of the Pontryagin square $\CP(B)$ of the one-form background gauge field in 3d extended onto the four-manifold $\mathfrak{M}_4$ with boundary $\Sigma\times S^1$,  $B \in H^2(\mathfrak{M}_4, \Gamma)$. 
 The construction of $\CP(B)$ is reviewed in appendix~\ref{app:tHooftanomlay}. More concretely, we pick a set of generator for $\Gamma$ such that $\Gamma\cong \bigoplus_i \Z_{N_i}$ and $B$ decomposes as $B= \sum_i B_i$ with $B_i \in H^2(\mathfrak{M}_4, \Z_i)$,  in which case the anomaly theory~\eqref{PB_anomaly} takes the concrete form~\cite{Hsin:2018vcg}:
 \be\label{PB_anomaly with aij}
S^{\rm anom}_{\rm 4d}[B] =2\pi  \sum_i {\mathfrak{a}_{ii}\ov 2 N_i}\int_{\mathfrak{M}_4} \CP(B_i)+2\pi  \sum_{i< j} { \mathfrak{a}_{ij}\ov {\rm gcd}(N_i, N_j)}\int_{\mathfrak{M}_4}  B_i \cup B_j~,
 \ee
where the anomaly is captured by the symmetric matrix $\mathfrak{a}_{ij}$ with integer coefficients:
 \be\label{aij periodicity}
\mathfrak{a}_{ij}\in \Z_{{\rm gcd}(N_i, N_j)}~.
 \ee
 Such an anomaly arises because some of the topological lines can themselves be charged under  $\Gamma^{(1)}_{\rm 3d}$~\cite{Gaiotto:2014kfa}. Note also that we consider the 3d $\CN=2$ theory on 3-manifolds with an explicit choice of spin structure, which then extends to a choice of spin structure on $\mathfrak{M}_4$ -- this is important for~\eqref{PB_anomaly with aij} to be well-defined with the periodicities~\eqref{aij periodicity}~\cite{Hsin:2018vcg}. In the two-dimensional description on $\Sigma$, the 't Hooft anomaly is realised as a mixed anomaly between $\Gamma^{(1)}$ and $\Gamma^{(0)}$, corresponding to the 3d anomaly theory on the three-manifold $\mathfrak{M}_3$ with boundary $\Sigma$:
\be\label{BC mixed anomaly}
S^{\rm anom}_{\rm 3d}[B,C] =2\pi \sum_{i, j} { \mathfrak{a}_{ij}\ov {\rm gcd}(N_i, N_j)} \int_{\mathfrak{M}_3} C_i\cup B_j~,
\ee
where now $C\in H^1(\mathfrak{M}_3, \Gamma)$, $B\in H^2(\mathfrak{M}_3, \Gamma)$, and we have expanded $C=\sum_i C_i$ and $B= \sum_i B_i$ as above. In appendix~\ref{app:tHooftanomlay}, we obtain this mixed 't Hooft anomaly by considering the continuum version of the anomaly theory~\cite{Kapustin:2014gua, Hsin:2018vcg} and dimensionally reducing along $S^1$ on $\mathfrak{M}_4= \mathfrak{M}_3\times S^1$, with $\partial\mathfrak{M}_3 = \Sigma_g$.

In two space-time dimensions, the one-form symmetry $\Gamma^{(1)}$ can never be spontaneously broken~\cite{Gaiotto:2014kfa}, and it is therefore preserved in each vacuum. On the other hand, the zero-form symmetry $\Gamma^{(0)}$ in the vacuum $\h u$ is spontaneously broken to the subgroup ${\rm Stab}(\h u)\subseteq \Gamma^{(0)}$. The 't Hooft anomaly must be matched by the 2d low-energy description, which constrains the structure of the $\Gamma^{(0)}$-orbits of Bethe vacua -- we will discuss this momentarily.

The mixed anomaly~\eqref{BC mixed anomaly} arises because the one-form symmetry operators $\Pi^{\gamma_{(1)}}$ can be charged under $\Gamma^{(0)}$, and vice versa. (In the 3d description, two topological lines that implement $\Gamma_{\rm 3d}^{(1)}$ can link non-trivially.) At the level of the Hilbert space $\Hilb_{S^1}$, the $\Gamma^{(0)}$ and $\Gamma^{(1)}$ charge operators need not commute. Instead, we have a projective representation of $\Gamma^{(0)}\times \Gamma^{(1)}$ on $\Hilb_{S^1}$, with the twisted commutation relations:
\be
\Pi^{\gamma_{(1)}} \CU^{\gamma_{(0)}}=e^{2\pi i \CA(\gamma_{(0)},\gamma_{(1)})} \, \CU^{\gamma_{(0)}}\Pi^{\gamma_{(1)}}~,
\ee
determined by the anomaly:
\be\label{CA def RZ}
\CA \, : \, \Gamma^{(0)}\times \Gamma^{(1)} \rightarrow \R/\Z~.
\ee
Given a choice of generators $\gamma_{(0), i}$ and $\gamma_{(1), j}$ for $\Gamma^{(0)}\cong \bigoplus_i \Z_{N_i}^{(0)}$ and of $\Gamma^{(1)}\cong \bigoplus_j \Z_{N_j}^{(j)}$, respectively, we can expand any group element as: 
\be
\gamma_{(0)}= \sum_i n_i \gamma_{(0), i}~, \qquad\quad  \gamma_{(1)}= \sum_j m_j \gamma_{(1), j}~,
\ee
for $n_i, m_i\in \Z_{N_i}$.  Then the anomaly~\eqref{CA def RZ} is related to the coefficients $\mathfrak{a}_{ij}$ appearing in the 3d anomaly theory~\eqref{BC mixed anomaly} according to:
\be
\CA(\gamma_{(0)},\gamma_{(1)})=  \sum_{i,j} {n_i m_j \mathfrak{a}_{ij}\ov {\rm gcd}(N_i, N_j)} \mod 1~.
\ee
 For a given 3d $\CN=2$ gauge theory, this anomaly is easily computed using the $A$-model description. Indeed, from~\eqref{Pigamma on uhat} and~\eqref{Ugamma on Bethevac}, we find:%
\footnote{Here $[a,b]$ denotes the multiplicative commutator, $[a,b]=  a^{-1} b^{-1}a b$.}
\be
   [\Pi^{\gamma_{(1)}}, \CU^{\gamma_{(0)}}]|\h u\rangle= {\Pi(\h u+\gamma_{(0)})^{\gamma_{(1)}}\ov \Pi(\h u)^{\gamma_{(1)}}} \left|\h u\right\rangle= e^{2\pi i \CA(\gamma_{(0)},\gamma_{(1)})} \left|\h u\right\rangle~,
\ee
and therefore:
\be\label{anomlay-factor}
\CA(\gamma_{(0)},\gamma_{(1)}) = \gamma_{(1)}\left({\d \CW(u + \gamma_{(0)})\ov \d u}-{\d \CW(u)\ov \d u}\right) \mod 1~,
\ee
which in turn is independent of $u$ and only depends on the bare CS levels of the UV theory. We will discuss this point further in section~\ref{sec: SU(N) with adj}, based on the explicit form of the effective twisted superpotential which we review in appendix~\ref{app:Amodel}.

Given the anomaly~\eqref{CA def RZ}, we may also define a homomorphism $\phi_\CA$ from $\Gamma^{(0)}$ to the Pontryagin dual~\eqref{Gamma1 hat def} of $\Gamma^{(1)}$, given by:
\be\label{phiA def}
\phi_\CA \, : \, \Gamma^{(0)}\rightarrow \h \Gamma^{(1)}\, : \, \gamma \mapsto e^{2\pi i \CA(\gamma, -)}~.
\ee
Now, consider coupling the $A$-model to background gauge fields $(C,B)$ for $\Gamma^{(0)}\times \Gamma^{(1)}$, which is equivalent to inserting the topological symmetry operators on $\Sigma$. The action of $\Gamma^{(0)}$ on the 2d vacua comes with a gauge transformation $C\rightarrow C+\delta \lambda$, which generates the anomalous term $\lambda \mathfrak{a} \int_{\Sigma} B$, schematically. This anomaly can only be matched if distinct vacua in a given $\Gamma^{(0)}$ orbit are stacked with distinct SPT phases for $\Gamma^{(1)}$. Fixing some vacuum $\h u$ to have the trivial phase, the other vacua $\h u+\gamma$ in ${\rm Orb}(\h u)$ will have the SPT phases:
\be\label{sptphase}
e^{i S_{\rm SPT}^{(\gamma)}(B)} = \phi_\CA(\gamma)~.
\ee
Note that which vacua is set to be the trivial SPT phase is immaterial, because we can always shift all vacua by a common SPT phase by adding the corresponding counterterm in the UV. Thus, we explicitly see how the anomaly constrains the vacuum structure.

The homomorphism $\phi_\CA$ defines a subgroup $\ker{\phi_\CA}\leq \Gamma^{(0)}$. The associated quotient group:
\be
\Xi^{(0)} \equiv {\Gamma^{(0)} / \ker{\phi_\CA}} 
\ee
 labels the distinct SPT phases. In general, $\Gamma^{(0)}$ is a direct sum of cyclic groups, and the distinct SPT phases may not correspond to distinct Bethe vacua in each orbit, which is due to the mixing of the various cyclic subgroups of $\Gamma^{(0)}$. Nevertheless, we can classify all possible orbits. Let $\hat u$ be a vacuum with trivial SPT phase, and let $\gamma\in \Gamma^{(0)}$ be in the stabiliser group $\stab(\hat u)$, which  means in particular that $\hat u+\gamma\sim \hat u$. The vacuum $\hat u +\gamma$ has an SPT phase \eqref{sptphase}, but since we started with a vacuum $\hat u$ that has a trivial phase, so does $\hat u+\gamma$. Therefore, $\phi_\CA(\gamma)=1$ and thus $\gamma\in \ker{\phi_\CA}$. This argument is valid for any element $\gamma\in \stab(\hat u)$, and thus every stabiliser of a given Bethe vacuum $\hat u$ is a subgroup of the kernel:
\begin{equation}\label{stabu_kerphi}
    \stab(\hat u)\leq \ker{\phi_\CA}~.
\end{equation}
This restricts the allowed orbit dimensions according to the orbit-stabiliser theorem~\eqref{orb-stab-thm}. Given a subgroup $H\leq \ker{\phi_\CA}$, there is a corresponding orbit of length $\Gamma^{(0)}/|H|$. Note that any such orbit dimension is an integer multiple of $|\Xi^{(0)}|$.

The extremal cases are the non-anomalous and the maximally anomalous case. The latter case corresponds to  $\ker{\phi_\CA}=0$, so that $ \stab(\hat u)$ is trivial for every vacuum $\hat u$. Hence all orbits must have maximal dimension $|\Gamma^{(0)}|$. 
At the other extreme, if the anomaly $\CA$ vanishes identically, then $\ker{\phi_\CA}=\Gamma^{(0)}$, and $\stab(\hat u)$ can be any of the subgroups of $\Gamma^{(0)}$. For the trivial stabiliser, this results in orbits of length $|\Gamma^{(0)}|$. Meanwhile, for $\stab(\hat u)= \Gamma^{(0)}$ we find orbits of length 1 -- these are of course the fixed points under $\Gamma^{(0)}$, which can only occur in a theory with trivial 't Hooft anomaly. 
The generic case occurs when the full $\Gamma^{(0)}$ is anomalous but contains some non-anomalous subgroups.%
\footnote{That is, non-anomalous subgroups $\t \Gamma^{(0)}\subset \Gamma^{(0)}$ with respect to $\Gamma^{(1)}$, \ie such that the mixed 't Hooft anomaly for $\t \Gamma^{(0)}$-$\Gamma^{(1)}$ vanishes. We could more generally consider non-anomalous subgroups $\t \Gamma^{(p)}\subset \Gamma^{(p)}$, $p=0,1$ (with 
$\t \Gamma^{(0)}$ and $\t \Gamma^{(1)}$ not necessarily isomorphic), such that their mixed anomaly vanishes, by an obvious generalisation of the discussion above.}
Since $ \ker{\phi_\CA}$ is a subgroup of $\Gamma^{(0)}$, the allowed orbits are therefore those associated with the non-anomalous stabilisers.

For concreteness, let us consider the parameterisation $\Gamma\cong \bigoplus_i \Z_{N_i}$ as above. Then we have the SPT phases:
\be
S_{\rm SPT}^{(\gamma)}(B_{\gamma_{(1)}}) = 2\pi \left( \sum_{i,j} n_i {\rm A}_{ij} m_j\quad {\rm mod}\, 1\right)~,\qquad {\rm A}_{ij}\equiv  {\mathfrak{a}_{ij}\ov {\rm gcd}(N_i, N_j)}~,
\ee
for $\gamma=\sum_i n_i \gamma_{(0),i}\in \Gamma^{(0)}$ and a background gauge field $B_{\gamma_{(1)}}$ corresponding to $\gamma_{(1)}= \sum_j m_j \gamma_{(1), j}\in \Gamma^{(1)}$. Thus we have:
\be
 \ker{\phi_\CA} \cong \left\{ \gamma=\sum_i n_i \gamma_{(0),i}\, \Bigg| \, \sum_i n_i  {\rm A}_{ij} \in \Z\right\}~.
\ee
This allows for explicit computations in any given example.

\medskip
\noindent
{\bf The case $\Gamma= \Z_N$.}  Let us further discuss the special case of a cyclic group, $\Gamma= \Z_N$, on which we shall focus in the following sections. Denoting by the integers $n\in\Z_N^{(0)}$ and $m\in\Z_N^{(1)}$ the elements $\gamma_{(0)}$ and $\gamma_{(1)}$, respectively, we have the anomaly:
\be\label{CAnm ZN}
\CA(\gamma_{(0)}, \gamma_{(1)}) = {\mathfrak{a} n m\ov N}\; {\rm mod}\, 1~, \qquad \qquad \mathfrak{a}\in \Z_{N}~,
\ee
determined by the integer $\mathfrak{a}$ (mod $N$). We also have the $\Gamma^{(1)}$ SPT phases:
\be
e^{i S_{\rm SPT}^{(n)}(B)} = \phi_\CA(n) = e^{2\pi i {\mathfrak{a} n\ov N} \int_\Sigma B}=  e^{2\pi i {\mathfrak{a} n m\ov N}}~,
\ee
for $B$ corresponding to $m\in \Z_N^{(1)}$.  We then find that $ \ker{\phi_\CA} \cong \Z_{{\rm gcd}(\mathfrak{a}, N)}$ and therefore:
\be
\Xi^{(0)} \cong \Z_{d(\mathfrak{a}, N)}~, \qquad \qquad d(\mathfrak{a}, N)\equiv {N\ov {\rm gcd}(\mathfrak{a}, N)}~.
\ee
The $\Z_N^{(0)}$-orbits spanned by the Bethe vacua therefore have dimensions that are integer multiples of $d(\mathfrak{a}, N)$. 
Due to~\eqref{stabu_kerphi}, the orders of the stabilisers and hence the orbits will be divisors of $N$. In the non-anomalous case, the orbit dimensions can be any divisor of $N$, while in the maximally anomalous case all orbits are of length $N$.
We will discuss this more in detail in section~\ref{sec:SU(Nc)K}.

\subsection{Background gauge fields and expectation values of topological operators}\label{sec:bckgrd}
Recall that the partition function of the $\t G$ gauge theory $\CT$ on $\Sigma_g\times S^1$ corresponds to the insertion of the handle-gluing operator $\CH$ in the $A$-model, so that:
\be\label{twistedindex}
Z_{\Sigma_g\times S^1}^{[\CT]} = \langle 1 \rangle_{\Sigma_g} = \sum_{\h u \in \SBE} \bra{\h u} \CH^{g-1} \ket{\h u} =  \sum_{\h u \in \SBE} \CH(\h u)^{g-1}~.
\ee
We are interested in computing the insertion of topological operators for $\Gamma^{(1)}\times \Gamma^{(0)}$ on $\Sigma_g$. This is equivalent to turning on background gauge fields for these discrete symmetries. 

\medskip
\noindent
{\bf Insertion of the 1-form symmetry operators.} 
Let us first consider the insertion of the local operator $\Pi^\gamma$ on $\Sigma_g$. We simply have:
\be
\langle \Pi^\gamma \rangle_{\Sigma_g} = \sum_{\h u \in \SBE} \bra{\h u}  \Pi^\gamma  \CH^{g-1}  \ket{\h u} =  \sum_{\h u \in \SBE} \Pi(\h u)^\gamma  \CH(\h u)^{g-1}~.
\ee
Using the decomposition~\eqref{decomposition HFS1}-\eqref{chi-vacua}, we can write this as:
\be\label{Pigamma insert gen}
\langle \Pi^\gamma \rangle_{\Sigma_g} =  \sum_{\chi\in \h \Gamma^{(1)}} \chi(\gamma) \sum_{\h u \in \SBE^\chi} \CH(\h u)^{g-1}~.
\ee
This insertion is equivalent to turning on a background gauge field $B_\gamma$,
\be
Z_{\Sigma_g\times S^1}^{[\CT]}(B_\gamma) =\langle \Pi^\gamma \rangle_{\Sigma_g}~, \qquad B_\gamma \in H^2(\Sigma_g, \Gamma^{(1)})~,
\ee
with:
\be
\int_{\Sigma_g} B_\gamma = \gamma~.
\ee
For later purposes, it will be useful to define the discrete $\theta$-angle and its pairing to $B_\gamma$:
\be\label{theta pairing def}
\vartheta(\gamma) \equiv e^{- i (\theta, B_\gamma)}~,
\qquad\quad (\theta, B_\gamma)\equiv \theta \int_{\Sigma_g} B_\gamma~, 
\qquad\quad  \vartheta \in \h \Gamma^{(1)}~.
\ee
Here, $\theta$ is the background gauge field for a $(-1)$-form symmetry, which will appear as a dual symmetry once we gauge $\Gamma^{(0)}$, and $(\theta, B)$ is the canonical pairing on $\Sigma_g$. For instance, for $\Gamma^{(1)}= \Z_N$, we have $\theta \in {2\pi \ov N}\Z_N$.

\medskip
\noindent
{\bf Insertion of the 0-form symmetry operators.} Next, we consider the insertion of 0-form symmetry operators along the Riemann surface $\Sigma_g$.  These topological line operators can wrap any of the generators of $H_1(\Sigma_g, \Z)\cong \Z^{2g}$, which we denote by $\mathcal{C}^i\;(i=1, \cdots, 2 g)$. Thus, an insertion is specified by a $\Gamma^{(0)}$-valued 1-cycle $\boldsymbol{\gamma}$ corresponding to:
\be\label{def bold gamma}
[\boldsymbol{\gamma}]= \sum_{i=1}^{2 g} \gamma_i\,  [\CC_i] \in H_1(\Sigma_g, \Gamma^{(0)})~,
\ee
which is  determined by the $2g$ group elements $\gamma_i\in \Gamma^{(0)}$.  We will also use the notation $\boldsymbol{\gamma}= (\gamma_i) \in \Gamma^{2g}$, by a slight abuse of notation. We will then denote the topological operator as:
\be\label{def CU gamma gen}
\CU^{\boldsymbol{\gamma}}\equiv  \prod_{i=1}^{2g} \CU^{\gamma_i}(\CC_i) ~.
\ee
This insertion is equivalent to turning on a background gauge field $C_{\boldsymbol{\gamma}}$ for $\Gamma^{(0)}$, with:
\be
[C_{\boldsymbol{\gamma}}] \equiv {\rm PD}[\boldsymbol{\gamma}] \in H^1(\Sigma_g, \Gamma^{(0)})~.
\ee
Here ${\rm PD}[\boldsymbol{\gamma}]$ denotes the cohomology class Poincar\'e dual to~\eqref{def bold gamma}. For future reference, let us also choose a symplectic basis of $a$- and $b$-cycles as shown in figure \ref{fig:genus-g-RS-3}, with the standard intersection pairing:
\be\label{a b basis Sigma g def}
[a_k]~, [b_l] \in H_1(\Sigma_g, \Z)~,  \qquad k,l=1, \cdots, g~, \qquad [a_k]\cdot [b_l] = - [b_l]\cdot [a_k] = \delta_{kl}~, 
\ee
and $[a_k]\cdot [a_l]= [b_k]\cdot [b_l]=0$. In this basis, we have $\boldsymbol{\gamma}\equiv (\gamma_{a, k}, \gamma_{b, l})$.

\begin{figure}[t]
    \centering
    \includegraphics[scale=1]{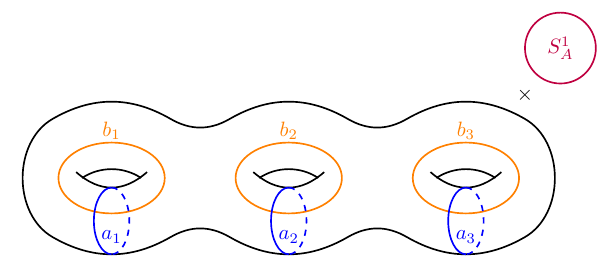}
    \caption{$a$- and $b$-cycles on the Riemann surface $\Sigma_g$, here depicted for genus $g=3$.}
    \label{fig:genus-g-RS-3}
\end{figure}

\begin{figure}[t]
\begin{minipage}{0.49\linewidth}
    \begin{subfigure}[t]{\linewidth}
        \includegraphics[scale=1.2]{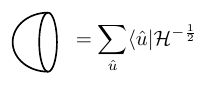}
   \end{subfigure}
    \begin{subfigure}[t]{0.49\linewidth}
      \includegraphics[scale=1.2]{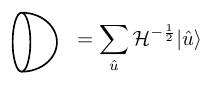}
    \end{subfigure}
\end{minipage}
\hfill
\begin{minipage}{0.49\linewidth}
    \begin{subfigure}[t]{\linewidth}
\includegraphics[scale=1.2]{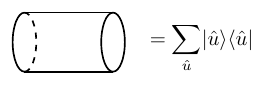}
   \end{subfigure}
    \begin{subfigure}[t]{0.49\linewidth}
      \includegraphics[scale=1.2]{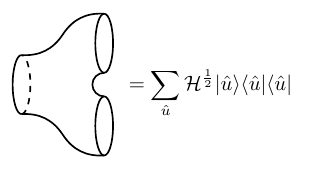}
    \end{subfigure}
\end{minipage}
\caption{Operators in 2d TQFT corresponding to the cap, cylinder, and pair of pants. We can think as $\CH^{\half}$ as a formal square root of the handle-gluing operator $\CH$, keeping in mind that we always obtain integer powers of $\CH$ when computing observables on a closed $\Sigma$.}
\label{fig:pop}
\end{figure}

As reviewed in appendix~\ref{app:Amodel}, the 3d $A$-model is a 2d TQFT. From this perspective, the twisted index~\eqref{twistedindex} can be computed by basic surgery operations on the Riemann surface.  The three basic ingredients are the cap, the cylinder and the pair of pants, to which the TQFT functor assigns states as summarised in figure~\ref{fig:pop}. In the $\t G$ theory with the $\Gamma^{(0)}$ symmetry, this TQFT prescription can be extended to compute the correlation functions of the topological line operators~\eqref{def CU gamma gen}. The modified dictionary is shown in figure~\ref{fig:pop_lines}.

Let us first consider the case $g=1$, the torus,  with the notation $a_1= \CC$ and $b_1=\t\CC$. Here $\CC$ is considered as the spatial direction and $\t\CC$ is the Euclidean time direction. Setting $\boldsymbol{\gamma} = (\gamma_{a,1}, \gamma_{b,1})= (\gamma, \delta)$, we have the general insertion:
\be
Z_{T^2\times S^1}^{[\CT]}(C_{\boldsymbol{\gamma}})= \left\langle \CU^\gamma(\CC) \;\CU^\delta(\t\CC) \right\rangle_{T^2}~,
\ee 
which is of the form considered in figure~\ref{fig: cylinder gamma delta intersect}. 
We thus have the trace over the $\delta$-twisted sector:
\be\label{torus_C_gamma}
Z_{T^2\times S^1}^{[\CT]}(C_{\boldsymbol{\gamma}}) = \sum_{\h u\in \SBE^{(\delta)}}  \bra{\h u; \delta} \CU^{\gamma}(\CC) \ket{\h u; \delta} =   \sum_{\h u\in \SBE^{(\delta)}}  \braket{\h u; \delta}{\h u+\gamma; \delta} = \sum_{\h u\in \SBE^{(\gamma, \delta)}} 1= \left|\SBE^{(\gamma, \delta)}\right|~,
\ee
with $\SBE^{(\delta)}$ defined as in~\eqref{gamma0 vacs}, and we also defined $\SBE^{(\gamma,\delta)}\equiv \SBE^{(\gamma)}\cap \SBE^{(\delta)}$. This result is modular invariant, as expected for a 2d TQFT on $\Sigma_g$.  

\begin{figure}[t]
 \centering
    \begin{subfigure}{0.4\textwidth} \centering
\includegraphics[scale=1.2]{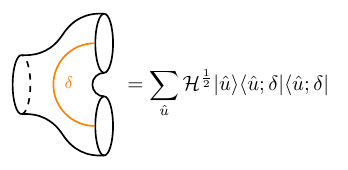}
    \end{subfigure}
    
    \begin{subfigure}{0.45\textwidth}
    \centering
    \includegraphics[scale=1.2]{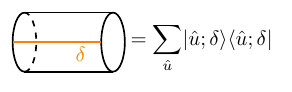}
    \end{subfigure}
    \begin{subfigure}{0.5\textwidth} \centering
\includegraphics[scale=1.2]{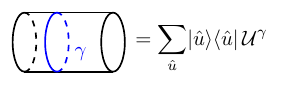}
    \end{subfigure}
    \caption{Operators in the TQFT corresponding to the cylinder and pair of pants with topological lines inserted.  We could also consider the insertion of a line at the boundary of a cap, but this is topologically trivial; in this formalism, this follows from the fact that $\CH(\h u+\gamma)= \CH(\h u)$, so that  formally $\sum_{\h u} \CH^{-\half} \CU^\gamma \ket{\h u}= \sum_{\h u} \CH^{-\half}\ket{\h u}$.}
    \label{fig:pop_lines}
\end{figure}

The generalisation to any $\Sigma_g$ with the operator~\eqref{def CU gamma gen} inserted can be
derived using the 2d TQFT perspective sketched above~\cite{willett,Gukov:2021swm}.
Following the decomposition of the Riemann surface $\Sigma_g$ with the insertions of $\Gamma^{(0)}$ symmetry operators as shown in figure~\ref{fig:torusBC}, we find:
\be\label{gen 0form insertion corr}
Z_{\Sigma_g \times S^1}^{[\CT]}(C_{\boldsymbol{\gamma}}) =
 \left\langle \CU^{\boldsymbol{\gamma}}\right\rangle_{\Sigma_g} 
 = \sum_{\hat{u}\in \SBE^{(\boldsymbol{\gamma})}} \CH(\hat{u})^{g-1}~.
\ee
Here, for any set $(\gamma_i)$ of $n$ elements of $\Gamma^{(0)}$, we define  the set of Bethe vacua $\SBE^{(\gamma_1, \cdots, \gamma_n)} \equiv \bigcap_{i=1}^{n}\SBE^{(\gamma_i)}$. Equivalently, let us denote by ${\rm H}^{(0)}_{\boldsymbol{\gamma}} \subseteq \Gamma^{(0)}$ the smallest subgroup of $\Gamma^{(0)}$ that contains the $\gamma_i$'s -- it is the smallest subgroup such that $[\boldsymbol{\gamma}] \in H_1(\Sigma_g, {\rm H}_{\boldsymbol{\gamma}})$.  Then, we can define: 
 \be\label{0-sym-fixed-n}
     \SBE^{(\boldsymbol{\gamma})}\equiv \{  \hat{u}\in \SBE\; | \; \hat{u}+\gamma^{(0)} \sim \hat{u}~,~ \forall \gamma^{(0)}\in {\rm H}^{(0)}_{\boldsymbol{\gamma}}  \} 
     \cong     \SBE^{(\gamma_1, \cdots, \gamma_{2g})} ~.
 \ee
Note that the expression~\eqref{gen 0form insertion corr}  is invariant under large diffeomorphisms of $\Sigma_g$, which act as $\Sp(2g, \Z)$ transformations of $\boldsymbol{\gamma}=(\gamma_{a, k}, \gamma_{b,l})$. Moreover, it is important to note that the correlator~\eqref{gen 0form insertion corr} only depends on the subgroup ${\rm H}^{(0)}\subseteq \Gamma^{(0)}$ and not on the specific insertion, namely:
\be
\left\langle \CU^{\boldsymbol{\gamma}_1}\right\rangle_{\Sigma_g}
=\left\langle \CU^{\boldsymbol{\gamma}_2}\right\rangle_{\Sigma_g}~,\qquad \text{if}\quad {\rm H}^{(0)}_{\boldsymbol{\gamma}_1} ={\rm H}^{(0)}_{\boldsymbol{\gamma}_2}~, 
\ee
 since $ \SBE^{(\boldsymbol{\gamma}_1)}= \SBE^{(\boldsymbol{\gamma}_2)}$ in this case. In section~\ref{sec:SU(Nc)K}, we will discuss these subgroups ${\rm H}^{(0)}_{\boldsymbol{\gamma}}$ more explicitly in the case of $\Gamma^{(0)}\cong\mathbb Z_N$ a cyclic group.

\begin{figure}[t]
 \centering
\includegraphics[scale=1.2]{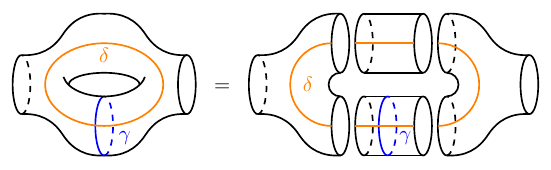}
    \caption{Decomposition of a handle-gluing operator with  $\CU^{\boldsymbol{\gamma}}$ operator inserted}
    \label{fig:torusBC}
\end{figure}

\medskip
\noindent
{\bf Mixed correlators.} Finally, we may consider the simultaneous insertion of $\Gamma^{(1)}$ and $\Gamma^{(0)}$ background gauge fields, which gives us:
\be\label{gen 0form insertion mixed corr}
Z_{\Sigma_g \times S^1}^{[\CT]}(B_{\gamma^{(1)}}, C_{\boldsymbol{\gamma}^{(0)}})=
 \left\langle \Pi^{\gamma^{(1)}} \CU^{\boldsymbol{\gamma}^{(0)}}\right\rangle_{\Sigma_g} 
 = \sum_{\hat{u}\in \SBE^{(\boldsymbol{\gamma}^{(0)})}} \Pi(\h u)^{\gamma^{(1)}} \CH(\hat{u})^{g-1}~.
\ee
Such mixed correlation functions necessarily vanish when the anomaly is non-trivial:
\be\label{anomaly_corr_vanish}
 \left\langle \Pi^{\gamma^{(1)}} \CU^{\boldsymbol{\gamma}^{(0)}}\right\rangle_{\Sigma_g} =0 \qquad \text{if }\quad   e^{2\pi i \CA(\gamma^{(0)}, \gamma^{(1)})} \neq 1~.
\ee
The expression~\eqref{gen 0form insertion mixed corr} can also be written as:
\be
\left\langle \Pi^{\gamma^{(1)}} \CU^{\boldsymbol{\gamma}^{(0)}}\right\rangle_{\Sigma_g} 
= \sum_{\chi\in\hat{\Gamma}^{(1)} }\chi(\gamma^{(1)})\sum_{\hat{u}\in   \SBE^{\chi}\cap \SBE^{(\boldsymbol{\gamma}^{(0)})} } \mathcal{H}(\hat{u})^{g-1} ~,
\ee
similarly to~\eqref{Pigamma insert gen}.

\sss{3d modularity.} Correlation functions on $\Sigma_g\times S^1$ are not completely fixed by the above prescription, due to the ambiguity $\Pi^\gamma \rightarrow  \chi_\rho(\gamma)\Pi^\gamma$ of the gauge flux operator from linear shifts of the twisted superpotential~\eqref{flux_ambiguity}. A practical way to resolve this ambiguity is to demand modularity when inserting 3d lines on the three-torus. As before, for $g=1$ we can choose $\CC$ to be the spatial direction on $T^2$,  and compare the correlator $\langle \chi_\rho(\gamma)\Pi^\gamma\rangle_{T^2}$ with $\langle \CU^\gamma(\CC)\rangle_{T^2}$. The ambiguity $\chi_\rho(\gamma)$ of the flux operator can  be fixed by identifying those correlators~\eqref{Pigamma insert gen} and~\eqref{torus_C_gamma},
\begin{equation}\label{fixing_ambiguity_mod}
\chi_\rho(\gamma)\sum_{\chi\in\hat\Gamma^{(1)}}\chi(\gamma)|\SBE^\chi|=|\SBE^{(\gamma)}|~, \qquad \forall\gamma\in\Gamma~.
\end{equation}
Note that the RHS is a positive integer. 
If a solution $\chi_\rho\in\hat\Gamma^{(1)}$  exists, then it fixes a normalisation of the gauge flux operator for a given theory.

On physical grounds, we expect that a solution to~\eqref{fixing_ambiguity_mod} exists. 
Mathematically, the existence of a solution to~\eqref{fixing_ambiguity_mod} is not clear at all. As we show in section~\ref{sec:SU(Nc)K} for the case of the  $\SU(N)_K$ $\CN=2$ CS theory, the positive integers $|\SBE^{(\gamma)}|$ depend intricately on the elements $\gamma$, and merely solving the equation for a generator $\gamma_0$ of $\Gamma$ -- if one exists -- may not be enough to determine $\chi_\rho$.%
\footnote{On the other hand, we may constrain the set of possible solutions somewhat, as follows. The \emph{exponent} of the finite abelian group $\Gamma\cong \bigoplus_i \Z_{N_i}$,  denoted by $\exp(\Gamma)$, is the least common multiple of the orders of all elements of the group. We then have ${\exp(\Gamma)} \gamma=0$ for any $\gamma\in \Gamma$. Since $\chi_\rho$ is a group homomorphism, this forces $\chi_\rho(\gamma)^{\exp(\Gamma)}=1$ and thus $\chi_\rho$ is necessarily an $\exp(\Gamma)$-th root of unity. }
In section~\ref{sec:SU(Nc)K}, we study in some detail the 3d modularity for the $\CN=2$ $\SU(N)_K$ CS theory, and we find that a normalisation $\chi_\rho$ exists for all values of $N$ and $K$.  We also encounter an interesting modular anomaly in this case; this will be discussed in section~\ref{subsec:T3 modular anom}. This kind of `modular anomaly' arises whenever the 3d topological lines for $\Gamma_{\rm 3d}^{(1)}$ have non-trivial braiding with the transparent line $\psi$ that couples the spin structure of $\CM_3=\Sigma_g\times S^1$~\cite{CFKK-24-II}.

\subsection{Gauging the 1-form symmetry \texorpdfstring{$\Gamma^{(1)}$}{Gamma1}}
Let us now gauge the $\Gamma^{(1)}$ symmetry -- such a symmetry is always non-anomalous in 2d. This simply corresponds to summing over background gauge fields, as follows:
\be\label{Gamma1 gauging gen}
Z_{\Sigma_g\times S^1}^{[\CT/\Gamma^{(1)}]}(\theta) ={1\ov |\Gamma| } \sum_{B \in H^2(\Sigma_g, \Gamma^{(1)})}  e^{i (\theta, B)} Z_{\Sigma_g\times S^1}^{[\CT]}(B)~. 
\ee
 Here, we weighted the contributions with the $\theta$-angle for the dual $(-1)$-form symmetry, as discussed in~\eqref{theta pairing def}. This gives us:
 \be
Z_{\Sigma_g\times S^1}^{[\CT/\Gamma^{(1)}]}(\theta) ={1\ov |\Gamma| } \sum_{\gamma\in \Gamma^{(1)}}  \b \vartheta(\gamma)  \,
\langle \Pi^\gamma \rangle_{\Sigma_g}~,
 \ee
 where $\b \vartheta$ denotes the complex conjugate of $\vartheta$.
 Plugging in~\eqref{Pigamma insert gen} and using the orthogonality of the characters,%
 \footnote{Namely, for any $\Gamma$ a discrete abelian group, and $\vartheta, \chi\in \h \Gamma$, we have that $\sum_{\gamma\in \Gamma} \b\vartheta(\gamma) \chi(\gamma) = |\Gamma| \delta_{\vartheta, \chi}$.}
 we find that:
\be\label{Gamma1 gauged gen res}
Z_{\Sigma_g\times S^1}^{[\CT/\Gamma^{(1)}]}(\theta) =   \sum_{\h u \in \SBE^{\chi=\vartheta}} \CH(\h u)^{g-1} = \Tr_{ \Hilb_{S^1}^{\vartheta}}(\CH^{g-1})~.
\ee
That is, gauging the one-form symmetry `undoes decomposition' by projecting us onto a given $\chi=\vartheta$ sector~\cite{Sharpe:2019ddn}. The process is completely reversible, since gauging the $\Gamma^{(-1)}$ symmetry simply corresponds to summing over all possible $\theta$-angles:
\be\label{Gamma-1 gauging gen}
Z_{\Sigma_g\times S^1}^{[\CT]}(B) = \sum_{\theta} e^{-i(\theta, B)}\, Z_{\Sigma_g\times S^1}^{[\CT/\Gamma^{(1)}]}(\theta)~.
\ee
It is worth commenting on the overall normalisation of the sum in~\eqref{Gamma1 gauging gen} and in~\eqref{Gamma-1 gauging gen}, respectively. Here we sum over all insertions with the overall normalisation ${1/|\Gamma| }$ when gauging $\Gamma^{(1)}$, and therefore we have a unit normalisation in~\eqref{Gamma-1 gauging gen} when gauging $\Gamma^{(-1)}$.   This prescription is such that the result~\eqref{Gamma1 gauged gen res} for the gauged theory has a standard Hilbert space interpretation -- it is simply given as a trace over the Hilbert space $\Hilb_{S^1}^{\chi=\vartheta}$ defined in~\eqref{decomposition HFS1}, with the local operator $\CH$ being unaffected by the 1-form gauging.%
\footnote{Any other consistent normalisation, such as the one used in~\protect\cite{Gukov:2021swm}, would differ from ours by a factor of the form $\alpha^{g-1}$, corresponding to a rescaling of the handle-gluing operator (which is equivalent to adding a supersymmetry-preserving counterterm $\propto \int_\Sigma R$ -- see~\eg~\cite{Closset:2014pda}).}

Note also that, in the gauged theory, the ambiguity~\eqref{flux_ambiguity} in defining the gauge flux operator $\Pi^\gamma$ is equivalent to a redefinition of angle $\theta$ according to $\theta\to \theta + 2\pi \rho$ (that is, $\bar \vartheta(\gamma)\to  \chi_\rho(\gamma) \bar \vartheta(\gamma)$), which is simply a relabelling of the distinct universes in the decomposition~\eqref{decomposition HFS1}.

\subsection{Gauging the 0-form symmetry \texorpdfstring{$\Gamma^{(0)}$}{Gamma0}}
\label{sec:gauging 0 form sym}
The discrete symmetry $\Gamma^{(0)}$ is non-anomalous in the 3d $A$-model for $\t G$. Thus it can be gauged by summing over all $\Gamma^{(0)}$ gauge fields on $\Sigma_g$:
\be\label{Gamma0 gauging gen}
Z_{\Sigma_g\times S^1}^{[\CT/\Gamma^{(0)}]}(C^D) ={1\ov |\Gamma|^{2g-1}} \sum_{C \in H^1(\Sigma_g, \Gamma^{(0)})}  
e^{2\pi i (C^D, C) } 
Z_{\Sigma_g\times S^1}^{[\CT]}(C)~. 
\ee
Note the normalisation constant in front, which differs from the `symmetric' normalisation used in~\cite{Gukov:2021swm}. 
Here, $C^D$ is a background gauge field for the dual 0-form symmetry, $\Gamma^{(0)}_D \cong   \Gamma$, which non-trivially permutes the twisted sectors. (This is sometimes known as the `quantum symmetry.') We also defined the pairing:
\be
(C^D,\, C)= \int_{\Sigma_g} C^D\cup C~.
\ee
 In the gauged theory, the topological line operator for $\boldsymbol{\gamma}_D\in H_1(\Sigma_g,  \Gamma^{(0)}_D)$ can be written as:
\be
\CU_D^{\boldsymbol{\gamma}_D}(\CC) = e^{i \int_{\CC} C}~, \qquad\qquad  [C^D_{\boldsymbol{\gamma}_D}] = {\rm PD}[\boldsymbol{\gamma}_D]~,
\ee
where $C$ is the dynamical gauge field for $\Gamma^{(0)}$. Note that we can view $\CU_D^{\boldsymbol{\gamma}_D}$ as a character:
\be
\CU_D^{\boldsymbol{\gamma}_D} \in {\rm Hom}(H_1(\Sigma_g, \Gamma^{(0)}), \U(1))~,
\qquad\qquad
\CU_D^{\boldsymbol{\gamma}_D}(\boldsymbol{\gamma}) \equiv e^{2\pi i (C^D_{\boldsymbol{\gamma}_D}, \,C_{\boldsymbol{\gamma}})}~.
\ee
The partition function~\eqref{Gamma0 gauging gen} can thus be written as:
\be\label{Gamma0 gauging gen bis}
Z_{\Sigma_g\times S^1}^{[\CT/\Gamma^{(0)}]}(\boldsymbol{\gamma}_D) ={1\ov |\Gamma|^{2g-1}} \sum_{[\boldsymbol{\gamma}] \in H_1(\Sigma_g, \Gamma^{(0)})}  
\CU_D^{\boldsymbol{\gamma}_D}(\boldsymbol{\gamma})  \,
Z_{\Sigma_g\times S^1}^{[\CT]}({\boldsymbol{\gamma}})~. 
\ee
Conversely, we have:
\be
Z_{\Sigma_g\times S^1}^{[\CT]}({\boldsymbol{\gamma}}) ={1\ov |\Gamma|} \sum_{[\boldsymbol{\gamma}_D] \in H_1(\Sigma_g, \Gamma^{(0)}_D)}  
e^{2\pi i (C_{\boldsymbol{\gamma}},\, C^D_{\boldsymbol{\gamma}_D})}\, Z_{\Sigma_g\times S^1}^{[\CT/\Gamma^{(0)}]}(\boldsymbol{\gamma}_D)~,
\ee
wherein gauging the dual 0-form symmetry gives us back the theory we started with.
 The partition function~\eqref{Gamma0 gauging gen bis} can be written as a trace over the gauged Hilbert space~\cite{Gukov:2021swm}. The Hilbert space of the 2d theory $\CT/\Gamma^{(0)}$ takes the form:
\be\label{HS S1 gauged Gamma0}
\Hilb_{S^1}^{[\CT/\Gamma^{(0)}]}    \cong {\rm Span}_\C\Big\{ \,|\h \omega; s_{\h \omega}\rangle \; \Big| \; \h \omega \in \SBE/\Gamma^{(0)}~, \; s_{\h \omega} = 1, \cdots, |{\rm Stab}(\h \omega)| \Big\}~,
\ee
where $\h \omega\equiv  {\rm Orb}(\h u)$ denote the distinct $\Gamma^{(0)}$-orbits of Bethe vacua of the $\t G$ gauge theory, and $s_{\h \omega}$ indexes the twisted sectors. If we first consider $C_D=0$, for definiteness: 
\be\label{gauged-0-form-part-fun}
Z_{\Sigma_g\times S^1}^{[\CT/\Gamma^{(0)}]} \equiv  {1\ov |\Gamma|^{2g-1}}\sum_{[\boldsymbol{\gamma}] \in H_1(\Sigma_g, \Gamma^{(0)})} 
\left\langle \CU^{\boldsymbol{\gamma}}\right\rangle_{\Sigma_g}~,
\ee
we find that: 
\be\label{gauged-0-form-part-fun-simplified}
Z_{\Sigma_g\times S^1}^{[\CT/\Gamma^{(0)}]} = \sum_{\h \omega\in \SBE/\Gamma^{(0)}}  |{\rm Stab}(\h \omega)| \left({ |{\rm Stab}(\h \omega)|^2 \ov |\Gamma|^2} \CH(\h \omega) \right)^{g-1}~,
\ee
where we used the notation:
\be\label{handle-invariant}
\CH(\h \omega) \equiv \CH(\h u)~, \quad \forall \h u \in \h\omega~,
\ee
which is well defined since $\CH(\h u+\gamma) = \CH(\h u)$, $\forall \gamma\in \Gamma^{(0)}$. 
Indeed, starting with the expression for the partition function of the gauged theory \eqref{gauged-0-form-part-fun}, we have: 
\begin{align}
    \begin{split}
        Z_{\Sigma_g\times S^1}^{[\mathcal{T}/\Gamma^{(0)}]}  &= \frac{1}{|\Gamma|^{2g-1}}\sum_{[\boldsymbol{\gamma}]\in H_1(\Sigma_g, \Gamma^{(0)})} \sum_{\hat{u}\in \SBE^{(\boldsymbol{\gamma})}} \mathcal{H}(\hat{u})^{g-1}~,\\
        &= \frac{1}{|\Gamma|^{2g-1}}\sum_{\substack{(\hat{u}, \boldsymbol{\gamma})\in \SBE\times \Gamma^{2g}\\\hat{u}\in \SBE^{(\boldsymbol{\gamma})}}} \mathcal{H}(\hat{u})^{g-1}~,\\
        &= \frac{1}{|\Gamma|^{2g-1}}\sum_{\hat{u}\in \SBE}\sum_{\boldsymbol{\gamma}\in\text{Stab}(\hat{u})^{2g}}\mathcal{H}(\hat{u})^{g-1}~,\\
    &= \frac{1}{|\Gamma|^{2g-1}}\sum_{\hat{\omega}\in \SBE/\Gamma^{(0)}} \mathcal{H}(\hat{\omega})^{g-1} |\hat{\omega}|\; |\text{Stab}(\hat{\omega})|^{2g}~,
    \end{split}
\end{align}
where in the third equality we exchanged the order of the two sums. Moreover, in the last step we used \eqref{handle-invariant} and $
    |\text{Stab}(\hat{\omega})| \equiv |\text{Stab}(\hat{u})|,  \forall \hat{u}\in\hat{\omega}$. Then, using the stabiliser-orbit theorem~\eqref{orb-stab-thm},  
the final expression above simplifies to \eqref{gauged-0-form-part-fun-simplified}. 

Note that we chose the overall normalisation in~\eqref{Gamma0 gauging gen} such that the expression~\eqref{gauged-0-form-part-fun-simplified} can be interpreted as a trace over the Hilbert space~\eqref{HS S1 gauged Gamma0} of the $\Gamma^{(0)}$-gauged theory, including all twisted-sector states. Note also that, for $g=0$, this implies that we have:
\be
Z_{S^2\times S^1}^{[\mathcal{T}/\Gamma^{(0)}]} = |\Gamma| Z_{S^2\times S^1}^{[\mathcal{T}]}~.
\ee
More generally, if we turn on a background gauge field for $\Gamma^{(0)}_D$ as in~\eqref{Gamma0 gauging gen bis}, a similar computation gives us:
\be
Z_{\Sigma_g\times S^1}^{[\CT/\Gamma^{(0)}]}(\boldsymbol{\gamma}_D) =  \sum_{\h \omega\in \SBE/\Gamma^{(0)}}|\text{Stab}(\hat \omega)|^{-1}\left(\frac{\mathcal{H}(\hat{\omega})}{|\Gamma|^2}\right)^{g-1} \sum_{\boldsymbol{\gamma}\in\text{Stab}(\hat{\omega})^{2g}}\mathcal{U}_D^{\boldsymbol{\gamma}_D}(\boldsymbol{\gamma})~.
\ee
 In the special case $\boldsymbol{\gamma}_D = 0$ (so that $\mathcal{U}_{D}^{\boldsymbol{\gamma_D}}(\boldsymbol{\gamma}) = 1$), this reduces to \eqref{gauged-0-form-part-fun-simplified}.

\subsection{Topologically twisted index for the gauge group \texorpdfstring{$\t G/\Gamma$}{G/Gamma}}\label{sec:twisted_index_G}

Finally, we can combine the above results to gauge the full $\Gamma^{(1)}_{\rm 3d}$ in the $A$-model description, assuming the symmetry is non-anomalous. We can then gauge the symmetries in whichever order, as indicated in figure~\ref{fig:2dgauging}. 
\begin{figure}[t]
\centering
\includegraphics[scale=1]{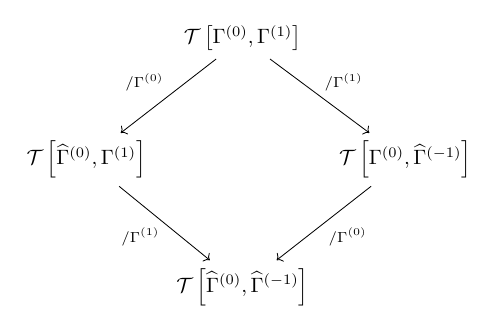}
	\caption{Gaugings of 0-form and 1-form symmetries of a 2d theory $\CT$. In the absence of a 't Hooft anomaly, the diagram commutes. If $\Gamma^{(1)}_{\rm 3d}$ is anomalous,  the bottom theory does not exist and the diagram truncates. }\label{fig:2dgauging}
\end{figure}
The general formula for the 3d theory with gauge group $G=\t G/\Gamma$  on $\Sigma_g\times S^1$ is obtained by summing over all insertions of topological operators for $\Gamma^{(1)}_{\rm 3d}$:
\be\label{full 3d gauging}
Z^{\CT/\Gamma_{\rm 3d}^{(1)}}(\theta, C^D) = {1\ov |\Gamma|^{2g}} \sum_{\delta\in \Gamma^{(1)}} \, \sum_{[\boldsymbol{\gamma}] \in H_1(\Sigma_g, \Gamma^{(0)})} 
e^{i(\theta, B_\delta)} e^{2\pi i (C^D, C_{\boldsymbol{\gamma}})}\left\langle \Pi^{\delta} \,\CU^{\boldsymbol{\gamma}}\right\rangle_{\Sigma_g}~.
\ee
Conceptually, it is simplest to consider gauging the symmetries one after the other. For instance, consider gauging $\Gamma^{(0)}$ first. The intermediate 2d theory on the left-hand corner of figure~\ref{fig:2dgauging} still has a $\Gamma^{(1)}$ symmetry, and it therefore enjoys decomposition. The subsequent gauging of $\Gamma^{(1)}$ is then a projection on a particular universe. If we first gauge $\Gamma^{(1)}$ instead, we first project onto one universe and then consider the $\Gamma^{(0)}$ gauging inside that universe. Either way, we end up considering the Bethe vacua determined by the $\vartheta$-twisted Bethe equations~\eqref{chi-vacua}, which we can suggestively rewrite as: 
\be
\SBE^\vartheta \equiv 
\left\{ \hat{u}\in \mathfrak{t}/\Lambda_{\rm mw}^{\t G} \; \Big | \; \Pi^\gamma(\hat{u}) =  \vartheta(\gamma)~,\forall \gamma\in \Lambda^{\t G/\Gamma}_{\rm mw} \quad \text{and}\quad w\cdot \hat{u}\neq \hat{u}~, \forall w\in W_{\t G} \right\}/{W_{\t G}}~.
\ee
Note that $\vartheta(\gamma)=1$ for any $\gamma\in \Lambda^{\t G}_{\rm mw}\subset \Lambda^{\t G/\Gamma}_{\rm mw}$. We then find:
\be
Z_{\Sigma_g\times S^1}^{\CT/\Gamma_{\rm 3d}^{(1)}}(\theta, C^D_{\boldsymbol{\gamma}_D}) =   \sum_{\h \omega\in \SBE^\vartheta/\Gamma^{(0)}}|\text{Stab}(\hat \omega)|^{-1}\left(\frac{\mathcal{H}(\hat{\omega})}{|\Gamma|^2}\right)^{g-1} \sum_{\boldsymbol{\gamma}\in\text{Stab}(\hat{\omega})^{2g}}\mathcal{U}_D^{\boldsymbol{\gamma}_D}(\boldsymbol{\gamma})~.
\ee
Finally, let us consider the special case $\theta= \boldsymbol{\gamma}_D=0$. The twisted index for the $G=\t G/\Gamma$ gauge theory then takes the simple form:
\be
Z_{\Sigma_g\times S^1}^{\CT/\Gamma_{\rm 3d}^{(1)}} = {\rm Tr}_{\Hilb_{S^1}^{[\t G/\Gamma]}} \left(\CH_G^{g-1}\right)~,
\ee
where the trace is over the Hilbert space of the $A$-model for the $G= \t G/\Gamma$ gauge theory:
\be
\Hilb_{S^1}^{[\t G/\Gamma]}\cong   {\rm Span}_\C\Big\{ \,\ket{\h \omega; s_{\h \omega}} \; \Big| \; \h \omega \in \SBE^{\vartheta=1}/\Gamma^{(0)}~, \; s_{\h \omega} = 1, \cdots, |{\rm Stab}(\h \omega)| \Big\}~.
\ee
Here, we have introduced the handle-gluing operator $\CH_G$, which  acts on the Bethe vacua of the $G$ gauge theory as:
\be
\CH_G \ket{\h \omega; s_{\h \omega}} ={\CH(\h \omega)\ov |\h \omega|^2} \ket{\h \omega; s_{\h \omega}}~.
\ee
where $\CH=\CH_{\t G}$ is the ordinary handle-gluing operator of the $\t G$ gauge theory, and we used the notation~\eqref{handle-invariant}.

This completes our general discussion of the topologically twisted index for 3d $\CN=2$ supersymmetric gauge theories with a general gauge group $G$ and a $\U(1)_R$ symmetry. The same caveats that held for~$\t G$ apply here~\cite{Closset:2017zgf} -- in particular, we implicitly assumed that the set of Bethe vacua is discrete. Another important caveat, which was not explicitly stated in previous literature, is that our approach in this section gives the correct result for the twisted indices if and only if the 3d topological lines that we insert to gauge $\Gamma_{\rm 3d}^{(1)}$ are all bosonic (that is, they have trivial braiding with the `spin-structure' transparent line $\psi$). This ensures that starting from a 3d $\t G$ gauge theory where all the Bethe vacua are bosonic, the set of Bethe vacua in the $G= \t G/\Gamma$ theory are also all bosonic. The treatment of fermionic Bethe vacua will be addressed in future work~\cite{CFKK-24-II}.

Finally, we note that it would be straightforward to generalise our discussion to `non-Lagrangian' 3d $\CN=2$ field theories with one-form symmetries for which the $A$-model formalism is available. For the $T[\CM_3]$ theories of the 3d/3d correspondence~\cite{Dimofte:2011ju, Gukov:2016gkn, Gukov:2017kmk}, the computation of $\Sigma=T^2$ was first discussed in~\cite{Eckhard:2019jgg}.

\section{The 3d \texorpdfstring{$\CN=2$ $\SU(N)_K$}{N=2 SU(N)K} Chern--Simons theory}\label{sec:SU(Nc)K}

In the rest of this paper, we mainly set $\t G= \SU(N)$. This is a simply-connected gauge group with centre $Z(\SU(N))=\mathbb Z_{N}$, and we can thus study all the possible quotients $\SU(N)/\mathbb Z_r$, where $r$ is a divisor of $N$.  This gives us a rich structure of gauge theories obtained by discrete gauging of subgroups, which depends crucially on the arithmetic properties of the integer $N$ -- see \eg~\cite{Aharony:2013hda,Delmastro:2019vnj,Gaiotto:2014kfa} for closely related discussions.

In this section, we specifically study the  $\CN=2$ supersymmetric Chern--Simons theory $\SU(N)_K$   (with supersymmetric CS level $K\geq N$). Upon integrating out the gauginos, this is equivalent to the bosonic CS theory $\SU(N)_k$ with $k= K-N \geq 0$. Since many explicit results are available for this 3d TQFT, this serves as a useful testing ground for the general formalism of the previous section. We also obtain some seemingly new results.

\subsection{The \texorpdfstring{$A$}{A}-model for the \texorpdfstring{$\SU(N)_K$}{SU(N)K} theory and its higher-form symmetries}\label{sec:general_structure_SU(Nc)}

The  $\SU(N)_K$  CS theory has a one-form symmetry $\Gamma^{(1)}_{\rm 3d}=\mathbb{Z}_{N}$, which descends to the higher-form symmetry $\mathbb{Z}_{N}^{(0)}\oplus \mathbb{Z}_{N}^{(1)}$ in the 2d description. In this subsection, we give an explicit parametrisation of the Bethe vacua for this theory and we study how the higher-form symmetries act on them.

\sss{Effective twisted superpotential and dilaton.}
Although $\SU(N)=N$ has rank $N-1$, it is convenient to use a slightly redundant description with $N$ variables $u_a$, $a=1, \cdots, N$, together with the  tracelessness condition:
\begin{equation}\label{traceless cond}
    \sum_{a=1}^{N}u_a = 0~. 
\end{equation}
The effective twisted superpotential is given by:
\begin{equation}\label{tsp-for-SU(Nc)K}
    \mathcal{W}(u) = \frac{K}{2}\sum_{a=1}^{N} u_a^2  + u_0\sum_{a=1}^{N} u_a~,
\end{equation}
where $u_0$ is a Lagrange multiplier for the constraint~\eqref{traceless cond}. 
 The effective dilaton potential is given by:
\begin{equation}\label{dilaton-SU(Nc)K}
    e^{2\pi i \Omega( u)} =  - \prod_{\substack{a,b=1\\a\neq b}} ^{N}\left(1-\frac{x_a}{x_b}\right)^{-1}~.
\end{equation}
Here, for later convenience, we turned on an effective  CS level $K_{RR}\in 2\Z+1$ for the R-symmetry $\U(1)_R$. We also use the  notation  $x_a \equiv e^{2\pi i u_a}$ and $q\equiv e^{2\pi i u_0}$.

\sss{Parametrising the Bethe vacua.}
The Bethe equations that follow from the effective twisted superpotential \eqref{tsp-for-SU(Nc)K} are:
\begin{equation}\label{BAE-SU(N)_K}
    \Pi_a(u) \equiv q x_a^{K}=1~, \qquad a = 1, \cdots, N~,\qquad \quad 
       \Pi_0(u) \equiv \prod_{a=1}^{N} x_a = 1~.
\end{equation}
By taking the product of the $N$ flux operators $\Pi_a$, we see that $q$ must be an $N$-th root of unity:
\be
 \h q = e^{2\pi i \frac{\ell}{N}}~,\qquad \quad \ell\in \mathbb{Z}_{N}~.
 \ee
 Substituting $q=\h q$ into $\Pi_a$, one finds the Bethe vacuum solutions: 
\begin{equation}\label{SU(N)-solutions}
    \h u_a = \left(\frac{l_a}{K} - \frac{\ell}{K N}\right) \; {\rm mod}\, 1~, \qquad a = 1, \cdots, N~, 
\end{equation}
 for  $l_a \in \mathbb{Z}_K$, and with the tracelessness constraint:
\begin{equation}\label{tracelessness-constraint-SU(Nc)K}
    \sum_{a=1}^{N}l_a - \ell \in K\mathbb{Z}~.
\end{equation} 
We index the Bethe vacua by the $(N+1)$-tuples $\underline{l}$ defined as:
\be
\underline{l} = (l_1, \cdots, l_{N}; \ell)~,\qquad 
\quad 
\hat{u}_{\underline{l}} \equiv \h u_a e^a~.
\ee 
The Weyl symmetry $W_{\t G}= S_{N}$ permutes the $u_a$ variables, and we can therefore choose the ordered $N$-tuples $\{l_1, \cdots, l_{N}\} \subset \{0,1 \cdots, K-1\}$ by fixing a gauge. 
That is, we must have $l_a>l_b$ if $a>b$.  Before imposing the condition~\eqref{tracelessness-constraint-SU(Nc)K}, there are thus $N \binom{K}{N}$ possibilities for $\underline{l}$. Evaluating the trace $\sum_{a=1}^{N}l_a - \ell$ (mod  
$K$) partitions this set of tuples into $K$ sets of equal size. Selecting the traceless tuples thus gives us the total number of Bethe vacua, which reproduces the well-known Witten index~\cite{Witten:1999ds, Ohta:1999iv}:
\begin{equation}\label{Witten_index_SU(Nc)K}
    \IW[\SU(N)_K] \equiv Z_{T^3}[\SU(N)_K] ={N\ov K}\binom{K}{N}= \binom{K-1}{N-1}~.
\end{equation} 
In summary, the  Bethe vacua $\hat{u}_{\underline{l}}\in \SBE$ are in one-to-one correspondence with the elements $\underline{l}$ in the indexing set:
\be\label{JNcK}
  \mathcal{J}_{N, K} \equiv \left\{ (l_1, \cdots, l_{N}; \ell)\in \Z_K^{N}\oplus \Z_{N}  , \Big| \,  0\leq l_1 < \cdots < l_{N}\leq K~,\, \sum_{a=1}^{N}l_a - \ell \in K\mathbb{Z} \right\}~,
\ee 
with $\hat{u}_{\underline{l}}$ given in~\protect\eqref{SU(N)-solutions}. An equivalent determination of this set is obtained by first looking at the $\binom{K}{N}$ ordered sets $(l_1, \cdots, l_N)\in \Z_K^N$, defining $\ell \equiv \sum_{a=1}^N l_a \,{\rm mod} \, K$, and retaining the resulting sets $\underline{l}$ if and only if $\ell <N$.%
\footnote{As an example, consider the $\SU(3)_6$ $\CN=2$ theory, which has $10$ vacua. Using the presentation~\eqref{JNcK}, we can list them as:
\bea\nn
&(0,1,5;0)~, \quad (1,2,3;0)~, \quad (3,4,5;0)~, \quad
(0,2,5;1)~, \quad (1,2,4;1)~,\\ 
&(0,3,4;1)~, \quad
(0,3,5;2)~, \quad (1,2,5;2)~, \quad (1,3,4;2)~, \quad (0,2,4;0)~.
\eea
\label{SU(3)6 vacua footnote}}

\sss{Handle-gluing operator and the twisted index.}
The twisted index can be computed as in~\eqref{twistedindex}. Using the parametrisation~\eqref{JNcK} for the Bethe vacua, the handle-gluing operator evaluated on-shell gives us:
\begin{equation}\label{handlegluing_u}
    \mathcal{H}(\h u_{\underline{l}})=N \left(\frac{K}{2^{N}}\right)^{N-1}\prod_{1\leq a< b\leq N} \sin^{-2}\left(\frac{\pi(l_a-l_b)}{K}\right)~,
\end{equation}
and the topologically twisted index of the $\SU(N)_K$ theory is then given by:
\begin{multline}\label{SU(Nc)K-partition}
    Z_{\Sigma_g\times S^1}[\SU(N)_K] = N^{g-1}\left(\frac{K}{2^{N}}\right)^{(g-1)(N-1)} \sum_{\underline{l}\in \mathcal{J}_{N,K}} \prod_{1\leq a<b\leq N} \left(\sin\frac{\pi(l_a-l_b)}{K}\right)^{2-2g}~,
\end{multline}
where the sum is over the elements of the set \eqref{JNcK}. This is the well-known Verlinde formula for the $\SU(N)_{k}$ bosonic CS theory with $k=K-N$~\cite{Verlinde:1988sn, Blau:1993tv}.

Let us consider a few explicit examples. For $K=N$ and $K=N+1$, the Verlinde formula simplifies to:
\bea\label{ZSUNc_Nc}
Z_{\Sigma_g\times S^1}[\SU(N)_{N}]=1~,  \qquad  Z_{\Sigma_g\times S^1}[\SU(N)_{N+1}]={N^g}~,
\eea
This is can be easily understood from the level/rank duality $\SU(N)_K \leftrightarrow \U(K-N)_{-K, -N}$ (see~{\it e.g.}~\cite{Hsin:2016blu, Closset:2023jiq}). With some more effort, a few more complicated examples can be evaluated explicitly as a function of $g$. For instance, we find:
\bea
Z_{\Sigma_g\times S^1}[\SU(3)_{6}]&=\frac{1}{12} \left(3\cdot 4^g+8\cdot 9^g+36^g\right)~, \\
Z_{\Sigma_g\times S^1}[\SU(4)_{6}]&= 6^{g-1} \left(4^g+2\right)+2^{3 g-1}~.
\eea
Note that these are integers for any $g$, and that they give $1$ at $g=0$, as expected for any 3d TQFT. See \cite{cvijovic2012,fonseca2017basic,Dowker1992} for comments on closed-form summations of trigonometric sums of these kinds.

\subsubsection{Two-dimensional 1-form symmetry, gauging and decomposition}
\label{subsec ZN1 gauging}

In the $A$-model description, we have a two-dimensional one-form symmetry whose topological operator is the flux operator that inserts some magnetic flux $\gamma$ along $\Sigma$,  corresponding to a $\PSU(N)\equiv \SU(N)/\Z_N$ bundle that cannot be lifted to an $\SU(N)$ bundle:
\be\label{gamma from quotient of SUN lattices}
\gamma \in {\Lambda_{\rm mw}^{\PSU(N)} /\Lambda_{\rm mw}^{\SU(N)} }\cong \Z_N~.
\ee
In our parameterisation of $u\in \mathfrak{t}\subset \mathfrak{su}(N)$ with $u= u_a e^a$ as above, the generator $\gamma_0$ of $\Z_N$ in~\eqref{gamma from quotient of SUN lattices} can be chosen to be:%
\footnote{The shift by the integer $\delta_{a,1}$ is such that $\sum_{a=1}^{N} \gamma_{0,a}= 0$. Note also that $\m \in \Lambda_{\rm mw}^{\SU(N)}$ is similarly defined as $\m= \m_a e^a$ with $\m_a\in \Z$ and $\sum_{a=1}^N \m_a=0$.}
\begin{equation}\label{gamma-a-SU(Nc)}
    \gamma_{0}=   \gamma_{0,a} e^a~,\qquad  \gamma_{0,a} = -\frac{1}{N}+ \delta_{a,1}~, \qquad a = 1, \cdots, N~.
\end{equation}
We then write the elements~\eqref{gamma from quotient of SUN lattices}  as $\gamma= n \gamma_0$ for $n\in \Z_N$ an integer modulo $N$.  
The corresponding $\Z_N^{(1)}$ operators are $\Pi^\gamma\equiv \CU^\gamma(S^1_A)$ with:
\be
\Pi^{\gamma_0}(u)\equiv \prod_{a=1}^N \Pi_a^{\gamma_{0,a}}(u)~. 
\ee
On-shell, this evaluates to:
\be\label{SU(N)_flux_op_def}
\Pi^{\gamma_0}(u_{\underline{l}}) = (-1)^{N-1} \h q^{-1} = (-1)^{N-1} e^{-{2\pi i \ell\ov N}}~, 
\ee
where we have fixed the overall phase by hand for future convenience, as we will discuss momentarily. Hence $\Z_N^{(1)}$ acts on the Bethe vacua as:
\begin{equation}\label{ZNcA_action}
\Pi^{m\gamma_0}  \ket{\hat{u}_{\underline{l}}}= (-1)^{m(N-1)}e^{-2\pi i{ m  \ell\ov N}}\ket{\hat{u}_{\underline{l}}}~, \qquad m=0, 1, \cdots, N-1~,
\end{equation}
 where recall that $\ell= \sum_a l_a\; {\rm mod}\, K$. 
 We can then directly evaluate the expectation value of this flux operator on $\Sigma_g$:
\begin{multline}\label{genus-g-A-ops}
        \langle \Pi^{m \gamma_0}\rangle_{\Sigma_g}
                =N^{g-1}\left(\frac{K}{2^{N}}\right)^{(g-1)(N-1)} \\
        \times\sum_{\underline{l}\in\mathcal{J}_{N,K}} (-1)^{m(N-1)}e^{- 2\pi i\frac{ m  \ell}{N}} 
        \prod_{1\leq a<b\leq N} \left(\sin\frac{\pi(l_a-l_b)}{K}\right)^{2-2g}~.
   \end{multline}
For $g=1$, we find the following explicit evaluation formula:
\bea\label{genus-1-A-ops}
    &    \langle \Pi^{m \gamma_0}\rangle_{T^2}
               & =&\, \sum_{\underline{l}\in
        \mathcal{J}_{N,K}} (-1)^{m(N-1)}e^{- 2\pi i\frac{ m  \ell}{N}}  \\
        && =&\, \delta_{K\; {\rm mod}\;  d, 0} \binom{{K\ov d}-1}{{N\ov d}-1}  \qquad \text{with}\quad d\equiv{N\ov {\rm gcd}(m,N)}~.
\eea
This is an `experimental' result that follows from the physical expectation of modularity on $T^3$, as we will explain below.%
\footnote{We checked this identity on a computer for a large number of values of $N$ and $K$. We leave finding a mathematical proof as a challenge for the interested reader.} In particular, \eqref{genus-1-A-ops} is always a non-negative integer.

 \sss{Gauging of $\Z_N^{(1)}$.}   
To gauge $\Z_N^{(1)}$ in the $A$-model description, we simply sum over the insertions \eqref{genus-g-A-ops}  for all possible values of $m$, as in~\eqref{Gamma1 gauging gen}:
\begin{equation}\label{gauge ZN1 in SUN}
Z_{\Sigma_g\times S^1}\left[\SU(N)_K/\mathbb{Z}^{(1)}_{N}\right](\theta_s)  = {1\ov N} \sum_{m=0}^{N-1}  e^{i(\theta_s, m \gamma_0)} \left\langle  \Pi^{m \gamma_0}\right\rangle_{\Sigma_g}~.
\end{equation}
The discrete $\theta$-angle for the dual $(-1)$-form symmetry takes values:
\begin{equation}\label{theta_angle_def}
    \theta_s\equiv  2\pi \frac{s}{N}~, \qquad s = 0, \cdots, N-1~,
\end{equation}
so that $(\theta_s,m\gamma_0)=2\pi\frac{s m}{N}$ in~\eqref{gauge ZN1 in SUN}. Let us define the geometric series:  
\be\label{theta_symbol}
\Delta^{N}_{\ell}(s)\equiv  {1\ov N}\sum_{m=0}^{N-1}(-1)^{m(N-1)}e^{2\pi i m \frac{s-\ell}{N}} 
= \begin{cases}
  \, \kron{s-\ell+\frac{N}{2}}{N}\quad &\text{if } N \text{ is even}~,   \\
    \,\kron{s-\ell}{N}\quad &\text{if } N \text{ is odd}~.\end{cases}
\ee
We then have:
\begin{multline}\label{gen res ZN1 gauging gen g}
    Z_{\Sigma_g\times S^1}\left[\SU(N)_K/\mathbb{Z}^{(1)}_{N}\right](\theta_s) =  N^{g-1}\left(\frac{K}{2^{N}}\right)^{(g-1)(N-1)}\times \\ \sum_{\underline{l}\in\mathcal{J}_{N,K}} \Delta^{N}_{\ell}(s) \prod_{1\leq a<b\leq N} \left(\sin\frac{\pi(l_a-l_b)}{K}\right)^{2-2g}~.
\end{multline}
This result is  in agreement with~\eqref{Gamma1 gauged gen res} and with the decomposition $\SBE\cong \bigoplus_s \SBE^{\vartheta_s}$ induced by the $1$-form symmetry on the Bethe vacua of the $\SU(N)_K$ theory, with:
\be\label{SBEthetas def}
\h u_{\underline{l}}\in \SBE^{\vartheta_s} \; \longleftrightarrow \;
\underline{l}\in \CJ_{N,K}^s \equiv \left\{ \underline{l} \in \CJ_{N,K}\; \Big| \;  (-1)^{N-1} e^{2\pi i {\ell\ov N}}= e^{2\pi i {s\ov N}} \right\}~, 
\ee
and thus~\eqref{gen res ZN1 gauging gen g} is simply a sum over the Bethe vacua $\h u \in \SBE^{\vartheta_s}$ indexed by $\CJ_{N,K}^s$:
\be\label{1-form-gauging_SU(N)-g}
 Z_{\Sigma_g\times S^1}\left[\SU(N)_K/\mathbb{Z}^{(1)}_{N}\right](\theta_s) = \sum_{\h u\in \SBE^{\vartheta_s}} \CH(\h u)^{g-1}~.
\ee
We thus see that the 2d universes indexed by $\theta_s$ are equivalently indexed by $\ell$, as:
\be\label{ell decomposition}
\ell =  \begin{cases}
    \, s +{N\ov 2}\; {\rm mod} \, N \quad &\text{if } N \text{ is even~,}\\ 
    \, s \quad &\text{if } N \text{ is odd~.} \end{cases}
\ee
 For $N=2$ and $s=0$, \eqref{gen res ZN1 gauging gen g} of course reproduces \eqref{ZgSU2/Z21}.
In the special case $g=1$, we have:
\begin{equation}\label{T2SU(Nc)_K/Z_Nc}
Z_{T^3}\left[\SU(N)_K/\mathbb{Z}^{(1)}_{N}\right](\theta_s)
 =\left|\CJ_{N,K}^s\right|~.
\end{equation}
Interestingly, `experimentally' we find that,  if  $N$ is odd and prime, then $|\CJ_{N,K}^s|$ is given by:
\begin{equation}\label{ZT3 ZN1 Noddprime}
\left|\CJ_{N,K}^s\right| ={1\ov N} \binom{K-1}{N-1}+\kron{K}{N}\left(\delta_{s,0} -{1\ov N}\right)~,
\end{equation} 
which is always an integer. We will derive this below using 3d modularity. Of course, we see that $\sum_{s=0}^{N-1}|\CJ_{N,K}^s|= |\CJ_{N, K}|$.

 \sss{Gauging of a subgroup $\Z_r^{(1)}$.}  
For later purposes, we can also consider the gauging of a subgroup $\Z_r^{(1)}\subseteq \Z_N^{(1)}$, for $r$ any divisor of $N$ and $K$:
\begin{equation}\label{gauge Zr1 in SUN}
Z_{\Sigma_g\times S^1}\left[\SU(N)_K/\mathbb{Z}^{(1)}_{r}\right](\theta_s^{(\Z_r)})  = {1\ov r} \sum_{m=0}^{r-1}  e^{2\pi i {m s\ov r}}  \left\langle  \Pi^{m{N\ov r} \gamma_0}\right\rangle_{\Sigma_g}~,
\end{equation}
where $\theta_s^{(\Z_r)} =2\pi {s\ov r}$ with $s\in \Z_r$. 
Note the normalisation of the sum in~\eqref{gauge Zr1 in SUN}. The sum over $m$ generalises~\eqref{theta_symbol} to: 
\bea\label{theta_symbol Zr}
&\Delta^{N, r}_{\ell}(s)&\equiv &\,\,\, {1\ov r}\sum_{m=0}^{r-1}(-1)^{m{N\ov r}(N-1)}e^{2\pi i m \frac{s-\ell}{s}} \\
&&=& \,\,\,\begin{cases}
    \,\kron{s-\ell-{N(N-1)\ov2}}{r}\quad &\text{if } N \text{ is even and } {N\over r} \text{ is odd~,} \\
    \, \kron{s-\ell}{r}\quad &\text{otherwise~.}
\end{cases}
\eea
Hence, the sum~\eqref{gauge Zr1 in SUN} over the $\Z_r^{(1)}$ topological operators projects us onto universes indexed by $\theta_s^{(\Z_r)}$, and~\eqref{theta_symbol Zr} tells us that we can equivalently parameterise these universes by $\ell \text{ mod } r$, with the exact relation being:
\be\label{ell decomposition Zr}
s =  \begin{cases}
    \, \ell +{N(N-1)\ov 2}\; {\rm mod} \, r \quad &\text{if } N \text{ is even and } {N\over r} \text{ is odd~,}\\ 
    \, \ell \; {\rm mod} \, r \quad &\text{otherwise~.}
    \end{cases}
\ee
We then have:
\be
Z_{\Sigma_g\times S^1}\left[\SU(N)_K/\mathbb{Z}^{(1)}_{r}\right](\theta_s^{(\Z_r)})  = \sum_{\h u\in \SBE^{\vartheta_s^{(\Z_r)}}} \CH(\h u)^{g-1}~,
\ee
where we defined:
\be\label{SBEthetas def Zr}
\h u_{\underline{l}}\in \SBE^{\vartheta_s^{(\Z_r)}} \; \longleftrightarrow \;
\underline{l}\in \CJ_{N,K}^{s,r} \equiv \left\{ \underline{l} \in \CJ_{N,K}\; \Big| \;  (-1)^{{N\ov r}(N-1)} e^{2\pi i {\ell\ov r}}= e^{2\pi i {s\ov r}} \right\}~.
\ee
This obviously generalises~\eqref{1-form-gauging_SU(N)-g}. Note that, for a divisor $r<N$, the `universes' $\SBE^{\vartheta_s^{(\Z_r)}}$ permuted by the dual $(-1)$-form symmetry $\Z_r^{(-1)}$ are generally larger than the  universes $\SBE^{\vartheta_s^{(\Z_N)}}=\SBE^{\vartheta_s}$ permuted by $\Z_N^{(-1)}$.

\subsubsection{Two-dimensional  0-form symmetry and orbits of Bethe vacua}
\label{sec: 2d 0-form and orbits}

Let us now consider the action of the $0$-form symmetry $\Z_N^{(0)}$ on the Bethe vacua. On the cylinder, the $\CU^\gamma$ operator acts as:
\begin{equation}\label{B-operator-action-SU(Nc)K}
    \CU^{\gamma}\ket{\hat{u}_{\underline{l}}}= \ket{\hat{u}_{\underline{l}} + \gamma}~.
\end{equation}
Given the choice of generator $\gamma_0$ in \eqref{gamma-a-SU(Nc)}, we see that, for $n\in \Z_N$:
\begin{equation}\label{action_B_on_BV}
\h u_{\underline{l}, a}  \rightarrow   \h u_{\underline{l}', a} =  (\hat{u}_{\underline{l}} +n \gamma_0)_a = \frac{1}{N K} (N \;l_a   - \ell-n K) \, {\rm mod}\, 1~,
\end{equation}
where $\h u_{\underline{l}}= (\h u_{\underline{l}, a})$ is defined up to permutations of the $u_a$'s. The corresponding action $\underline{l}\rightarrow \underline{l}'$  on the labels $\underline{l}=(l_a; \ell)$ reads:%
\be\label{Z_Nc_shift_Bv}
\ell \rightarrow  \ell' = (\ell +n K)\; \text{mod}\; N~,
\qquad
l_a  \rightarrow   l_a'= l_a -{n K+\ell-\ell'\ov N} \; \text{mod} \; K~.
\ee
Note that $\ell$ is preserved by the action of $\Z_N^{(0)}$ if and only if $K\in N\Z$. Since $\ell$ indexes the decomposition of the Hilbert space induced by $\Z_N^{(1)}$, as in~\eqref{ell decomposition}, this means that the action of $\Z_N^{(0)}$ generally does not preserve decomposition.%
\footnote{As an aside, note that  $\Z_N^{(0)}$ acts by permutation on the Bethe vacua, hence there exists a homomorphism:
\begin{equation}\nn
  \varphi \, :\;   \mathbb Z_{N}^{(0)}\longrightarrow S_{\binom{K-1}{N-1}}\; : \; \gamma \mapsto \varphi_\gamma~.
\end{equation}
For the generator $\gamma_0$, the permutation $\varphi_{\gamma_0}$ is implicitly given by the action \eqref{action_B_on_BV}. It can sometimes be instructive to study the permutation $\varphi_{\gamma_0}$ in cycle notation. As an example, consider the permutation of the Bethe vacua for $N=3$ and $K=6$, 
\bea\nn
\SU(3)_6: \quad \varphi_{\gamma_0}&=(1\, 2\, 3)(4\, 5\, 6)(7\, 8\, 9)(10)~,
\eea
where the vacua are as in footnote~\protect\ref{SU(3)6 vacua footnote}. There are three cycles of length 3, while there is a unique fixed point in this case --  this is the vacuum $\underline{l}=(0,2,4;0)$.}
 This is a manifestation of the mixed 't Hooft anomaly between $\Z_N^{(1)}$ and $\Z_N^{(0)}$, as we will explain momentarily.%

Let $\CC$ denote a basis element of $H_1(\Sigma_g, \Z)\cong \Z^{2g}$, and let us wrap a single topological line of charge $\gamma=n \gamma_0$ along $\CC$. 
 Its expectation value is given by~\eqref{gen 0form insertion corr}, namely:
\begin{align}\label{B-op-genus-g}
\begin{split}
    \langle \CU^{\gamma}(\CC)\rangle_{\Sigma_g}= \sum_{\hat{u}\in \SBE^{(\gamma)}}\mathcal{H}(\hat{u})^{g-1}~,
\end{split}
\end{align}
where  $ \SBE^{(\gamma)}$  denotes the subset of the Bethe vacua that are fixed under the action of $\gamma\in\mathbb Z_{N}^{(0)}$. In particular, for $g=1$ this counts the number of fixed points:
\begin{equation}\label{B-op-fp}
    \langle \CU^{\gamma}(\CC)\rangle_{T^2} =\left| \SBE^{(\gamma)}\right|~.
\end{equation}
By modularity on $T^3$, we expect that:
\be\label{AeqB modularity SUNK}
\langle \Pi^{\gamma}\rangle_{T^2}=\langle \CU^{\gamma}(\CC)\rangle_{T^2}~,
\ee
where $\langle \Pi^{\gamma}\rangle_{T^2}$ is given in~\eqref{genus-1-A-ops}. Indeed, this is simply the constraint~\eqref{fixing_ambiguity_mod} described in section~\ref{sec:bckgrd}.

We will now calculate~\eqref{B-op-fp} explicitly, for any $\gamma= n\gamma_0$ and any value of $N$ and  $K$.  As previously alluded to, the flavour of the discussion of fixed points and orbits depends crucially on the subgroup structure of the zero-form symmetry $\mathbb Z_N^{(0)}$. The simplest case is where $N$ is a prime number. Then,  $\mathbb Z_N$ has the special property that every element except the trivial element generates the full group. As a consequence, $\Z_N$ has $N-1$ generators, and the result of the fixed point counting will be independent on the choice of non-zero element $\gamma\in\mathbb Z_N^{(0)}$.
This is in strong contrast to the case where $N$ has divisors, in which case elements $\gamma\in\mathbb Z_N^{(0)}$ can lie in proper subgroups of $\mathbb Z_N$, which are labelled by the divisors of $N$. 

\sss{Fixed points for $N$ a prime integer.}
Let us first consider the case where $N$ is a prime number. Since there are no nontrivial subgroups of $\mathbb Z_N$, there are only two different cases depending on the CS level $K$. If $K$ is a multiple of $N$, the theory is non-anomalous, while if $K$ is not a multiple of $N$ then the theory is maximally anomalous. 
In section~\ref{sec:tHooftanomalies_vacuum}, we described how the mixed anomaly constrains the orbit structure. In the maximally anomalous case, all orbits are of `maximal' length $N$. As a consequence, there cannot be any fixed points. 
If $K$ is a multiple of $N$ on the other hand, the orbit lengths are all divisors of $N$ -- in our case where $N$ is prime, these are just $1$ and $N$ itself.

We can easily find all fixed points. Assume that the generator $\gamma_0$  leaves invariant a particular Bethe vacuum, $\hat u=\hat u+\gamma$. By Weyl transformations of the associated tuples $(l_1,\dots, l_{N};\ell)$, \eqref{action_B_on_BV} implies  that the values $(l_1,\dots, l_{N})$ can at most be permuted, while $\ell$ has to be invariant. From \eqref{Z_Nc_shift_Bv} it is clear that $\ell$ is preserved if and only if $K\in N\Z$. Thus, this is a necessary condition to have a fixed point.
Moreover, the fixed points satisfy $(l_a) \sim (l_a+ \kappa)$, with $\kappa\equiv {K\ov N}\in \Z$. 
The equivalence above is up to permutation. Any solution must consequently take the form:
\be\label{fp_s}
(l_a; \ell) = \Big(s, s+\kappa,s+ 2\kappa, \cdots,s+ (N-1)\kappa ;\,  N s+ \kappa{N(N-1)\ov 2}\; \text{mod}\;  \kappa N\Big)~,
\ee
for an integer $s$ such that $0 \leq s<\kappa$, and with the constraint that $\ell <N$. For $N$ odd, we see that there is a unique solution with $s=0$, since the constraint on $\ell$ reads $\ell=N s <N$, and thus $\ell=0$.
In appendix~\ref{app:proof_of_fixed_point}, we prove the existence and uniqueness of the fixed point under the generator $\gamma_0\in\mathbb Z_{N}$ for all integers  $N$, with $K$ a multiple of $N$. 

Returning to the case $N\geq 3$ prime, every nontrivial element $\gamma\neq 0$ is a generator, and thus we have shown:
\begin{equation}\label{U_gamma(B)}
   \langle \CU^{\gamma}(\CC)\rangle_{T^2} =\delta_{K \text{ mod } N,0} 
\end{equation}
for all $\gamma\in\mathbb Z_{N} \setminus \{0\}$. For the identity, we trivially  have $ \CU^{0}(\CC)=\mathbbm 1$ and thus $\langle \CU^{0}(\CC)\rangle_{T^2}= \IW[\SU(N)_K]$.

In order to determine the expectation value $\langle \CU^{\gamma}(\CC)\rangle_{\Sigma_g}$ for arbitrary genus, we need to evaluate the handle-gluing operator~\eqref{handlegluing_u} at the 
fixed point. This is done by combining \eqref{B-op-genus-g} with \eqref{diff_ell_FP}, such that:%
\footnote{In the second equation, we use the identity
\begin{equation}\nn
    \prod_{1\leq a<b\leq N}\sin^{2}\left(\frac{\pi(a-b)}{N}\right)=(2^{1-N}N)^{N}~,
\end{equation}
which in fact holds for any $N$. To prove this, note that:
\begin{equation}\nn
    \prod_{1\leq a<b\leq N}\sin^{2}\left(\frac{\pi(a-b)}{N}\right)=\prod_{d=1}^{N-1}\sin\left(\frac{\pi d}{N}\right)^{2(N-d)}.
\end{equation}
Using $\sin(\pi-\theta)=\sin(\theta)$, we can rearrange the factors, such that it gives $N$ products of $\prod_{d=1}^{N-1}\sin\left(\frac{\pi d}{N}\right)=2^{1-N}N$, which is a standard identity.}   
\begin{equation}\label{H-at-fixed-u}
    \mathcal{H}(\hat{u}_{\text{fixed}}) =  N \left(\frac{K}{2^{N}}\right)^{N-1} \prod_{1\leq a<b\leq N}\sin^{-2}\left(\frac{\pi(a-b)}{N}\right)=\left(\frac{K}{N}\right)^{(N-1)},
\end{equation}
which is an integer by the assumption that $N|K$. 
Since $\hat{u}_{\text{fixed}}$ is fixed under the generator $\gamma_0$ of $\mathbb Z_{N}$, it is also fixed under the action of $\CU^{n\gamma_0}$ for $n = 1,2, \dots, N-1$. Therefore, we have 
\begin{equation}\label{HG-fixed}
     \langle \CU^{n\gamma_0}(\CC)\rangle_{\Sigma_g} =
        \mathcal{H}(\hat{u}_{\text{fixed}})^{g-1}=\left(\frac{K}{N}\right)^{(g-1)(N-1)}~,
\end{equation}
for all $n=1,\dots, N-1$, while for $n=0$ it is simply $\langle 1\rangle_{\Sigma_g}$.

\sss{Fixed points for general $N$.} For each divisor $d$ of $N$, we have a cyclic subgroup $\mathbb Z_d$ of $\mathbb Z_{N}$. If the $\SU(N)_K$ theory is non-anomalous, we can discretely gauge the full centre $\mathbb Z_{N}$, or indeed any of its subgroups $\mathbb Z_{d}$. If the $\SU(N)_K$ theory is anomalous, there can still be a non-anomalous subgroup $\mathbb Z_d$ that we can discretely gauge -- we will spell it out more explicitly in section~\ref{subsec:thoot anom SUNK} below. 
Especially including background fields for the higher form symmetries, this picture has been shown in other contexts to lead to a rich web of theories related by gaugings of discrete symmetries~\cite{Aharony:2013hda,Tachikawa:2017gyf}. 

 It is easy to see that the order of the group generated by $\gamma= n\gamma_0$ is the smallest integer $d$ such that $N|dn$, which is given by  $d=\frac{N}{\gcd(n,N)}$. 
Therefore:
\begin{equation}\label{Z_subgroups}
    \langle n\gamma_0\rangle \cong\mathbb Z_{d(n)}~,\qquad  d(n)\equiv\tfrac{N}{\gcd(n,N)}~,
\end{equation}
which gives an explicit relation between the integers $n$ and divisors $d$ of $N$. See figure~\ref{fig:Z12diagram} for an explicit example.
\begin{figure}[t]\centering
\includegraphics[scale=1]{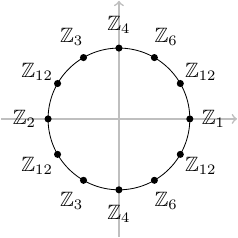}
	\caption{The full list of subgroups of the cyclic group $\mathbb Z_{12}$  is $\mathbb Z_1$, $\mathbb Z_2$, $\mathbb Z_3$, $\mathbb Z_4$, $\mathbb Z_6$ and $\mathbb Z_{12}$. They are generated by elements $n\in \Z_{12}$  as visualised on this    clock -- for instance, $n=3$ generates $\Z_4\subset \Z_{12}$.}\label{fig:Z12diagram}
\end{figure}

Before studying the orbits in general, let us first count the number of fixed points under the subgroup $\langle n\gamma_0\rangle$ -- that is, the orbits of length 1. Clearly, this number can only depend on the divisor $d(n)$~\eqref{Z_subgroups}. For $N$ prime, the only divisors are $d=1$ and $d=N$, which gives the previous result \eqref{U_gamma(B)}. When $N$ has a nontrivial divisor $d\not\in\{1,N\}$, then there can be multiple fixed points under a $\mathbb Z_d$ subgroup, as long as the CS level $K$ is a multiple of $d$ as well. In appendix~\ref{app:fp_under_subgroup}, we prove that the number of fixed points under the action of a $\mathbb Z_d$ subgroup is given by the simple formula:%
\footnote{This number is simply the Witten index \protect\eqref{Witten_index_SU(Nc)K} of the $\SU(\tfrac{N}{d})_{\frac Kd}$ theory, but this appears to be a coincidence, \ie there does not seem to be any deeper meaning to this fact.}
\begin{equation}\label{fixed_points_d}
 \langle \CU^{n\gamma_0}(\CC)\rangle_{T^2}=\binom{\frac {K}{d(n)}-1}{\frac{N}{d(n)}-1}\kron{K}{d(n)}~.
\end{equation}
Inserting the explicit $n$-dependence from \eqref{Z_subgroups}, the number of fixed points \eqref{fixed_points_d} reads:
\begin{equation}\label{fixed_points_n_gamma}
    \langle \CU^{n\gamma_0}(\CC)\rangle_{T^2}=\binom{\frac{K}{N}\gcd(n,N)-1}{\gcd(n,N)-1}\, \delta_{K\text{ mod }\frac{N}{\gcd(n,N)},0}~.
\end{equation} 
Note that when $n=1$, then $\frac{N}{\gcd(n,N)}=N$, such that the expression \eqref{fixed_points_n_gamma} is zero unless $N|K$,  for which the binomial coefficient evaluates to 1, giving back \eqref{U_gamma(B)} which holds for any value of $N$ if $\gamma=\gamma_0$.
Moreover, if $N$ is prime  and $n=1,\dots, N-1$ is arbitrary, then $\frac{N}{\gcd(n,N)}=N$, while $\gcd(n,N)=1$ such that we arrive at~\eqref{U_gamma(B)} again, in agreement with the fact that $n\gamma_0$ is a generator of $\mathbb Z_{N}$. 
Finally, \eqref{fixed_points_n_gamma} is valid as well for $n=0$, for which we get the trivial subgroup $\langle 0\gamma_0\rangle \cong \mathbb Z_1$. All Bethe vacua are fixed under the trivial subgroup, and we get back the $\SU(N)_K$ Witten index \eqref{Witten_index_SU(Nc)K}.

The result~\eqref{fixed_points_d} now allows us to perform an exact test of the 3d modularity expectation~\eqref{AeqB modularity SUNK}. Indeed, the insertions of $\CU^{\gamma}(\CC)$~\eqref{fixed_points_d} and $\Pi^{ \gamma}$~\eqref{genus-1-A-ops} on $T^2$ agree precisely for all $\gamma$, as expected. This is due to the convention of the gauge flux operator~\eqref{SU(N)_flux_op_def}, where the overall sign $(-1)^{N-1}$ has been inserted to eliminate any relative minus signs. Thus, assuming 3d modularity, we have now established~\eqref{genus-1-A-ops} as well.

\sss{Orbit structure.}
As outlined in section~\ref{sec:bckgrd}, the orbit structure under $\mathbb Z_{N}$ is constrained by the mixed 't Hooft anomaly $\CA$. In particular, any anomaly-free subgroup $\mathbb Z_d$ admits an orbit of dimension $N/d$. However, this does not answer the questions of how many orbits exist or, more specifically, of how many orbits of dimension $N/d$ exist. 

The first question can be answered using the Cauchy--Frobenius lemma.\footnote{This lemma is also known as the orbit-counting theorem, or the \emph{lemma that is not Burnside's}~\protect\cite{neumann1979lemma}.}
It asserts that the total number of orbits, $|\SBE/\mathbb Z_N^{(0)}|$, is given by the sum over $\gamma\in\mathbb Z_N^{(0)}$ of the fixed points under the $\gamma$-action, divided by the order of the group.\footnote{Recall the lemma: Let $G$ be a finite group that acts on a set $X$. The lemma asserts that $|X/G|=\frac{1}{|G|}\sum_{g\in G}|X^g|$, where $X^g$ are the $g$-fixed points in $X$. The proof goes as follows: write $\sum_{g\in G}|X^g|=\sum_{x\in X}|G_x|$, where $|G_x|\subset G$ is the stabiliser subgroup of $x\in X$. This identity simply re-expresses the sum over all $(g,x)\in G\times X$ with $g\cdot x=x$. Using the orbit-stabiliser theorem $|G_x|=|G|/|G\cdot x|$, we write this as a sum over orbits $G\cdot x$. Finally, every element $x$ in a given orbit has the same orbit length,
\begin{equation}
\sum_{g\in G}|X^g|=\sum_{x\in X}|G_x|=|G|\sum_{x\in X}\frac{1}{|G\cdot x|}=|G|\sum_{A\in X/G}\sum_{x\in A}\frac{1}{|A|}=|G||X/G|~.
\end{equation}}
Since the number of fixed points are simply given by the expectation values of the $\CU^{\gamma}(\CC)$ operator on the two-torus, we have 
\begin{equation}\label{burnside}
    \left|\SBE/\mathbb Z_N^{(0)}\right|=\frac 1N \sum_{\gamma\in\mathbb Z_N^{(0)}} \langle \CU^{\gamma}(\CC)\rangle_{T^2}~.
\end{equation}
Sums of these kinds occur when gauging the 0-form symmetry -- of course, on $T^2$ we essentially need two such insertions, which we will discuss in detail below. Amusingly, by 3d modularity, the sum~\eqref{burnside} is also equal to the sum over  $\langle\Pi^{\gamma}\rangle$, namely the gauging of $\Z_N^{(1)}$ given by~\eqref{T2SU(Nc)_K/Z_Nc} with $s=0$, so that the total number of $\Z_N^{(0)}$ orbits is also the number $\left|\SBE^{\vartheta_{s=0}}\right|$ of ground states  in the $\theta_s=0$ universe of the decomposition imposed by $\Z_N^{(1)}$: 
\be
  \left|\SBE/\mathbb Z_N^{(0)}\right|   = \left|\CJ_{N, K}^{s=0}\right|~.
\ee

Labelling the group elements by $\gamma=n\gamma_0$, the sum~\eqref{burnside} can be written as a sum over $n$. For an arbitrary integer $N$, due to~\eqref{fixed_points_d}, some terms inside the sum can be identical for different values of $n$ -- this occurs for any two integers $n,n'$ related by $d(n)=d(n')$.  Indeed, the sum over the full $\mathbb Z_{N}$ group collapses to a sum over the divisors $d$ of $N$ and $K$ only. We are thus interested in the distribution of subgroups generated by all elements of $\mathbb Z_{N}$. The number of elements of $\mathbb Z_{N}$ generating a $\mathbb Z_d$ subgroup for given $d$ is given by Euler's totient function $\varphi(d)$. It counts the positive integers up to a given integer $d$ that are relatively prime to $d$.\footnote{In other words, it is the number of integers $k$ in the range $1 \leq  k \leq d$ for which the greatest common divisor $\gcd(d, k)$ is equal to 1.} Clearly, $\varphi(1)=1$, and $\varphi(d)=d-1$ if $d$ is prime. The identity:
\begin{equation}\label{varphisum_n}
    \sum_{d|n}\varphi(d)=n~,
\end{equation}
due to Gauss, partitions any integer $n\in\mathbb N$ into the Euler totients of its positive divisors $d$. We list some useful properties of $\varphi$ in appendix~\ref{app:totients}.
This allows us to simplify the sum over $n$ as a sum over divisors $d$ of $N$ and $K$,
\begin{equation}\label{UBsum}
      \sum_{n=0}^{N-1} \langle \CU^{n\gamma_0}(\CC)\rangle_{T^2}=\sum_{ d|\gcd(N,K)}\varphi(d)\binom{\frac Kd-1}{\frac{N}{d}-1}~.
\end{equation}
As a consistency check, let $N$ be prime and $N|K$. Then the only divisors are $d=1$ and $d=N$, and using $\varphi(N)=N-1$ we obtain $\langle 1\rangle_{T^2} +N-1$, which agrees with \eqref{U_gamma(B)} since there are $N-1$ non-identity elements in $\mathbb Z_{N}$. 
Assuming again the 3d modularity relation~\eqref{AeqB modularity SUNK},  \eqref{UBsum} gives us furthermore the result~\eqref{ZT3 ZN1 Noddprime} for the 1-form symmetry gauging in the case where $N$ is prime.

The expression~\eqref{UBsum} is always a multiple of $N$. This is not clear from the sum itself, and generally no summand inside the sum~\eqref{UBsum} is divisible by $N$. Rather, the divisibility is a consequence of the Cauchy--Frobenius lemma~\eqref{burnside}:
\begin{equation}\label{numb-orbits}
    \left|\SBE/\mathbb{Z}_N^{(0)}\right|=\frac 1N \sum_{ d|\gcd(N,K)}\varphi(d)\binom{\frac Kd-1}{\frac{N}{d}-1}~.
\end{equation}
Note that this formula holds for any value of $N$ and $K$. For instance, in the maximally anomalous case where $K$ and $N$ do not share any divisors other than $d=1$, we demonstrated above that all orbits are of length $N$. This is consistent with~\eqref{numb-orbits}, which implies that there are $\left|\SBE\right|/N$ orbits of equal length, which must therefore be of length $N$.%
\footnote{The integer-valued number-theoretic function~\protect\eqref{numb-orbits} is well-known in the combinatorics literature: 
The right-hand-side of~\protect\eqref{numb-orbits} is known  as the number of cycles of Bulgarian solitaire~\protect\cite{brandt1982,akin-davis1985}, and it is also the number of \emph{necklaces} of type $K$, with $N$ 0's and $K-N$ 1's~\protect\cite{Sloane2014ANO}. See also the OEIS sequence A047996.}

The Cauchy--Frobenius formula does \emph{not} tell us directly how many orbits of a given dimension exist, beyond the total number of orbits given by~\eqref{numb-orbits}. Indeed, we can imagine partitioning the orbits $\{\h\omega \in \SBE/\mathbb{Z}_N^{(0)}\}$ into all orbits of the same length, or equivalently, into all orbits with fixed stabiliser. Since the stabiliser orders -- or equivalently the orbit lengths -- are all divisors of $N$, this suggests that we can rewrite~\eqref{numb-orbits} into a sum over orbits of length $d$. We will explain below that this is not a simple rearrangement of the sum -- that is, the numbers of orbits of fixed lengths cannot, in general,  be expressed in terms of simple binomial factors.

\sss{Insertion of zero-form symmetry operators.}
Let us now consider the insertion of arbitrary line operators for the zero-form symmetry, as set up in section~\ref{sec:bckgrd}. Consider first the case $g=1$, which has a spatial direction $\CC$ and a Euclidean time direction $\tilde\CC$. According to~\eqref{torus_C_gamma}, the general insertion of $\CU^\gamma(\CC) \;\CU^{\gamma'}(\t\CC)$ on $T^2$ counts the number of fixed points under both $\gamma$ and $\gamma'$. 
This amounts to solving a slightly modified counting problem: if some Bethe vacuum $\hat u$ is fixed under $\gamma\in\mathbb Z_{N}$, it is consequently also fixed under $n\gamma$, for any $n$. Thus if $n'$ is a multiple of $n$, then the fixed points under both $n\gamma_0$ and $n'\gamma_0$ are simply the ones fixed under $n\gamma_0$ (and vice versa). But even if neither $n$ nor $n'$ is a multiple of the other, they can still be embedded into a larger subgroup of $\mathbb Z_{N}$.\footnote{As an example, consider $N=12$, where the subgroups are depicted in figure~\protect\ref{fig:Z12diagram}. Then $n=4$ and $n'=6$ generate a $\mathbb Z_3$ and $\mathbb Z_2$ subgroup of $\mathbb Z_{12}$, respectively. However,  even though the orders of the subgroups are coprime, they both sit in a $\mathbb Z_{\lcm(2,3)}=\mathbb Z_{6}$ subgroup of $\mathbb Z_{12}$, generated of course by $2\gamma_0$. The fixed points under both $4\gamma_0$ and $6\gamma_0$ thus include the fixed points under $2\gamma_0$. But these are in fact all fixed points that are shared by $4\gamma_0$ and $6\gamma_0$. This is because the group generated by $4\gamma_0$ and $6\gamma_0$ necessarily contains $2\gamma_0$, and thus if $\hat u$ is fixed by both $4\gamma_0$ and $6\gamma_0$, then it is also fixed by $2\gamma_0$. }

To make this precise, consider two group elements $n\gamma_0$ and $n'\gamma_0$, labelled by  $n,n'=1,\dots, N-1$. Then $\langle n\gamma_0\rangle$ and $\langle n'\gamma_0\rangle$ generate subgroups  of order $\frac{N}{\gcd(n,N)}$ and $\frac{N}{\gcd(n',N)}$. Both of these groups sit in the larger group $\langle \gcd(n,n')\gamma_0\rangle$ inside $\mathbb Z_{N}$, in the sense that the latter contains both elements $n\gamma_0$ and $n'\gamma_0$. The subgroup $\langle \gcd(n,n')\gamma_0\rangle={\rm H}_{(n\gamma,n'\gamma)}^{(0)}$ is precisely the one described above equation~\eqref{0-sym-fixed-n}.  Therefore, the vacua that are fixed under both $n\gamma_0$ and $n'\gamma_0$ are exactly those fixed by $\gcd(n,n')\gamma_0$, hence we have:
\bea\label{UBUCcorr}
  \langle\CU^{n\gamma_0}(\CC) \;\CU^{n'\gamma_0}(\t\CC)\rangle_{T^2}=\langle \CU^{\gcd(n,n')\gamma_0}(\CC)\rangle_{T^2}~.
\eea
We can alternatively understand this identity from the geometry of the two-torus inside $T^3=T^2\times S^1$. We can associate to the operator $\CU^{n\gamma_0}(\CC) \;\CU^{n'\gamma_0}(\t\CC)$ the  cycle $n\CC+n'\tilde\CC$ in the first homology $H_1(T^2,\mathbb Z)$. The mapping class group of $T^2$ allows us to simplify this cycle to $\gcd(n,n')\CC$, which is again associated to the operator  $\CU^{\gcd(n,n')\gamma_0}(\CC)$.

Using \eqref{fixed_points_n_gamma}, we can make \eqref{UBUCcorr} completely explicit. Since the gcd is associative, it follows that:
\begin{equation}\label{UBUCexplicit}
     \langle\CU^{n\gamma_0}(\CC) \;\CU^{n'\gamma_0}(\t\CC)\rangle_{T^2}=\binom{\frac{K}{N}\gcd(n,n',N)-1}{\gcd(n,n',N)-1}\, \delta_{K\text{ mod }\frac{N}{\gcd(n,n',N)},0}~.
\end{equation}

The insertion of an arbitrary topological line for $\Z_N^{(0)}$ on $\Sigma_g$ follows analogously. When inserting the generic topological operator $\CU^{\boldsymbol{\gamma}}$~\eqref{def CU gamma gen}, we sum over the fixed points $\SBE^{(\boldsymbol{\gamma})}$, where $\boldsymbol{\gamma}= (\gamma_i) \in \mathbb Z_N^{2g}$. For $g>1$, the insertion furthermore involves the evaluation of the handle-gluing operator $\CH$ on the fixed points, as in~\eqref{gen 0form insertion corr}. Unlike in the case where there is only one fixed point and we were able to evaluate~\eqref{H-at-fixed-u}, we do not have such a simple formula available in the case of multiple fixed points. Indeed, we do not expect such a simple formula for arbitrary $\boldsymbol{\gamma}$ -- after all, even for $\boldsymbol{\gamma}$ the trivial element,  $\langle\CU^{\boldsymbol{\gamma}=0}\rangle_{\Sigma_g}=\langle 1 \rangle_{\Sigma_g}$ is given by the Verlinde formula~\eqref{SU(Nc)K-partition},   for which no `simpler' algebraic expression is known.

\subsubsection{'t Hooft anomaly and the allowed \texorpdfstring{$(\SU(N)/\Z_r)_K$}{SUNZdK} CS theories}\label{subsec:thoot anom SUNK}

The 't Hooft anomaly~\eqref{anomlay-factor} can be computed from the twisted superpotential~\eqref{tsp-for-SU(Nc)K}. It reads:
\be
\CA(\gamma_{(0)},\gamma_{(1)}) = \gamma_{(1)} K  \gamma_{(0)} \text{ mod } 1 = {(N-1) K\ov N} n m  \text{ mod } 1~,
\ee
hence the anomaly coefficient that enters in~\eqref{CAnm ZN} is $\mathfrak{a}= -K \text{ mod } N$. 
Here we parameterise $\Z_N^{(1)}\oplus \Z_N^{(0)}$ by the two integers $m, n \in \Z_N$, with $\gamma_{(1)} = m \gamma_0$ and $\gamma_{(0)} = n \gamma_0$. We thus have the anomalous commutator: 
\be\label{mixed_anomaly}
\Pi^{m \gamma_0} \CU^{n \gamma_0} =e^{- 2\pi i  m n {K\ov N}} \, \CU^{m \gamma_0}\Pi^{n \gamma_0}~.
\ee
This directly implies that certain mixed correlation functions must vanish:
\begin{equation}\label{mixed_anomaly_corr}
   mn \frac{K}{N} \not \in\mathbb Z \qquad \Longrightarrow\qquad 
    \left\langle \Pi^{m \gamma_0} \CU^{n\gamma_0}(\CC)\right\rangle_{\Sigma_g}=0~.
\end{equation}
Note that having an anomalous commutator is a sufficient condition for the correlator to vanish, but not a necessary one. In fact, 3d modularity implies that the mixed correlator in \eqref{mixed_anomaly_corr} vanishes if and only if $d=N/\gcd(m,n,N)$ is not a divisor of $K$, that is, if $\gcd(m, n, N)K/N$ is not an integer -- this will be discussed in subsection~\ref{subsec:T3 modular anom} below. 

It is interesting to look for subsets of topological operators that are mutually commuting, so that the corresponding symmetries can be gauged. These correspond to $n$, $m$ such that:
\be\label{mn nonanomalous lines}
 m n {K\ov N}\in \Z \qquad \Leftrightarrow \qquad m n\in \frac{N}{\gcd(N,K)}\mathbb Z~.
\ee
As discussed above (see~\eqref{Z_subgroups} in particular), each element $n\in \Z_N$ generates a subgroup $\Z_{d(n,N)}\subseteq \Z_N$. Then, one finds that the possible non-anomalous subgroups $\Z_{d}^{(1)}\oplus \Z_{d'}^{(0)} \subseteq \Z_N^{(1)} \oplus \Z_N^{(0)}$ are:
\begin{equation}
    \left\{\mathbb Z_d^{(1)}\oplus \mathbb Z_{d'}^{(0)}\, \Big| \, d,d'|N \text{ and } dd'\in \gcd(N,K) N \, \mathbb Z\right\}~.
\end{equation}
We are particularly interested in the case $d=d'\equiv r$, so that $\Z_r$ can be interpreted as a one-form symmetry in 3d. The non-anomalous subgroups are thus $\mathbb Z_r$ with $r^2\in \gcd(N,K) N\, \mathbb Z$, that is, $\gcd(N,K)N/r^2\in\mathbb Z$. We can equivalently express this condition as $KN/r^2\in\mathbb Z$, since the divisors of $K$ not shared with $N$ are irrelevant for the integrality. This gives the set of subgroups with mutually commuting operators:
\begin{equation}
     \left\{\mathbb (\Z_r^{(1)})_{\rm 3d}\subseteq (\Z_N^{(1)})_{\rm 3d}\, \Big| \, r|N \text{ and } \tfrac{KN}{r^2}\in\mathbb Z\right\}~,
\end{equation}
Note that having a non-anomalous $(\Z_r^{(1)})_{\rm 3d}$ is equivalent to the existence of the corresponding Chern--Simons theory on spin three-manifolds:
\be\label{allCStheories that exist}
{K N\ov r^2}\in \Z \qquad \Leftrightarrow \qquad \text{3d $\CN=2$ CS } (\SU(N)/\Z_r)_K\; \text{ exists.}
\ee
See~{\it e.g.}~\cite{Argurio:2024oym, Hsin:2020nts} for recent discussions of this condition.

\subsection{Mixed correlators and a modular anomaly on \texorpdfstring{$T^3$}{T3}}\label{subsec:T3 modular anom}
It is interesting to compute the action of the $\Z_N^{(1)}$ operators on Bethe states fixed by (part of) $\Z_N^{(0)}$. In the simplest case, consider the case where $\h u^{(\Z_N)}$ is fixed by the full $\Z_N^{(0)}$. Such vacua only occur in the non-anomalous case, $\kappa\equiv {K\ov N}\in \Z$, and the calculation~\eqref{ell-value-FP} in Appendix~\ref{app:proof_of_fixed_point} shows that:
\be
\Pi^{\gamma_0} \ket{\h u^{(\Z_N)}} = (-1)^{(N-1)(\kappa-1)} \,  \ket{\h u^{(\Z_N)}}~.
\ee
This directly implies that:
\be
\left\langle \Pi^{\gamma_0} \CU^{\gamma_0}(\CC)\right\rangle_{T^2} = \sum_{\h u \in \SBE^{(\gamma_0)}} \bra{\h u} \Pi^{\gamma_0} \ket{\h u} =\bra{\h u^{(\Z_N)}}  \Pi^{\gamma_0} \ket{\h u^{(\Z_N)}} = (-1)^{(N-1)(\kappa-1)}~.
\ee
Hence we have:
\be\label{T3 modular anom nmeq1}
\left\langle \Pi^{\gamma_0} \CU^{\gamma_0}(\CC)\right\rangle_{T^2} = (-1)^{(N-1)(\kappa-1)}\, 
\left\langle \CU^{\gamma_0}(\CC)\right\rangle_{T^2}~.
\ee
On $T^3$, we should be able to map the 1-cycle $\CC+[S^1_A]$ to the 1-cycle $\CC$ at no cost by a 3d modular transformation, hence the non-trivial sign when $N$ and $\kappa$ are both even is interpreted as a non-trivial 3d modular anomaly. (Note that the simplest instance of this anomaly is for $N=2$ and $K\in 4\Z$, as discussed already in~\eqref{T3 anomaly SU2 intro} in section~\ref{sec:SU(2)warmup}.)

More generally, we may consider the action of $\Pi^{m\gamma_0}$ on the Bethe states fixed by $\Z_d^{(0)}\subseteq \Z_N^{(0)}$:
\be
\Pi^{m\gamma_0} \ket{\h u^{(\Z_d)}} =  (-1)^{m(N-1)} e^{-2 \pi i m {\ell\ov N}}  \ket{\h u^{(\Z_d)}}~.
\ee
Here, the action is generally by a $N$-th root of unity, and different states is the same orbit $\h \omega$ with ${\rm Stab}(\h \omega)=\Z_d$ can have different values of $\ell$ (they differ by integer multiples of $K$ mod $N$, as implied by~\eqref{Z_Nc_shift_Bv}). The explicit computation of mixed correlations functions on $T^3$ is therefore more involved. We find: 
\be\label{3d modular anom full}
\big\langle \Pi^{m \gamma_0} \CU^{n\gamma_0}(\CC)\big\rangle_{T^2} = (-1)^{m n \left({K\ov N}-1\right)(N-1)}
\big\langle   \CU^{{\rm gcd}(m,n)\gamma_0}(\CC)\big\rangle_{T^2}~,
\ee
which naturally generalises~\eqref{T3 modular anom nmeq1}. 
Here the correlators on both sides vanish whenever $m n ({K\ov N}-1) \notin\Z$, which can be shown to follow from~\eqref{UBUCexplicit}. For the non-zero correlators, we find the non-trivial sign in~\eqref{3d modular anom full} -- this gives us the general form of the 3d modular anomaly on $T^3$.

This anomaly should be understood as a 3d mixed anomaly  between gravity (the 3d Lorentz group) and the one-form symmetry $(\Z_N^{(1)})_{\rm 3d}$, or more generally between gravity and a non-anomalous subgroup $(\Z_r^{(1)})_{\rm 3d}$ for $r|N$. The corresponding four-dimensional anomaly theory takes the form~\cite{Hsin:2018vcg}:
\be\label{gravity anomaly full}
 S_{\text{grav anomaly}} = 2\pi h \int_{\mathfrak{M}_4}  w_2(\mathfrak{M}_4) \cup B^{(r)}~, \qquad\qquad h\equiv {(K-N)N(r-1)\ov 2 r^2} \text{ mod } 1~,
\ee
where $w_2(\mathfrak{M}_4)\in H^2(\mathfrak{M}_4, \Z_2)$ is the second Stiefel–Whitney class of the 4-manifold and $B^{(r)}$ is the  $(\Z_r^{(1)})_{\rm 3d}$  background gauge field. This is a $\Z_2$-valued anomaly: for $(\Z_r^{(1)})_{\rm 3d}$ a non-anomalous subgroup, the anomaly coefficient  $h$ in~\eqref{gravity anomaly full} takes the values $h=0$ or $\half$, as one can readily check {\it e.g.} using~the property~\eqref{allCStheories that exist} that ${K N\ov r^2}\in \Z$. In fact, one can check that:
\be\label{h values explicit}
h = \begin{cases}
    \half \qquad &\text{if } r \text{ is even and } \tfrac{(K-N)N}{r^2} \text{ is odd}\\
    0 \quad & \text{otherwise.}
\end{cases}
\ee
Given that $(K-N)N/r^2$ is odd, the condition on $r$ being even is equivalent to $N$ being even. For the $\SU(N)_K$ theories with a full non-anomalous $(\Z_N^{(1)})_{\rm 3d}$ one-form symmetry, we have $N|K$, such that $r$ is a divisor of $K$ as well. In this case, we can simplify the condition:
\be\label{h values explicit N|K}
N|K: \qquad h = \begin{cases}
    \half \qquad &\text{if } N \text{ is even,  } \tfrac{N}{r} \text{ is odd and } \tfrac{K}{r} \text{ is even,}\\
    0 \quad & \text{otherwise.}
\end{cases}
\ee
These conditions for each non-anomalous $(\Z_r^{(1)})_{\rm 3d}$ are important even when we gauge the full $(\Z_N^{(1)})_{\rm 3d}$ symmetry, since the latter probes all possible subgroups. We will explain in detail in section~\ref{sec:gauging 1-form sym on T3} and appendix~\protect\ref{app:calculationofABsum} that the anomaly coefficient $h$ encodes the sign~\eqref{3d modular anom full} in the mixed correlators, that is, $h=\frac12$ if and only if the sign is $-1$.

Physically, this anomaly is best understood by considering the spin of the topological lines in the underlying bosonic 3d TQFT~\cite{Hsin:2018vcg}, which is the $\CN=0$ $\SU(N)_{k}$ CS theory with $k=K-N$. As we will discuss further in section~\ref{sec:gauging 1-form sym in 3d}, the $\Z_N$ one-form symmetry is generated by a particular Wilson line in 3d, the abelian anyon denoted by $a\cong \CU^{\gamma_0}$ that satisfies $a^N=1$ under fusion.  Given the $(\Z_N^{(1)})_{\rm 3d}$ 't Hooft anomaly $\mathfrak{a}= -K \, {\rm mod}\, N$, one can show that the 3d spin of the invertible symmetry line $a^n \cong \CU^{n \gamma_0}$ is given by:
\be
h[a^n] = {(K-N)n (N-n)\ov 2N} \, \text{ mod } 1~.
\ee
The line $a^n$ generates the subgroup $(\Z_r^{(1)})_{\rm 3d}$ with $r={N/n}$, assuming ${\rm gcd}(N, n)=n$ without loss of generality. Hence we have:
\be
h[a^n] = {(K-N)N(r-1)\ov 2 r^2} \text{ mod } 1~.
\ee
This satisfies $h[a^n]\in \half \Z$ if and only if $(\Z_r^{(1)})_{\rm 3d}$ is non-anomalous. Furthermore, we directly notice that $h[a^n]$ exactly reproduces the gravity-$(\Z_r^{(1)})_{\rm 3d}$  anomaly coefficient $h$ in~\eqref{gravity anomaly full}, not coincidentally,  hence the generating line $a^n$ has half-integer spin depending on the parity of $r$ and $(K-N)N/r^2$ exactly as in~\eqref{h values explicit}. 

The generating line  $a^n$ having spin $h\in \half \Z$ is necessary for having a consistent gauging of the 3d TQFT.  While the underlying $\CN=0$ CS theory $\SU(N)_{K-N}$ is a bosonic TQFT, the CS theory $(\SU(N)/\Z_r)_{K-N}$ that results from the $\Z_r$ one-form gauging is actually a spin-TQFT if and only if $h=\half$, as explained in~\cite{Hsin:2018vcg}. The fact that gauging a one-form symmetry in a bosonic TQFT results in a spin-TQFT is the most physical way to understand the existence of the mixed anomaly~\eqref{gravity anomaly full} -- see {\it e.g.}~\cite{Hsin:2018vcg, Cordova:2018acb}. Thus, we also find the condition for the bosonic CS theory after gauging to be fermionic -- that is, a spin-TQFT:\footnote{We may again replace `$r$ is even' by '$N$ is even'}
\begin{equation}\label{statement_mod_anomaly}
  \CN=0\; \, (\SU(N)/\mathbb Z_{r})_{K-N} \text{ is a spin-TQFT } \; \Leftrightarrow \;   r \text{ is even and } \tfrac{(K-N)N}{r^2} \text{ is odd}~.
\end{equation}
Correspondingly, the 2d WZW$[G_{k=K-N}]$ models for $G=\SU(N)/\Z_r$ are fermionic CFTs in those cases. For $r=N$, this is the well-known statement that the $\CN=0$ $\PSU(N)_{k=K-N}$ CS theory is a spin-TQFT if only if $N$ is even and ${k\ov N}$ is odd~\cite{Dijkgraaf:1989pz}.
The result~\eqref{statement_mod_anomaly} is also in agreement with the literature on 2d (bosonic) WZW$[G]$ for $G$ non-simply-connected~\cite{Gepner:1986wi,Dijkgraaf:1989pz,Felder:1988sd,Numasawa:2017crf,Alvarez-Gaume:1986rcs}. When $h[a]={1\ov 2}$, we should actually replace $a^n$ by $\t a^n= a^n \psi$, where $\psi$ is the transparent line that couples to the spin structure of the 3-manifold. Then $h[\t a^n]=0$ and we can proceed with anyon condensation as in the bosonic case~\cite{Hsin:2018vcg}. In this sense, we can always trivialise the $\Z_2$ anomaly~\eqref{gravity anomaly full} because the $\psi$ line always exists in our supersymmetric field theories. Nonetheless, the interpretation in terms of Bethe vacua is non-trivial, as we start with Bethe vacua that are all bosonic and build Bethe states in the $G=\t G/\Gamma$ theory that can be fermionic, thus affecting the supersymmetric index counting, sometimes in intricate ways~\cite{CFKK-24-II}. In this paper, we have only counted the Bethe states irrespective of their eigenvalue under $(-1)^{\rm F}$, which does not give us an index in general. When this `modular anomaly' vanishes, however, all Bethe states are bosonic and thus the naive sum over the Bethe states of the $G$ theory does give us the correct twisted index.

\subsection{The \texorpdfstring{$\PSU(N)_K$}{PSUNK} twisted index for \texorpdfstring{$N$}{N} prime}\label{sec:Ncprime}

In this subsection, we study the case of $N$ an odd prime number.%
\footnote{The case of an even prime number, $N=2$, was discussed in the introduction. It involves additional subtleties common to all even $N$, which we will address in the next subsections.} 
 For $N$ prime, $\Z_N$ has no non-trivial subgroup and every non-zero element $n\in \Z_N$ is a generator.  
 Then, as explained above, there is a single fixed point if $\mathfrak{a}\equiv -K \, {\rm mod}\, N=0$ for every $\gamma= n \gamma_0$, and no fixed point if $\mathfrak{a}\neq 0$. Indeed, this follows from our general discussion of the $\Gamma^{(0)}$ orbits of Bethe vacua in section~\ref{sec:tHooftanomalies_vacuum}. For $N$ prime, we either have no 't Hooft anomaly ($\mathfrak{a}=0$) or the maximal anomaly  ($\mathfrak{a}\neq 0$); in the former case, we have a unique orbit of dimension $1$ and all the other orbits are of dimension $N$, while in the latter case all orbits must be of dimension $N$. This directly implies that the Witten index of the $\Z_N^{(0)}$-gauged $A$-model is given by:
 \be\label{ZT2ZN0 gauging N prime}
 Z_{T^3}\left[\SU(N)_K/\mathbb{Z}^{(0)}_{N}\right] = {1\ov N} \binom{K-1}{N-1}+ \delta_{\mathfrak{a}, 0}\left(-{1\ov N}+ N\right)~,
 \ee
 where we are just counting the number of ground states as in~\eqref{HS S1 gauged Gamma0}, including the twisted sectors. 
Indeed, for $\mathfrak{a}=0$ we have $(\binom{K-1}{N-1}-1)/N$ maximal orbits plus $N$ twisted sectors from the length-$1$ orbit, while for  $\mathfrak{a}\neq 0$ we only have $\binom{K-1}{N-1}/N$ maximal orbits.

Let us now gauge the $\Z_N^{(1)}$ and $\Z_N^{(0)}$ separately, before combining the two if $\mathfrak{a}=0$. The gauging of $\Z_N^{(1)}$ was already discussed in subsection~\ref{subsec ZN1 gauging}. In particular, the Witten index of the $\Z_N^{(1)}$-gauged $A$-model for $N \geq 3$ prime   is given in~\eqref{ZT3 ZN1 Noddprime}.  If we only gauge $\Z_N^{(0)}$ instead, we have:
\be\label{ZgZN0 gauging N prime}
Z_{\Sigma_g\times S^1}\left[\SU(N)_K/\mathbb{Z}^{(0)}_{N}\right] ={1\ov N^{2g-1}}\Big(\langle 1 \rangle_{\Sigma_g} + (N^{2g}-1)\langle  \CU^{\gamma_0}(\CC)\rangle_{\Sigma_g} \Big)~,
\ee
where $\CC$ is any basis element of $H_1(\Sigma_g, \Z)$. Here, we used the fact that, for $N$ prime, any insertion of the form~\eqref{def CU gamma gen} is equivalent to inserting $\CU^{ \gamma_0}(\CC)$ because $\Z_N$ has no non-trivial subgroups.  Note that we will not keep track of the dual $1$-form symmetry in what follows -- that is, we choose $C^{D}=0$  in the notation of section~\ref{sec:gauging 0 form sym}. For $g=1$, we directly see that~\eqref{ZgZN0 gauging N prime} reproduces~\eqref{ZT2ZN0 gauging N prime}.

\sss{The Witten index of the $\PSU(N)_K$ theory ($N$ prime).} For $K\in N\Z$, the 3d 1-form symmetry is non-anomalous and we can gauge it fully. We can thus compute the full partition function on $\Sigma_g\times S^1$ of the resulting $\PSU(N)_K$ theory, including the dependence on $\theta_s$, as:
\be\label{SigmagS1 PSUN N prime gen}
Z_{\Sigma_g\times S^1}\left[\PSU(N)_K^{\theta_s}\right] = {1\ov N^{2g}} \sum_{m=0}^{N-1} e^{2\pi i {sm \ov N}}\Big(\langle \Pi^{m \gamma_0}\rangle_{\Sigma_g}+(N^{2g}-1)\langle \Pi^{m \gamma_0}\CU^{\gamma_0}(\CC)\rangle_{\Sigma_g}   \Big)~.
\ee
In particular, setting $g=1$, we find the Witten index of the $\PSU(N)_K$ $\CN=2$ CS theory with   $\theta_s$ turned on:
\be\label{PSUNcprimeindex}
\IW\left[\PSU(N)_K^{\theta_s}\right]= {1\ov N^2}\left(  \binom{K-1}{N-1} -1\right) +\delta_{0, s} N~.
\ee
It is a non-trivial mathematical fact that \eqref{PSUNcprimeindex} defines an integer for any prime number $N$ and $K\in N\Z$. Let us define the rational number $M_{N,K}$ through the decomposition:
\begin{equation}\label{Glaisher_const}
    \binom{K-1}{N-1} = 1+M_{N,K}N^2~.
\end{equation}
For $N\geq3$ prime, the numbers $M_{N,2N}$ are integers called Babbage  quotients \cite{babbage1819}, while $\frac{1}{N}M_{N,2N}$ for $N\geq 5$ prime are integers known as  Wolstenholme quotients~\cite{wolstenholme1862certain}. Crucially, if $N\geq 3$ is prime and $N|K$, then $M_{N,K}$ is an integer. This is a consequence of Glaisher's theorem \cite{glaisher1900congruences}, which we discuss in some detail in appendix~\ref{app:Glaisher} (see in particular Proposition~\ref{prop_prime}).
From \eqref{Glaisher_const}, we therefore have that:
\begin{equation}\label{IW_PSU_prime}
\IW[\PSU(N)^{\theta_s}_{K}] =  M_{N,K}+N\delta_{s,0}
\end{equation}
is the Witten index of the pure $\PSU(N)_{K}$ CS theory with theta angle $\theta_s$. This gives the seemingly arbitrarily defined integer $M_{N,K}$ a physical meaning as the Witten index of the $\PSU(N)_{K}^{\theta_s}$ theory with   $\theta_s\neq 0$.

\sss{The higher-genus twisted index of the $\PSU(N)_K$ theory ($N$ prime).}
 For any $g$, the twisted index~\eqref{SigmagS1 PSUN N prime gen} can be  written as:
 \bea\label{Zg PSUN N prime}
& Z_{\Sigma_g\times S^1}\left[\PSU(N)_K^{\theta_s}\right]=  {1\ov N^{2g-1}}\left(\sum_{\h u\in  \SBE^{\vartheta_s} } \CH(\h u)^{g-1} +(N^{2g}-1) \sum_{\h u\in  \SBE^{\vartheta_s} \cap \SBE^{(\gamma_0)}}  \CH(\h u)^{g-1}  \right)\\
&\qquad\qquad\qquad\quad=   {1\ov N^{2g-1}}\left(\sum_{\underline{l}\in  \CJ^{s}_{N,K} } \CH(\h u_{\underline{l}})^{g-1} +(N^{2g}-1)   \delta_{s,0}\left({K\ov N}\right)^{(g-1)(N-1)}  \right)~,
 \eea
where the set of Bethe vacua in the universe set by $\theta_s$ was defined in~\eqref{SBEthetas def}, and in the second line we used~\eqref{H-at-fixed-u} and the fact that the unique $\Z_N^{(0)}$ fixed point lives in the $\theta_s=0$ sector. One easily checks that setting $g=1$ in~\eqref{Zg PSUN N prime}, together with~\eqref{ZT3 ZN1 Noddprime}, reproduces~\eqref{PSUNcprimeindex}.

\sss{Explicit examples.} 
With some effort, we can work out some closed-form formulas as a function of $g$ in some special instances. Two such examples are:
\bea
 Z_{\Sigma_g\times S^1}[\PSU(3)_{6}^0]&=4^g~, \\
    Z_{\Sigma_g\times S^1}[\PSU(5)_{10}^0]&=\frac{1}{400} \Big(\!\!\!&&\!\!\!\!\big(9+4 \sqrt{5}\big) 4^{-6 g} 5^{-5 g} \big(\sqrt{5}-1\big)^{4 g} \big(3\ 20^{5 g} \big(\sqrt{5}-5\big)^{2 g}\\
    &&&+2 \big(5-\sqrt{5}\big)^{7 g}
   \big(5+\sqrt{5}\big)^{5 g}\big)+5 \big(9-4 \sqrt{5}\big) \big(1+\sqrt{5}\big)^{4 g}\\
  &&& \times \big(\frac{1}{2} \big(5+\sqrt{5}\big)\big)^{2 g}+32\ 5^{g+1}+75\ 2^{4 g+1}\Big)~.
\eea
Amazingly, the latter expression is an integer. For $g=0,\;1,\;2,\;3,\;4,\;5,\dots$, we have:
 \begin{equation}
    Z_{\Sigma_g\times S^1}[\PSU(5)_{10}^0]=1,
    \;10,\;1546,\;2062386,\;2958360826,\;4246815114466,\;\dots~.
\end{equation}
More generally, we checked numerically, in many cases, that the formula~\eqref{Zg PSUN N prime}  always returns an integer, as expected physically.

\subsection{Gauging the 3d one-form symmetry on \texorpdfstring{$T^3$}{T3}}
\label{sec:gauging 1-form sym on T3}

Let us now compute the Witten index -- that is, the $T^3$ partition function -- for the 3d $(\SU(N)/\Z_r)_K$  $\CN=2$ Chern--Simons theories, for any $N$ and $r$. We do this by gauging a non-anomalous $\Z_r^{(1)}\oplus \Z_r^{(0)}$ symmetry in the $A$-model on $T^2$.

\sss{Gauging $\Z_N^{(0)}$ on $T^2$.} We first consider gauging only $\Z_N^{(0)}$, which corresponds to summing over all insertions on $T^2$. Using~\eqref{UBUCexplicit}, we have:
\bea\label{ZN0 gauging T3 gen sum}
&Z_{T^3}\left[\SU(N)_K/\mathbb{Z}^{(0)}_{N}\right]\!\! &=& \,\, {1\ov N}\sum_{n, n'\in \Z_N}   \langle\CU^{n\gamma_0}(\CC) \;\CU^{n'\gamma_0}(\t\CC)\rangle_{T^2} \\
&&=& \,\, {1\ov N}\sum_{n, n'\in \Z_N} \binom{\frac{K}{N}\gcd(n,n',N)-1}{\gcd(n,n',N)-1}\, \delta_{K\text{ mod }\frac{N}{\gcd(n,n',N)},0}~,
\eea
This can be written more elegantly. Similarly to the single sum~\eqref{UBsum}, the double sum in~\eqref{ZN0 gauging T3 gen sum} can be reorganised as a sum over the divisors $d$ of $N$. Thus, for each divisor $d$, we need to count the pairs $(n,n')\in \Z_N^2$ such that ${\rm gcd}(n,n')\gamma_0$ generates $\Z_d \subseteq \Z_N^{(0)}$. This number is determined by Jordan's totient function $J_k(d)$ (which we review below)  for $k=2$, so that: 
\begin{equation}\label{UBUCsumNc}
  Z_{T^3}\left[\SU(N)_K/\mathbb{Z}^{(0)}_{N}\right] ={1\ov N} \sum_{ d|\gcd(N,K)}J_2(d)\binom{\frac Kd-1}{\frac{N}{d}-1}~.
\end{equation}
For  $N$ prime, in particular, we have $J_2(N)=N^2-1$ and $J_2(1)=1$,  and then one can easily check that~\eqref{UBUCsumNc} reproduces~\eqref{ZT2ZN0 gauging N prime}.  
We also checked `experimentally' that  \eqref{UBUCsumNc} returns an integer for any $N$.%
\footnote{We checked this statement numerically for $N,K\leq 4000$. A mathematical proof would be nice to have.} This provides an important consistency check of our overall normalisation, consistent with the discussion of section~\ref{sec:gauging 0 form sym}. Indeed, \eqref{UBUCsumNc} should be an integer because it is the 2d Witten index of the orbifolded $A$-model.

\sss{Basics of Jordan's totient function.}
 Let us briefly review  Jordan's totient function $J_k(d)$ for general $k$, as it will be useful below. Given two positive integers $k$ and $d$, $J_k(d)$ equals the number of $k$-tuples of positive integers $(n_i)_{i=1}^k$ that are less or equal to $d$ and such that the $k+1$ integers $(n_1, \cdots, n_k, d)$ are coprime. It can be explicitly computed as: 
\begin{equation}\label{jord_totient_def}
    J_k(d)=d^k\prod_{p|d}\left(1-\frac{1}{p^k}\right)~,
\end{equation}
where $p$  ranges through all prime divisors of $d$. Note that this obviously generalises  Euler's totient function $\varphi(d)=J_1(d)$. 
The Jordan's totient function allows us to decompose the $k$-power of any integer $N$ into the divisors of $N$:
\begin{equation}\label{Jordan_Jk}
    \sum_{d|N}J_k(d)=N^k~.
\end{equation}
Note also that $J_k(1)=1$ and that, for any  prime number $p$, we have $J_k(p)=p^k-1$. We list further relevant  properties of $J_k(d)$ in appendix~\ref{app:totients}.

Jordan's totient function $J_2(d)$ has appeared sporadically in the physics literature, for instance in the context of cyclic group orbifolds in monstrous moonshine \cite{Lin:2022wpx,Albert:2022gcs}, and in the enumeration of the allowed electric-magnetic charge lattices in 4d $\CN=4$ SYM with gauge algebra $\mathfrak{su}(N)$ \cite{Bashmakov:2022uek}. More relevantly to the present context, this function has appeared in the mathematical literature on dimensions of Verlinde bundles over curves 
\cite{Oprea+2011+181+217,alekseev2000formulas,beauville1998picard,osti_10067008,oprea2010note}.

\sss{The (naive) Witten index for $\PSU(N)_K$ for any $N$ and $\theta_s=0$.}  Assuming for now that $\mathfrak{a}= - K \text{ mod } N=0$, we can gauge the full $\Z_N^{(1)}\oplus \Z_N^{(0)}$ to obtain the 3d Witten index of the $\PSU(N)_K$ $\CN=2$ CS theory. This amounts to summing the topological lines over all three generators of $H_1(T^3,\Z)\cong \Z^3$:
\be\label{ZT3 PSUN gen sum ZN3}
 \IW[\PSU(N)^0_K]= Z_{T^3}\left[\PSU(N)_K^0\right] = {1\ov N^2}\sum_{m, n, n'\in \Z_N}   \langle\Pi^{m \gamma_0}\CU^{n\gamma_0}(\CC) \;\CU^{n'\gamma_0}(\t\CC)\rangle_{T^2}~.
\ee
Whenever we have the full 3d modularity, with the trivial sign in~\eqref{3d modular anom full}, it is clear from the above discussions that we can massage the sum~\eqref{ZT3 PSUN gen sum ZN3} into a sum similar to~\eqref{UBUCsumNc}, with $J_2$ replaced with $J_3$ -- see the definition of $J_k$ above~\eqref{jord_totient_def}. The modular anomaly~\eqref{3d modular anom full} spoils this naive expectation, however. In fact, one can easily check that this naive expectation would lead to non-integer results for the index in examples where the $T^3$ anomaly is non-trivial -- thus, the subtle sign in~\eqref{3d modular anom full} is absolutely crucial to obtain sensible physical results. 

Given~\eqref{3d modular anom full} together with the $T^2$ modularity~\eqref{UBUCcorr}, we have:
\begin{equation}\label{UAUBUCallcases}
    \langle \Pi^{m \gamma_0} \CU^{n\gamma_0}(\CC)\CU^{n'\gamma_0}(\t\CC)\rangle_{T^2}=(-1)^{m\gcd(n,n')\left(\frac{K}{N}-1\right)(N-1)} \langle \CU^{\gcd(m,n,n')\gamma_0}(\CC)\rangle_{T^2}~.
\end{equation}
Now we need to count the triples $(m,n,n')$ so that $\gcd(m,n,n')\gamma_0$ generates $\Z_d$ with some specific signs. For each divisor $d|N$, we can partition Jordan's totient function as $J_3(d)=n_++n_-$, where the contributions $n_\pm$ are those that come with the  $\pm$ signs. Due to the specific partially symmetric form of the sign in~\eqref{UAUBUCallcases}, the number of positive and negative terms are related as  $3n_+=4n_-$ for any divisor $d$ giving rise to a minus sign.\footnote{We explain this in more detail in Appendix~\ref{app:calculationofABsum}} We are then interested in the number $n_+-n_-=\frac 17 J_3(d)$ after the cancellation. This suggests using the following refinement of Jordan's totient $J_3(d)$, which depends on specific arithmetic properties of $K$ and $N$:
\begin{equation}\label{J3symbol}
    \mathscr J_3^{N,K}(d)\equiv \begin{cases}\tfrac17 J_3(d) \quad &\text{for } N \text{ even,  } \frac{N}{d} \text{ odd,  } \frac{K}{d} \text{ even}~,\, \\
    J_3(d) \quad &\text{otherwise}~.
    \end{cases}
\end{equation}
Then we have that, in fact for all $N$ and all $K$:
\begin{equation}\label{3ptfnABCsum}
   \sum_{m, n, n'\in \Z_N}   \langle\Pi^{m \gamma_0}\CU^{n\gamma_0}(\CC) \;\CU^{n'\gamma_0}(\t\CC)\rangle_{T^2} =\sum_{d|\gcd(N,K)}\mathscr J_3^{N,K}(d)\binom{\frac Kd-1}{\frac{N}{d}-1}~.
\end{equation}
Let us comment briefly on the non-trivial denominator in the definition \eqref{J3symbol}. First of all, since $\mathscr J_3^{N,K}(d)=\tfrac17 J_3(d)$ only if $N$ is even and $\frac{N}{d}$ is odd, this value only occurs if  $d$ is even. In that case, one can show that $7|J_3(d)$ for all even integers $d$.\footnote{We leave this as an exercise to the enthusiastic reader. See also appendix~\ref{app:totients} for more divisibility properties. } This number has a rather natural interpretation as $J_3(2)=7$.%
\footnote{Recall that $J_k(p)=p^k-1$ for $p$ prime and $k\in\mathbb N$. Jordan's totient is a multiplicative function, see~\protect\eqref{Jordan_multiplicative}. Thus in particular if $d$ is even and  $\frac d2$ is odd, then $\frac17J_3(d)=J_3(\tfrac d2)$. Clearly this is not the case for all values $d$ of which $\mathscr J_3^{N,K}(d)$ is evaluated in \protect\eqref{3ptfnABCsum}: For instance, $d=4$ is an even divisor of $N=12$ such that $\frac{N}{d}$ is odd, but $\frac d2$ is even as well. Therefore, $\mathscr J_3^{N,K}(d)$ does not have a simple interpretation as being either $J_3(d)$ or $J_3(\tfrac d2)$.} 
 Thus we have that  $  \mathscr J_3^{N,K}: \mathbb N\to \mathbb N$ is a well-defined integral map for all values of $N$ and all $K$.

Coming back to the case $K|N$, we have thus written the $\PSU(N)_K$  $T^2$ index as:
\begin{equation}\label{IWPSUall}
 \IW[\PSU(N)^0_K]=\frac{1}{N^2}\sum_{ d|N}\mathscr J_3^{N,K}(d)\binom{\frac Kd-1}{\frac{N}{d}-1}~.
\end{equation}
We list this index for small allowed values of $N$ and $K$ in table~\ref{tab:IWPSU}. 
Since the full symmetry $(\Z_N^{(1)})_{\rm 3d}$ is non-anomalous, the resulting $\PSU(N)_K$ theory exists and the index should be a non-negative integer.%
\footnote{This Witten index cannot be negative because it is also counting the lines in a 3d TQFT. See the discussion in section~\protect\ref{sec:gauging 1-form sym in 3d}.} We conjecture this is the case, namely that:
\begin{equation}\label{IW_PSU_is_an_integer}
  \IW[\PSU(N)^0_K]\in\mathbb N~,
\end{equation}
for any  $N|K$.%
\footnote{We have checked this numerically for  $N\leq K\leq 100000$.}
As a small consistency check for the formula~\eqref{IWPSUall}, consider the case where $N\geq 3$ is prime and $N|K$. Then there are only two terms in the sum, $d=1$ and $d=N$, and using $J_3(p)=p^3-1$ for $p$ prime (see appendix~\ref{app:totients}) we  precisely reproduce~\eqref{PSUNcprimeindex} for $s=0$.
Another small check of~\eqref{IWPSUall} is that, for $N=2$, it precisely reproduces~\eqref{SO(3)-index}.%
\footnote{The function~\protect\eqref{IWPSUall} seems to be relatively unexplored in the literature. For $K=3N$, $\IW[\PSU(N)^0_{3N}]$ is the integer sequence A244036 and it appears in the study of chiral algebras~\protect\cite[Section 3.1.2]{Isachenkov:2014zua}.}

As already mentioned, the naive index~\eqref{IW_PSU_is_an_integer} is actually a supersymmetric index only when the `modular anomaly' vanishes and the corresponding $\CN=0$ CS theory $\PSU(N)_{k=K-N}$ is bosonic (that is, if $(N-1)k/N$ is even, see~\eqref{statement_mod_anomaly}). 


\begin{table}[t]\begin{center}\renewcommand{\arraystretch}{1}
\begin{tabular}{|c||Sc|Sc|Sc|Sc|Sc|Sc|Sc|Sc|Sc|} 
\hline
$\kappa \backslash N$ &  2 & 3 & 4 & 5 & 6 & 7 & 8 & 9 & 10 \\ \hline \hline
1& 2 & 3 & 4 & 5 & 6 & 7 & 8 & 9 & 10 \\ \hline
2& {\red 1} & 4 &  {\red 4}  & 10 & {\red 16} & 42 &{\red  108} & 312 &{\red  930} \\ \hline
3& 3 & 6 & 16 & 45 & 186 & 798 & 3860 & 19305 & 100235 \\ \hline
4&  {\red 2} & 9 & {\red  32} & 160 & {\red  942} & 6048 & {\red 41144} & 290592 & {\red  2119200} \\ \hline
5& 4 & 13 & 68 & 430 & 3328 & 27454 & 240448 & 2188095 & 20545320 \\ \hline
6& {\red 3} & 18 & {\red  116} & 955 & {\red 9030} & 91770 & {\red 982884} & 10942308 & {\red 125656965} \\ \hline
7& 5 & 24 & 192 & 1860 & 20868 & 250446 & 3171084 & 41742027 & 566724020 \\ \hline
8& {\red 4} & 31 & {\red 288} & 3295 & {\red 42628} & 591633 & {\red 8645360} & 131347320 & {\red 2058115980} \\ \hline
9& 6 & 39 & 420 & 5435 & 79794 & 1254589 & 20780280 & 357870942 & 6356282290 \\ \hline
10& {\red 5} & 48 & {\red 580} & 8480 & {\red 139092} & 2446486 & {\red 45294044} & 871916841 & {\red 17310311600} \\ \hline
\end{tabular}\caption{The (naive) Witten index $\IW[\PSU(N)^0_{\kappa N}]$, evaluated from~\protect\eqref{IWPSUall}, for small values of $N$ and $\kappa\equiv {K\ov N}\in \Z$. The cases when the naive index is not a Witten index (when both $N$ and $\kappa$ are even) are shown in red. \label{tab:IWPSU}}
\end{center}\end{table}


\sss{The (naive) Witten index for $(\SU(N)/\Z_r)_K$.} 
Given the above discussion, it is not difficult to consider gauging any non-anomalous subgroup $(\Z_r^{(1)})_{\rm 3d}\subset (\Z_N^{(1)})_{\rm 3d}$ on $T^3$. As described in subsection~\ref{subsec:thoot anom SUNK}, the  $\SU(N)_K$ theory has a non-anomalous $\mathbb Z_r$ subgroup if $KN/r^2$ is an integer. While $r$ is required to be a divisor of $N$, $K$ is not necessarily divisible by $r$.\footnote{As an example, the $\SU(4)_5$ theory has a non-anomalous $\mathbb Z_2$ subgroup.} 
In order to compute the $(\SU(N)/\mathbb Z_r)_K$ index, we may then suitably adapt the identity~\eqref{3ptfnABCsum}. However, instead of summing over $\mathbb Z_N$, we merely sum over $\mathbb Z_r$ lines, which restricts the sum over divisors $d$ of $r$ only. Moreover, not all such divisors contribute, since we enumerate fixed points under a $\mathbb Z_d$ subgroup, which according to~\eqref{fixed_points_n_gamma}  only exist if $r|K$.

As explained in detail in section~\ref{subsec:T3 modular anom}, the 3d $\mathbb Z_r$ one-form symmetry has a mixed anomaly with gravity if $r$ is even and $(K-N)N/r^2$ is odd. These are precisely the cases when the modular anomaly~\eqref{UAUBUCallcases} persists: When gauging a $\mathbb Z_r$ subgroup, in~\eqref{3ptfnABCsum} we sum over multiples of $n=\frac Nr$. For the generators $m=n=n'=\frac Nr$, the exponent of $(-1)$ becomes $\frac Nr\frac Nr(\frac KN-1)(N-1)=\frac{(K-N)N}{r^2}(N-1)$, which is indeed exactly the anomaly coefficient $h$~\eqref{gravity anomaly full}.\footnote{Since the anomaly coefficient is defined modulo 1, we distinguish only the cases $h=\frac 12$ and $h=0$. As neither $N-1$ nor $r-1$ are divisible by $r$, we may again replace $N-1$ by $r-1$ without changing the answer. This is also explained below~\eqref{h values explicit}.} Thus the minus sign exists only if $r$ is even and $(K-N)N/r^2$ is odd, which is simply the condition~\eqref{h values explicit}. In this most generic case, we need to refine again:
\begin{equation}\label{J3symbol refined}
    \mathscr J_3^{N,K}(d)\equiv \begin{cases}\tfrac17 J_3(d) \quad &\text{for } d \text{ even and } \frac{(K-N)N}{d^2} \text{ odd}~, \\
    J_3(d) \quad &\text{otherwise}~.
    \end{cases}
\end{equation}
Note that this is the same function as before~\eqref{J3symbol refined}, with~\eqref{J3symbol refined} being merely restricted to the case $N|K$, as discussed around~\eqref{h values explicit}--\eqref{h values explicit N|K}.

With these preparations, we can now write down the (naive) Witten index for all $(\SU(N)/\mathbb Z_r)_K$ theories: We sum over all topological $\mathbb Z_r$ lines on $T^3$, where the normalisation factor $|\Gamma|=r$ adjusted to the cardinality of the group we sum over:
\begin{equation} \label{Witten_index_SUmodZr}
\IW[(\SU(N)/\mathbb Z_{r})^0_K]=\frac{1}{r^2}\sum_{ d|\gcd(r,K)}\mathscr J_3^{N,K}(d)\binom{\frac Kd-1}{\frac{N}{d}-1}~.
\end{equation}
We again conjecture that this is an integer, as we also checked numerically for a large set of values of $N$, $K$ and $r$.  Let us note that the normalisation $1/r^2$ in~\eqref{Witten_index_SUmodZr}  is the `maximal' allowed, \ie $2$ is the largest exponent of the order of the gauged subgroup that leads to an integral index. We emphasise that this is a rather strong consistency check -- in general, none of the summands in \eqref{Witten_index_SUmodZr} are integers, and only the whole sum has the right divisibility property. 
Note also that the sum~\eqref{Witten_index_SUmodZr} automatically `skips' the would-be contribution from anomalous subgroups of $\Z_N$. 
Finally, we extend the conjecture~\eqref{IW_PSU_is_an_integer} to the general case, as:
\begin{equation}
 \frac{KN}{r^2}\in\mathbb Z \quad \Rightarrow\quad   \IW[(\SU(N)/\mathbb Z_{r})^0_K] \in\mathbb N~.
\end{equation}
A mathematical proof of this `physical fact' would be extremely interesting.\footnote{A curiosity for $N$ prime concerns the fact that $\IW[\PSU(N)^0_K]$ for $N|K$ is divisible again by $N$, but only for $N\geq 5$. This is a consequence of Proposition~\protect\ref{prop_prime_2} in Appendix~\ref{app:Glaisher}, which refines Proposition~\protect\ref{prop_prime} by excluding the case $N=3$. For this case, the integers $M_{N,K}$ defined in~\protect\eqref{Glaisher_const} are divisible again by $N$.} In section~\ref{sec:gauging 1-form sym in 3d}, we compute the Witten index by condensation of abelian anyons, and find precise agreement with~\eqref{Witten_index_SUmodZr} for small values of $N$ and $K$.

\medskip
\noindent
In summary, the expression \eqref{Witten_index_SUmodZr} gives the (naive) Witten index for all possible 3d $\CN=2$ Chern--Simons theories with gauge algebra $\mathfrak{su}(N)$, which are indexed by $N$, $K$ and $r$ such that $NK/r^2\in \Z$, as discussed above~\eqref{allCStheories that exist}.

\sss{The (naive) Witten index with a non-trivial $\theta$-angle.}  The $(\SU(N)/\Z_r)_K$  theory has a non-trivial $(\Z_r^{(0)})_{\rm 3d}$ 0-form symmetry, whose quantum numbers are the $\Z_r$-valued `Stiefel–Whitney' classes that indicate whether a $\SU(N)/\Z_r$ principal bundle can be lifted to an $\SU(N)$ bundle -- see {\it e.g.}~\cite{Kapustin:2005py,Aharony:2013hda}. We can keep track of this symmetry in the Witten index by introducing a discrete fugacity for it.
In the $A$-model formalism, this is precisely the $\theta_s$ index introduced above, which at the same time serves as a background gauge field for the dual $(-1)$-form symmetry.

Let us first consider the $\PSU(N)_K$ theory with $N|K$. Then, the sum over topological lines in \eqref{ZT3 PSUN gen sum ZN3} generalises to:
\be\label{ZT3 PSUN gen sum ZN3 with theta}
 \IW[\PSU(N)^{\theta_s}_K]= Z_{T^3}\left[\PSU(N)_K^{\theta_s}\right] = {1\ov N^2}\sum_{m, n, n'\in \Z_N}   e^{2\pi i {m s\ov N}} \langle\Pi^{m \gamma_0}\CU^{n\gamma_0}(\CC) \;\CU^{n'\gamma_0}(\t\CC)\rangle_{T^2}~.
\ee
with $s\in \Z_N$. The oscillating phase makes the counting problem more involved. It turns out to probe additional arithmetic features of the theory such as the divisibility of $N$ by square numbers: if $N$ is \emph{square-free}, all theories are in the same gauge `orbit'.\footnote{Square-free means that no prime factor appears more than once in the prime factor decomposition.} If $N$ is not square-free, then $\mathbb Z_{N}$ is not a product of cyclic groups and there can be disjoint gauge orbits. Furthermore, these cases can feature mixed anomalies between gauged and residual symmetries~\cite{Gaiotto:2014kfa, Aharony:2013hda}. 

The sum~\eqref{ZT3 PSUN gen sum ZN3 with theta} is most strategically analysed in three steps:
$N$ square-free, $N$ arbitrary but with $K$ such that the gravitational anomaly~\eqref{gravity anomaly full} vanishes for all $r|N$, and finally the generic case with both $N$ and $K$ arbitary. The even simpler case where $N$ is prime is discussed in the previous section, where the theta angle $\theta_s$ enters the result~\eqref{PSUNcprimeindex} depending only on whether $s$ is divisible by $N$ or not. In general, the nontrivial phases $e^{2\pi i \frac{ms}{N}}$ affect crucially the counting of signs that were important in the definition of the symbol $\mathscr J_3$~\eqref{J3symbol}. Consequently, for generic values of $N$, we are looking for a generalisation of the symbol $\mathscr J_3$ depending on the value $s$ interacting with the mixed anomaly signs~\eqref{UAUBUCallcases} in the geometric sums. It is rather elaborate to work out the explicit $s$-dependence of~\eqref{ZT3 PSUN gen sum ZN3 with theta}, and we discuss these subtleties for general $N$ and $K$ in appendix~\ref{app:theta-angle}. At the moment, we are only able to obtain a general result for the second step, that is, for any $N$
such that the gravitational anomaly~\eqref{gravity anomaly full} vanishes for \emph{all} divisors $r$ of $N$. A practical way to guarantee this is to choose $K$ to be a square-free integer: if for some divisor $r$, $N$ is even, $\frac Nr$ is odd and $\frac Kr$ is even, then $r$ is automatically even and thus $K$ is divisible by $4=2^2$ and hence not square-free. This condition is not an `if and only if' statement -- for instance, $\SU(4)_{12}$ has subgroups $\mathbb Z_r$ with $r=1,2,4$, but all $(\SU(4)/\mathbb Z_r)_{12-4}$ CS theories are bosonic (see~\eqref{statement_mod_anomaly}), while $K=12$ is not square-free. 

In order to state the result here, let us define for any integer $d$ the \emph{radical} $\text{rad}(d)$ as the  product of the distinct prime numbers dividing $d$.\footnote{The radical is also  known as  largest square-free factor or square-free kernel}  Then we refine Jordan's totient $J_k(d)$ as~\cite{Oprea+2011+181+217}:%
\footnote{This refined Jordan's totient was defined in \protect\cite{Oprea+2011+181+217},  where the notations are related as $J_{2g}(d,s)=\left\{\frac sd\right\}_g$.} 
\begin{equation}\label{Jkdm}
     J_k(d,s)\equiv d^k\delta_{s\text{ mod } \frac{d}{\text{rad}(d)},0}\, \prod_{p|d}\left(\delta_{s\text{ mod } p^{e_d(p)},0}-\frac{1}{p^k}\right),
\end{equation}
where, for any prime divisor $p$ of $d$, $e_d(p)$ is the maximal exponent of which $p$ appears in the prime factor decomposition of  $d$.\footnote{That is, we can write $d=\prod_{p|d}p^{e_d(p)}$.} By setting $s=0$, we obtain back the ordinary Jordan totient \eqref{jord_totient_def}.
The number-theoretic interpretation is that $J_k(d,s)$ is the contribution of the partition of $N^k$ times the Kronecker delta $\delta_{s\text{ mod } N,0}$ into a sum over divisors $d$ of any integer $N$: 
\begin{equation}
    \sum_{d|N}J_k(d,s)=N^k\delta_{s\text{ mod } N,0}~,
\end{equation}
generalising \eqref{Jordan_Jk}. It occurs due to the convoluted geometric series appearing in the sum over $m$ in \eqref{ZT3 PSUN gen sum ZN3 with theta}. 
This function is suitable to calculate \eqref{ZT3 PSUN gen sum ZN3 with theta} for any value of $N$, as long as the gravitational anomaly~\eqref{gravity anomaly full} vanishes for all $r|N$ -- in particular, for $K$ a square-free integer. In this case, we find: 
\begin{equation}\label{PSUNcthetaNcJ3m}
   \IW[\PSU(N)^{\theta_s}_K]=\frac{1}{N^2}\sum_{ d|N}J_3(d,s)\binom{\frac Kd-1}{\frac{N}{d}-1}~.
\end{equation}
Let us provide some evidence to this result. 
When $N$ is prime and $K$ divisible by $N$, then the sum collapses to the divisors $d=1$ and $d=N$.  We have $J_3(N,s)=N^3\delta_{s\text{ mod } N,0}-1$, while $J_3(1,s)=1$. Combining both contributions exactly reproduces the previous result \eqref{PSUNcprimeindex}.
Another check is the case $N=2$ where we require that $K\in 2\mathbb Z$ is square-free and a multiple of 2, that is, $K\in 2+4\mathbb Z$. In that case, we calculate $\textbf{I}_\text{W}[\SO(3)^{\theta_s}_K]=\left(K-2+8\, \delta_{s\text{ mod }2,0}\right)/4$ from \eqref{PSUNcthetaNcJ3m}, which matches precisely with the $N=2$ calculation \eqref{SO(3)-index}.

This formula~\eqref{PSUNcthetaNcJ3m} does not hold for generic values of $K$. As we discuss briefly in appendix~\ref{app:theta-angle}, when $K$ is not square-free for instance, we obtain alternating geometric series rather than geometric series, which shifts $s$ inside the $\delta$-symbols~\eqref{Jkdm} -- a similar modification is necessary when summing over the gauge flux operators, see~\eqref{theta_symbol}.
We expect that, in the case of arbitrary $N$ and $K$, there exists a modification of~\eqref{Jkdm} taking care of these special cases, which we will leave as a problem for future work. 

The discrete gauging of a proper non-anomalous subgroup proceeds as explained above in the case   $\theta_s=0$. If the $\SU(N)_K$ theory has a non-anomalous subgroup $\mathbb Z_r$, then we truncate the sum~\eqref{PSUNcthetaNcJ3m} at $d=r$, that is, we only sum over divisors of $r$ (as well as of $K$, as before). At the same time, the correct normalisation is $r^{-2}$ rather than $N^{-2}$, corresponding to the order of the $\mathbb Z_r$ group we discretely gauge, and we have
\begin{equation} 
\IW[(\SU(N)/\mathbb Z_{r})^{\theta_s}_K]=\frac{1}{r^2}\sum_{ d|\gcd(r,K)}J_3(d,s)\binom{\frac Kd-1}{\frac{N}{d}-1}~,
\end{equation}
in the cases without modular anomaly. This is a proper Witten index in those cases.

\subsection{Gauging the 3d one-form symmetry  on \texorpdfstring{$\Sigma_g\times S^1$}{SigmagS1}}
As a final application of our detailed study of the 3d $A$-model for the $\SU(N)_K$ theory and of its higher-form symmetries, let us consider the gauging of the 3d one-form symmetry for the theory on $\Sigma_g\times S^1$. This gives us the topologically twisted index on $\Sigma_g$ for any allowed $(\SU(N)/\Z_r)_K$ $\CN=2$ Chern--Simons theory. 
The twisted index is obtained by basic surgery operations on the Riemann surface, which for $g>1$ includes the insertion of the handle-gluing operator $\CH$. In contrast to the case where $N$ is prime discussed above, in general there are several fixed points, which makes the evaluation more intricate.

\sss{Gauging the 2d 0-form symmetry.}
Let us first consider the gauging of the non-anomalous 0-form symmetry $\mathbb Z_N^{(0)}$ for arbitrary genus $g$, following the discussion in section~\ref{sec:gauging 0 form sym}. Recall that we label the $2g$ cycles on $\Sigma_g$ by $\CC_i$ and associate to them the topological operators $\CU^{\gamma_i}(\CC_i)$, whose product~\eqref{def CU gamma gen} we denote by $\CU^{\boldsymbol{\gamma}}$. Let us further label the $\mathbb Z_N$ elements by $\gamma_i=n_i\gamma_0$, and collect $\boldsymbol{n}=(n_1,\dots,n_{2g})$. As described in~\eqref{gen 0form insertion corr}, the insertion of $\CU^{\boldsymbol{\gamma}}$ on the surface $\Sigma_g$ amounts to summing over the Bethe vacua fixed by all elements $\boldsymbol{\gamma}$ simultaneously, which using our analysis for $\SU(N)_K$ can be written as:  
\begin{equation}
\langle\CU^{\boldsymbol{\gamma}}\rangle_{\Sigma_g}=\sum_{\hat{u}\in \SBE^{(\gcd(\boldsymbol{n})\gamma_0)}} \CH(\hat{u})^{g-1}~.
\end{equation}
Then we gauge the discrete 0-form symmetry $\mathbb Z_{N}$ by summing over all inserted lines:
\begin{equation}
    Z_{\Sigma_g\times S^1}\left[\SU(N)_K/\mathbb{Z}^{(0)}_{N}\right] ={1\ov N^{2g-1}}\sum_{\boldsymbol{n}\in\mathbb Z_N^{2g}}\sum_{\hat{u}\in \SBE^{(\gcd(\boldsymbol{n})\gamma_0)}} \CH(\hat{u})^{g-1}~,
\end{equation}
with the normalisation~\eqref{gauged-0-form-part-fun} as before.  
Of course, this double sum can be drastically simplified by realising that the second sum depends only on the value of $\gcd(\boldsymbol{n})$. Similarly to the genus 1 analysis above, we can enumerate the $N^{2g}$ numbers $\boldsymbol{n}$ with fixed $\gcd$ by the sum $N^{2g}=\sum_{d|N}J_{2g}(d)$ (see~\eqref{Jordan_Jk}). Then the sum over the $2g$ cycles collapses to a sum over divisors $d$ of $N$. Note that $\gcd(\boldsymbol{n})\gamma_0$ generates the same subgroup as $\gcd(\boldsymbol{n},N)\gamma_0$, which due to~\eqref{Z_subgroups} is $\mathbb Z_d$ with $d=N/\gcd(\boldsymbol{n},N)$. Thus, for each divisor $d$, the set of fixed vacua is simply 
$\SBE^{(\frac Nd\gamma_0)}$, which one may also denote by $\SBE^{\mathbb Z_d}$. We therefore obtain:
\begin{equation}\label{SU(Nc) gauged 0 form}
    Z_{\Sigma_g\times S^1}\left[\SU(N)_K/\mathbb{Z}^{(0)}_{N}\right] ={1\ov N^{2g-1}}\sum_{d|N}J_{2g}(d)\!\!\sum_{\hat u\in\SBE^{\mathbb Z_d}} \CH(\hat u)^{g-1}~.
\end{equation}
We checked for small values of $g$, $N$ and $K$ that \eqref{SU(Nc) gauged 0 form} leads to integer partition functions, with precisely this normalisation.\footnote{One may check as well that this the `maximal' normalisation, meaning that if we divide by $N$ another time (\ie~$N^{2g}$ in the denominator) we do not get integers, in general.}

\sss{Gauging the 2d 1-form symmetry.}
The gauging of the 2d 1-form symmetry $\mathbb Z_N^{(1)}$ has been discussed in the general case in~\ref{subsec ZN1 gauging}.  While for $g=1$ the insertion of a single line in $T^3$ enjoys full 3d modularity and we were able to relate the 1-form gauging to an enumeration of fixed points, for higher genus this is not possible. Instead, it is still the case that gauging $\Z_N^{(1)}$ with a fixed $\theta_s$ projects us onto one of the disjoint universes of decomposition, giving us~\eqref{1-form-gauging_SU(N)-g}.

\sss{Gauging the full 3d 1-form symmetry.}
The full 3d 1-form symmetry can be gauged by summing over all insertions  $\Pi^{\delta} \,\CU^{\boldsymbol{\gamma}}$ on $\Sigma_g\times S^1$, as described in section~\ref{sec:twisted_index_G}. This amounts to combining the expressions~\eqref{SU(Nc) gauged 0 form} with~\eqref{1-form-gauging_SU(N)-g} in a suitable way: Since the insertion of the 0-form operators $\CU^{\boldsymbol{\gamma}}$ project onto the fixed points, we may simply evaluate the sum over the gauge flux operators $\Pi^{\delta}$ on those fixed points. This immediately gives the $\Sigma_g$ twisted index:
\begin{equation}\label{genus_g_part_function_PSU}
    Z_{\Sigma_g\times S^1}[\PSU(N)_K^{\theta_s}]=\frac{1}{N^{2g-1}}
  \sum_{d|N}J_{2g}(d)\!\!\sum_{\hat u\in \SBE^{ \vartheta_s,\, {\mathbb Z_d}}} \CH(\hat u)^{g-1}~.
\end{equation}
where the second sum is over the Bethe vacua in the universe set by  $\theta_s$ which are invariant under the action of $\Z_d$:
\be\label{SBEZdThetas inter}
\SBE^{ \vartheta_s,\, {\mathbb Z_d}}\equiv  \SBE^{\vartheta_s}\cap \SBE^{\mathbb Z_d}~.
\ee
By direct computation, we checked that this formula returns an integer for small values of the parameters. The expression~\eqref{genus_g_part_function_PSU}  generalises known expressions in the mathematical literature, where Verlinde dimensions  for $\PSU(N)_K$ with  $N$ odd were already written in terms of the genus-$g$ generalisation of Jordan's totient \cite{Oprea+2011+181+217}. In the mathematical setup, Jordan's totient $J_{2g}(d)$ has a geometric meaning as the number of elements of order $d$ in the Jacobian variety $J(\Sigma_g)$ associated with the Riemann surface~\cite{Oprea+2011+181+217}.\footnote{This Jacobian is the variety that carries the Verlinde bundles.} Finally, the  result~\eqref{genus_g_part_function_PSU} appears to agree with some explicit results of~\cite{Gukov:2021swm}  as expected, as well as with the mathematical framework for non-simply connected groups~\cite{alekseev2000formulas,Meinrenken2018}.
To the best of our knowledge, the result \eqref{genus_g_part_function_PSU} has not been obtained before in the mathematical literature for $N$ and $K/N$ both even, essentially due to the difficulties related to the gravity-$(\Z_N^{(1)})_{\rm 3d}$ anomaly discussed in section~\ref{subsec:T3 modular anom}. It would be interesting to give a firm mathematical footing to the formula~\eqref{genus_g_part_function_PSU} in this case.

It is rather difficult to obtain more explicit (\ie more efficiently computable) results for genus-$g$ twisted indices for all values of~$g$. 
One can check that, for $g=1$  and $s=0$, the result \eqref{genus_g_part_function_PSU} is compatible with \eqref{IWPSUall}. This identification expresses $\mathscr J_3(d)$ as a combination of $N$, $J_2(d)$ and the fixed Bethe vacua, as:
\begin{equation}
    N J_2(d) \left|\SBE^{ \vartheta_{s=0},\, {\mathbb Z_d}} \right|=\mathscr J_3(d)\Big|\SBE^{\mathbb Z_d}\Big|~.
\end{equation}
This relation characterises the distribution of values $\ell(\hat u)$ among the vacua $\h u$ fixed under a $\mathbb Z_d$ subgroup. We checked it explicitly in examples for small values of $N$ and $K$, but we leave a proof for future work.
Another check is the comparison to $N\geq 3$ prime and $N|K$. In that case, the only divisors are $d=1$ and $d=N$. Using $J_{2g}(N)=N^{2g}-1$ and \eqref{H-at-fixed-u}, a simple calculation derives the previous result \eqref{Zg PSUN N prime} from \eqref{genus_g_part_function_PSU}.

\sss{The $\Sigma_g$ twisted index for $(\SU(N)/\Z_r)_K$.}
As anticipated from the presentation~\eqref{genus_g_part_function_PSU} of the $\PSU(N)_K$ twisted index, it is straightforward to consider gauging only a subgroup $\mathbb Z_r$ rather than the full $\mathbb Z_N$, and we can also gauge any non-anomalous $(\Z_r^{(1)})_{\rm 3d}$ even if $(\Z_N^{(1)})_{\rm 3d}$ has a 't Hooft anomaly.   The twisted index of the $(\SU(N)/\Z_r)_K$ theory is simply obtained by summing only over the topological lines operators for the $\mathbb Z_r$ subgroup. Following the logic from the previous subsection, we simply restrict the divisor sum to divisors of $r$ only: 
\bea
&Z_{\Sigma_g\times S^1}[(\SU(N)/\mathbb Z_{r})^{\theta_s}_K]&=&\frac{1}{r^{2g-1}}
  \sum_{d|r}J_{2g}(d)\!\!\sum_{\hat u\in\SBE^{\mathbb Z_d}}\Delta^{N, r}_{\ell}(s) \CH(\hat u)^{g-1}\\
  &&=& \frac{1}{r^{2g-1}}
  \sum_{d|r}J_{2g}(d)\!\!\sum_{\hat u\in \SBE^{ \vartheta_s^{(\Z_r)}, {\mathbb Z_d}}}  \CH(\hat u)^{g-1}~,
\eea
where $\Delta^{N, r}_{\ell}(s)$ as defined in~\eqref{theta_symbol Zr}, and $\SBE^{ \vartheta_s^{(\Z_r)}, {\mathbb Z_d}}  \equiv \SBE^{\mathbb Z_d}\cap \SBE^{\vartheta_s^{(\Z_r)}}$ similarly to~\eqref{SBEZdThetas inter}.
This final result includes as special cases all previous results -- the $\PSU(N)_K$ Witten index~\eqref{IWPSUall} and more generally the $(\SU(N)/\mathbb Z_r)_K$ Witten index~\eqref{Witten_index_SUmodZr}, as well as the  $\Sigma_g$ twisted index~\eqref{Zg PSUN N prime} for $\PSU(N)_K$ with $N$ prime.

\section{Further aspects of the 3d  \texorpdfstring{$\CN=2$}{Neq2} \texorpdfstring{$\SU(N)_K$}{SU(N)K} CS theory}\label{sec:further checks SU(N)}
  In this section, we study some further aspects of the
$\CN=2$ supersymmetric  $\SU(N)_K$ Chern--Simons theory, which provides us with interesting consistency checks of our results above. In section~\ref{sec:gauging 1-form sym in 3d}, we study the gauging of the 3d 1-form symmetry using the fusion rules of the associated 3d TQFT, and find a precise (and non-trivial) agreement with our $A$-model result for the Witten index. 
In section~\ref{sec:level rank duality}, we briefly comment on the level/rank duality for the $\CN=2$ $\SU(N)_K$ Chern--Simons theory, and on how the higher-form symmetries should match across the duality.

\subsection{Gauging the 1-form symmetry in the 3d TQFT}
\label{sec:gauging 1-form sym in 3d}
Extending our discussion of the $\SU(2)$ example from section~\ref{sec:SU(2)warmup} in the introduction, we can check some of our results for the $\SU(N)_K$ theory using the purely 3d description of the underlying 3d TQFT. Indeed, we can consider gauging the 1-form symmetry $(\Z_N^{(1)})_{\rm 3d}$ by the condensation of the abelian anyons for $\Z_N$. Here we follow closely the approach of~\cite{Hsin:2018vcg}.

Let us first review the fusion rules of the Wilson lines in the $\SU(N)_K$ CS theory, which are the same as the fusion rules of chiral primaries in the corresponding 2d WZW model -- see {\it e.g.}~\cite{Cummins1994,Cummins:1990ay,Gepner:1986wi,Cummins:1990je,GOODMAN1990244,Begin:1992rt,Fuchs:1993et,DiFrancesco1997}. They are best expressed in terms of Young diagrams.
 First, recall that the $\SU(N)_K$ $\CN=2$ supersymmetric CS theory has a spectrum of Wilson lines $W_{\boldsymbol{\lambda}}$ indexed by the Young tableaux:
\be\label{Young_Tableaux}
\boldsymbol{\lambda} = [\lambda_1, \cdots, \lambda_{N-1}]~, 
\qquad \lambda_i \leq K-N~, \; \forall i~.
\ee
The generator of the 3d $\Z_{N}$ one-form symmetry is simply:
\be\label{a_tableau}
a= [K-N, 0, \cdots, 0]~,
\ee
where we used the obvious notation $W_{\boldsymbol{\lambda}} ={\boldsymbol{\lambda}}$. 
Linking any line with $a$, we pick up a phase:
\be
W_{\boldsymbol{\lambda}}\rightarrow e^{2\pi i n(\boldsymbol{\lambda})\ov N} W_{\boldsymbol{\lambda}}~, \qquad 
n(\boldsymbol{\lambda})\equiv \sum_{i=1}^{N-1}\lambda_i \; \text{mod}\, N~,
\ee
where $n(\boldsymbol{\lambda})$ is the $N$-ality of the representation (the number of boxes in the Young tableau mod $N$). The fusion of any Wilson line with the abelian anyon $a$ is easily found to be:
\be\label{fusion_with_a}
a \times [\lambda_1,\lambda_2,  \cdots, \lambda_{N-1}] = [K-N-\lambda_{N-1}, \lambda_1-\lambda_{N-1}, \lambda_2-\lambda_{N-1}, \cdots, \lambda_{N-2}-\lambda_{N-1}]~.
\ee
It corresponds to adding the row $a$ on top of the Young tableau $\boldsymbol{\lambda}$, and subsequently removing $\lambda_{N-1}$ columns on the left. As an example, consider the fusion of a Wilson line in the $\SU(5)_{10}$ theory:
\begin{equation}
    \ytableausetup{boxsize=1em}
\begin{ytableau}
~&  & &&
\end{ytableau}
\hspace*{0.3cm}
\times
\hspace*{0.3cm}
\begin{ytableau}
~&  & &&\\
~&&\\
~ &\\
~& \\
\end{ytableau}
\hspace*{0.3cm}
=
\hspace*{0.3cm}
\begin{ytableau}
*(gray!40)&*(gray!40)  & && \\
*(gray!40)&*(gray!40)  & &&\\
*(gray!40)&*(gray!40)&\\
*(gray!40) &*(gray!40)\\
*(gray!40)&*(gray!40) \\
\end{ytableau}
\hspace*{0.3cm}
=
\hspace*{0.3cm}
\begin{ytableau}
~&  &\\
~&&\\
~ 
\end{ytableau}~.
\end{equation}
Given these rules, we can carry out the three-step gauging procedure~\cite{Hsin:2018vcg} at the level of the 3d line spectrum. Before studying the case where $N$ is prime (and then for $N$ arbitrary), let us demonstrate this procedure in a simple example. 

\medskip
\noindent
{\bf Example: $\boldsymbol{\SU(3)_6}$.}  The $\SU(3)_6$ theory has 10 lines:
\be
\boldsymbol{\lambda}= [0,0], [1,0], [2,0], [3,0], [1,1], [2,1], [2,2], [3,1], [3,2], [3,3]~.
\ee
Out of these, we have the $\Z_3$ lines:
\be
[0,0]= {\bf 1}~, \qquad [3,0]=a~, \qquad [3,3]= a^2~.
\ee
Keeping only the $\Z_3$-neutral lines (under linking), we are left with the four lines:
\be
{\bf 1}~, \quad a~, \quad W_{\rm adj}\equiv [2,1]~, \qquad a^2~. 
\ee 
Next, we should identify lines related by the fusion~\eqref{fusion_with_a}. Note that $a W_{\rm adj}=W_{\rm adj}$, so it is left invariant under fusion as well.
We then discard the non-trivial $\Z_3$ lines, but we need to keep three copies of $W_{\rm adj}$. We then find the $\PSU(3)_6$ lines:
\be
{\bf 1}~, \quad W_{{\rm adj}, 1}~, \quad W_{{\rm adj}, 2}~, \quad W_{{\rm adj}, 3}~.
\ee
This is a total of four lines, which of course agrees with our $A$-model computation (see \eg~table~\ref{tab:IWPSU}).

\sss{$N$ prime.} Consider now the case where $N$ is an arbitrary prime number, and $K$ a multiple of $N$. Out of the $\binom{K-1}{N-1}$ Young tableaux $\boldsymbol{\lambda}$, as the first step we keep only the $\mathbb Z_{N}$-neutral lines, that is, the ones with $n(\boldsymbol{\lambda})\equiv 0\mod N$. In order to enumerate them, we use again the decomposition~\eqref{Glaisher_const} of the $\SU(N)_K$ index into the sum of 1 and $M_{N,K}N^2$. In the 3d prescription, the $1$ has an important meaning: It is the unique neutral Wilson line which is left invariant under fusion with the abelian anyon. We can find a general form for this invariant Wilson line:
\begin{equation}\label{inv_lambda}
    \boldsymbol{\lambda}_{\text{inv}}=(\tfrac{K}{N}-1)[N-1,N-2,\dots,2,1]~,
\end{equation}
where the prefactor multiplies all numbers of boxes. 
It is clear that the fusion~\eqref{fusion_with_a} with $a$~\eqref{a_tableau} leaves it invariant: The Young tableau $a$ has $\frac{K}{N}-1$ boxes more than the first row of $\boldsymbol{\lambda}_{\text{inv}}$, whose last row has $\frac{K}{N}-1$ boxes as well. Removing $\frac{K}{N}-1$ columns on the left results in the same Young diagram $\boldsymbol{\lambda}_{\text{inv}}$.

The  $N$-ality of \eqref{inv_lambda} is  $n(\boldsymbol{\lambda}_{\text{inv}})=(\tfrac{K}{N}-1)\tfrac{N-1}{2}N$, which is divisible by $N$ since $N\geq 3$ is prime and thus odd. 
Using (Captain) hook's formula, we find that it has dimensions:  
\begin{equation}
 \dim  \boldsymbol{\lambda}_\text{inv}=\left(\frac{K}{N}\right)^{K-N}~,
\end{equation}
and that it is self-dual, and thus a real representation of $\SU(N)$ (see \eg~\cite{Peluse2014}).\footnote{This representation is the adjoint only for  $N=3$ at level $K=6$.}
For instance, the invariant Wilson line  for $\SU(5)_{15}$  is:
\begin{equation}
 \begin{ytableau}
~&&  && &&&\\
~&&&&&\\
~ &&&\\
~& \\
\end{ytableau}
\end{equation}
This allows us to correctly enumerate the remaining lines in each step of the 3d gauging procedure~\cite{Hsin:2018vcg}. In step 1, we discard the lines that are not neutral under $\mathbb Z_{N}$. The line $ \boldsymbol{\lambda}_{\text{inv}}$ is neutral, while from the other $M_{N,K}N^2$ lines only every $N$-th line is neutral. This gives us:
\begin{equation}
    \text{Step 1}: \qquad  1+M_{N,K}N^2 \longrightarrow 1+M_{N,K}N~
\end{equation}
for the number of invariant lines. In step 2, we identify all lines $W$ and $aW$ obtained by fusing with $a$. While $W_{\boldsymbol{\lambda}_\text{inv}}$ is invariant, this partitions the remaining  $M_{N,K}N$ lines into $M_{N,K}$ orbits under the $a$-fusion of orbit length $N$ each. It is clear that this has to be the case, since the fusion with the abelian anyon constitutes a group action of $\mathbb Z_{N}$ on the collection of Wilson lines, and, since $\mathbb Z_{N}$ has no subgroups, there cannot be shorter orbits. This leaves us with:
\begin{equation}
    \text{Step 2}: \qquad   1+M_{N,K}N \longrightarrow 1+M_{N,K}~
\end{equation}
for the number of lines. Finally, at step 3, for the fixed point $W_{\boldsymbol{\lambda}_\text{inv}}$ under the fusion we create $N$ copies. We thus obtain:
\begin{equation}\label{step 3}
    \text{Step 3}: \qquad   1+M_{N,K} \longrightarrow N+M_{N,K}~,
\end{equation}
for the number of Wilson lines, precisely agreeing with the earlier result~\eqref{IW_PSU_prime} for the $\PSU(N)_K$ Witten index with vanishing theta angle.

\sss{General $N$.}
When $N$ is not prime, the only modification to the gauging procedure is the third step. Note that when $N$ is prime, then $a^n$ is a generator of the 3d 1-form symmetry for any positive integer $n$ with $n \; (\text{mod } N)\neq 0$. If $N$ has a divisor $d$, then $a^n$ can generate a proper $\mathbb Z_{d}$ subgroup of $\mathbb Z_{N}$ for specific values of $n$. Consequently, if $n$ is the smallest divisor of $N$ such that a line $W$ is invariant under the fusion with $a^n$, then we generate  $N/n$ copies of $W$ in the third step~\cite{Hsin:2018vcg}.

The gauging procedure then proceeds almost as before. We start with the spectrum $\{W_{\boldsymbol{\lambda}}\}$ of Wilson lines \eqref{Young_Tableaux}.  Discarding the lines that are not invariant under the $\mathbb Z_{N}$ 1-form symmetry leaves us with a smaller set $\{W_{\boldsymbol{\lambda}}\}^{\mathbb Z_{N}}=\{W_{\boldsymbol{\lambda}}: n(\boldsymbol{\lambda})\equiv 0\mod N\}$. The fusion with $a$ furnishes a partition  into orbits:
\begin{equation}
    \{W_{\boldsymbol{\lambda}}\}^{\mathbb Z_{N}}=\bigcup_{\boldsymbol{\lambda}'} \orb(W_{\boldsymbol{\lambda}'})~.
\end{equation}
Since $a$ constitutes a $\mathbb Z_{N}$ group action, the lengths $|\orb(W_{\boldsymbol{\lambda}'})|$ of the orbits are divisors of $N$. Any line  in an orbit $\orb(W_{\boldsymbol{\lambda}'})$ is then invariant under the fusion with $a^m$, where  $m=|\orb(W_{\boldsymbol{\lambda}'})|$. Consequently, for each orbit  $\orb(W_{\boldsymbol{\lambda}'})$ we  keep $N/|\orb(W_{\boldsymbol{\lambda}'})|$ copies. 
This gives us the $\PSU(N)_K$ spectrum of lines:
\begin{equation}
  \{W_{\boldsymbol{\lambda}}\}/\mathbb Z_{N}^{(1)}=  \bigcup_{\boldsymbol{\lambda}'}\bigcup_{j=1}^{N/|\orb(W_{\boldsymbol{\lambda}'})|} W_{\boldsymbol{\lambda}',j}~,
\end{equation}
from which we easily calculate
\begin{equation}\label{IWPSU_lines}
\IW[\PSU(N)_{K}] = \sum_{\boldsymbol{\lambda}'}\frac{N}{|\orb(W_{\boldsymbol{\lambda}'})|}~.
\end{equation}
This matches with the case $N$ prime, where there is one orbit of length 1 and $M_{N,K}$ orbits of length $N$, giving back \eqref{step 3}. For given arbitrary values of $N$ and $K$, it is straightforward to implement the fusion \eqref{fusion_with_a} on a computer, and we find that \eqref{IWPSU_lines} agrees precisely with the general-$N$ result \eqref{IWPSUall} for all $N\leq K\leq 18$. This provides us with another consistency check of our 3d $A$-model calculations. Since~\eqref{IWPSU_lines} does not explicitly take into account the $T^3$ modular anomaly discussed in section~\ref{subsec:T3 modular anom}, the above 3d calculation provides us with some independent evidence for the correctness of the analysis that leads to the general formula~\eqref{IWPSUall}. As mentioned in the previous section, the quantity~\eqref{IWPSU_lines} is only a `naive' Witten index, but it gives the correct Witten index whenever the $\PSU(N)_{k=K-N}$ pure CS theory is bosonic. When it is a spin-TQFT instead, the naive index still correctly counts the lines (in some sector to be specified), but it is not a supersymmetric index~\cite{CFKK-24-II}.

\sss{Gauging a subgroup.}
More generally, we can consider the gauging of a non-anomalous $\mathbb Z_r$ subgroup of the 3d $\mathbb Z_N$ one-form symmetry, using the anyon $a^n$ with $n=\frac Nr$. The fusion of $a^n$ with any line is simply the repeated application of the fusion with $a$, while the $\mathbb Z_r$ neutral lines $W_{\boldsymbol{\lambda}}$ are those with $N$-ality $n(\boldsymbol{\lambda})\equiv 0\mod r$ (rather than mod $N$). For each orbit of length $l$ under fusion with $a^n$, we create $r/l$ copies in the third step. We have again implemented this algorithm on a computer, and find exact agreement with the $A$-model calculation~\eqref{Witten_index_SUmodZr} for all admissible $\mathbb Z_r$ gaugings of the $\SU(N)_K$ theory with $N,K\leq 18$.

\subsection{Comments on level/rank duality}
\label{sec:level rank duality}
It is also instructive to consider level/rank duality for our 3d $\CN=2$ supersymmetric CS theory~\cite{Aharony:2015mjs,Aharony:2016jvv,Hsin:2016blu,Benini:2017dus,Closset:2023jiq}: 
\be\label{levelrankduality}
\SU(N)_K \qquad \longleftrightarrow \qquad \U(K-N)_{-K, -N}~.
\ee
The underlying 3d field theory has a $\Z_N^{(1)}$ 1-form symmetry, but it is realised differently on either side -- for instance,  the fundamental Wilson line of $\SU(N)$ maps to a Wilson line in the determinant representation of $\U(K-N)$. These lines generate the 3d 1-form symmetry in each description. In the following,  we make some preliminary comments on the duality~\eqref{levelrankduality} from the perspective of their respective 3d $A$-models. We comment on the matching of Bethe vacua between the two sides,
but unfortunately we did not find an explicit duality map between the two descriptions; this is left as a challenge for future work.

\sss{Bethe vacua.}
Let us look at the vacua of the $\U(K-N)_{-K, -N}$ theory, assuming $K>N>0$. The Bethe equations read:
\be
 q\,(-x_a)^{-K} \, \det{x}  = 1~, \qquad a=1, \cdots, K-N~,
\ee
and here $q=e^{2\pi i \tau}$ is the complexified FI term for the $\U(K-N)$ gauge group.
The level/rank duality~\eqref{levelrankduality} is a little bit subtle (in particular due to $\U(K-N)_{-K, -N}$ being spin if $N$ is odd~\cite{Hsin:2016blu}), and here we should really set $q=1$. So, for simplicity, let us just assume the Bethe equations are:%
\footnote{That is, we assume $K$ even for simplicity. We can of course keep track of the signs for $K$ odd, but for our purposes here this will be immaterial.}
\be\label{be UN for xa}
x_a^{K}= \det{x}~, \qquad \forall a~.
\ee
We can parameterise the solutions as:
\be\label{u_a U(N)}
\h u_a = {j_a \ov KN}~, \qquad (j_a)\subset \{0, 1, \cdots, KN-1\}~,
\ee
where $(j_a)$ is an ordered subset which satisfies the conditions~\eqref{be UN for xa}, namely: 
\be\label{BEU(N)}
\sum_{b=1}^{K-N} j_b \equiv  K j_a \mod KN~,
\ee
 for all $a=1,\dots, K-N$. 
As an example, consider $K=4$ and $N=2$. Then the $U(2)_{-4,-2}$ theory has 3 vacua with: 
\be
(j_a)= \{ (2,6)~,\; (1,3)~,\; (5,7) \}~.
\ee
Of course, this matches the expected number of vacua of $\SU(2)_4$. It turns out that the first vacuum is the one fixed under the action of $\Z_2^{(0)}$, as we will discuss below.

\sss{Witten index. }
Let us count the Bethe vacua, \ie solutions to \eqref{BEU(N)}, more systematically. Given any tuple $(j_1,\dots, j_{K-N})$, denote their sum by $\sigma=\sum_{a=1}^{K-N}j_a$.  Then from \eqref{BEU(N)} we have that $\sigma$ is divisible by $K$. That is, we can write $\sigma=k K$ with some integer $k$.
From \eqref{BEU(N)}, we find that $k\equiv j_a\mod N$ for each $a=1,\dots, K-N$. This in particular implies that the values of $l_a$ differ by multiples of $N$. Therefore, for any given $K-N$-tuple $(j_a)$, we can find a unique number $\tilde k=0,\dots, N-1$ such that $j_a\equiv \tilde k\mod N$ for all $a$, where $\tilde k$ is simply $k$ reduced modulo $N$.

Given any solution $(j_a)$ with a particular value of $\tilde k$, we can always construct a new solution $(j_a')$ whose value is $\tilde k+1$. This implies in particular that, for each value of $\tilde k$, there are the same number of solutions to \eqref{BEU(N)}. Since there are $N$ possible values of $\tilde k$, let us therefore study the case where $\tilde k=0$. 

For $\tilde k=0$, we are then looking at tuples $(j_a)$ such that $j_a\equiv 0\mod N$. Since the domain for the solutions is $j_a=0,\dots, KN-1$, this restricts the solution space to the values $j_a=0,N,2N,\dots, (K-1)N$. These are $K$ numbers, and we are selecting $K-N$ numbers from those $K$ numbers, for which there are $\binom{K}{K-N}$ possibilities. These candidate solutions are solutions of \eqref{BEU(N)} if $\sigma\equiv 0\mod KN$. Since each $j_a$ is divisible by $N$,  it follows that only every $K$-th tuple satisfies this condition. Hence, the number solutions to the Bethe equations with $\tilde k=0$ is $\frac{1}{K}\binom{K}{K-N}$. Since we have $N$ set of solutions of equal size for every $\t k$, as explained above, the total value for the Witten index is:
\begin{equation}\label{Witten_index_match}
\IW[\U(K-N)_{-K, -N}]=\frac{N}{K}\binom{K}{K-N}=\binom{K-1}{N-1}~.
\end{equation}
As anticipated, this matches precisely with the Witten index \eqref{Witten_index_SU(Nc)K} of the $\SU(N)_K$ theory.

This result suggests ideas to find an explicit isomorphism between the Bethe equations of the $\SU(N)_K$ theory and that of the $\U(K-N)_{-K, -N}$ theory. From \eqref{Witten_index_match} we see that the level/rank duality can be naturally formulated as complements of a set \cite{Closset:2017bse,Closset:2019hyt,Closset:2017zgf,Closset:2018ghr}: while in the $\SU(N)_K$ theory we consider selecting $N$ elements from a set with $K$ elements, in the $\U(K-N)$ theory it is natural to select rather $K-N$ elements. These subsets are precisely complements of each other. It is still non-trivial to find the set $\{j_a\}$ labelling a vacuum of the $\U(K-N)_{-K,-N}$ theory that corresponds to a specific $\underline{l}= (l_a; \ell)$ in the $\SU(N)_K$ theory. We leave this important question as a challenge for future work.

\sss{0-form and 1-form symmetry.}
The action of  $\Z_N^{(0)}$ on the Bethe vacua simply corresponds to $\hat x_a\mapsto e^{\frac{2\pi i}{N}}\h x_a$, or $\h u_a \rightarrow \h u_a + {1\ov N}$ for all $a$. In other words, this acts as 
\be
\h u \mapsto \h u + \gamma~: \qquad j_a \mapsto j_a + K \; \,\text{mod}\;\, KN~,
\ee
up to a permutation of the $j_a$'s, which is the Weyl symmetry $S_{K-N}$. By the level-rank duality \eqref{levelrankduality}, this 0-form action is supposed to be reflected in the $\SU(N)_K$ theory with the same $N$ and $K$. Indeed, from \eqref{SU(N)-solutions} and \eqref{Z_Nc_shift_Bv}, this $\mathbb Z_N^{(0)}$ symmetry acts on the $\SU(N)$ on-shell variables $\h x_a$ as $\hat x_a\mapsto e^{\frac{2\pi i}{N}}\hat x_a$ as well.

The generator of the 1-form symmetry $\Z_N^{(1)}$ in the $A$-model for the  $\U(K-N)_{-K, -N}$ theory is the flux operator:
\be\label{1-form-U(N)}
\Pi^{\gamma_0} = (\det x)^{-1}~.
\ee
Comparing with the action of the 1-form symmetry operator \eqref{SU(N)_flux_op_def} in the $\SU(N)_K$ theory,  we find the relation:
\begin{equation}\label{1-form_sym-duality}
  \frac{1}{K}\sum_b j_b = \ell+ K \mod N~,
\end{equation}
 Here, $\ell$ is defined for each $\SU(N)_K$ Bethe vacuum as in~\eqref{tracelessness-constraint-SU(Nc)K}, and it is essentially $\sum_a l_a$.
From \eqref{BEU(N)}, it is clear that $\sum_b j_b$ is divisible by $K$, and hence the left-hand-side is indeed an integer. In some cases, \eqref{1-form_sym-duality}  gives a unique relation between the $\SU(N)_K$ vacua and the $\U(K-N)_{-K,-N}$ vacua.

\sss{IR duality.}
Any infrared IR duality between two 3d $\CN=2$ theories $\CT$ and $\CT'$ induces an isomorphism between the ground-state Hilbert spaces:
\begin{equation}\label{iso_D}
   \mathscr{D} \, :\,  \Hilb_{S^1}[\CT]\longrightarrow \Hilb_{S^1}[\CT']~,
\end{equation}
which is a unitary transformation that commutes with all symmetry actions, including higher-form symmetries:
\begin{equation}\label{DUDU}
   \mathscr{D}^\dagger \CU_{\CT} \mathscr{D}=\CU_{\CT}~.
\end{equation}
Here, $\CU_\CT$ represents the symmetry operators $\CU^\gamma$ and $ \Pi^\gamma$ in the $A$-model for the theory $\CT$, for some symmetry group $\Gamma$. Then, a particular consequence of the duality is that all correlation functions of the symmetry operators necessarily agree, and therefore the allowed gauging operations for discrete symmetries will always give rise to dual theories $\CT/\Gamma$ and $\CT'/\Gamma$. For the level/rank duality at hand, we hope to construct the explicit isomorphism~\eqref{iso_D} in later work.

\section{Including matter: \texorpdfstring{$\U(1)$}{U(1)} and \texorpdfstring{$\SU(N)$}{SU(Nc)} theories}
\label{sec:Including matter}

So far, we have been discussing higher-form symmetries for pure $(\text{S})\text{U}(N)$ gauge theories. In this section, we couple the 3d vector multiplets to matter in chiral multiplets. These theories retain a one-form symmetry $\Gamma$ if the matter fields preserve some subgroup of the centre symmetry of the pure gauge theory:
\be
\Gamma \subseteq Z(\t G)~.
\ee
We will first consider the very simple case of an abelian gauge theory with matter fields of non-minimal electric charge. Next, for definiteness, we will consider the $\SU(N)_k$ Chern--Simons-matter theory with $n_{\text{adj}}$ chiral multiplets in the adjoint representation. We will only provide preliminary comments. More explicit computations of twisted indices for these theories could be performed \eg~using the computational algebraic-geometric methods discussed in~\cite{Closset:2023vos}. This is left as another challenge for future work.

\subsection{\texorpdfstring{$\U(1)_k$}{U(1)k} with matter} 

Consider a $\U(1)_k$ vector multiplet coupled to chiral multiplets $\Phi_i$ with electric charge $Q_i \in \Z$. The UV effective CS level $k$ is related to the bare CS level $K\in \Z$ as:
\be
K = k + \half \sum_i Q_i^2~.
\ee
Let us also denote by $Q_i=(Q_{i_+}, Q_{i_-})$ the positive and negative charges, respectively. The effective twisted superpotential reads:
\be
\CW(u, \nu)= \tau u + {K\ov 2} (u^2+u) +{1\ov (2\pi i)^2} \sum_i \dilog{\left(e^{2\pi i Q_i u} y_i\right)}~,
\ee
where $\tau$ is the complexified FI parameter and $y_i$ are flavour fugacities. We thus have the Bethe equation:
\be\label{Betheeq U1 matter}
\Pi(u, \nu)\equiv q (-x)^K \prod_i \left({1\ov 1- x^{Q_i} y_i}\right)^{Q_i}=1~,
\ee
with $q=e^{2\pi i \tau}$. The Witten index of this theory is simply~\cite{Intriligator:2013lca}:
\be
\IW = |\SBE|= {\rm max}\left(|K|+ \sum_{i_{-\epsilon}} Q^2_{i_{-\epsilon}}~, \,  \sum_{i_{\epsilon}} Q^2_{i_{\epsilon}} \right)~, \qquad \epsilon \equiv \sign(K)~.
\ee
This theory has a one-form symmetry:
\be
\Gamma^{(1)}_{\rm 3d}\cong \Z_M~,\qquad M\equiv \gcd(K, Q_i)~.
\ee
This act on the $u$ variable as:
\be
u \mapsto u+ \gamma_0~, \qquad \gamma_0 \equiv {1\ov M}~.
\ee
Indeed, the gauge flux operator $\Pi$ is invariant under such a shift, and it admits a $M$-th root:
\be
\Pi^{\gamma_0}\equiv \Pi^{1\ov M}~.
\ee
It is also easy to check that the one-form symmetry has a 't Hooft anomaly:
\be
\CA(\gamma_{(0)},\gamma_{(1)}) = nm {K\ov M^2}~, \qquad \ie \quad \mathfrak{a}= {K\ov M} \; {\rm mod}\, M~.
\ee
where we have $\gamma_{(1)}=m \gamma_0$ and $\gamma_{(0)}=n \gamma_0$.

Whatever the anomaly, the orbit structure of the Bethe vacua under $\Z_M^{(0)}$ is trivial, as all orbits are of dimension $M$. Indeed, we can solve the Bethe equation~\eqref{Betheeq U1 matter} by writing $x= e^{2\pi i \ell \ov M} z^{1\ov M}$ and solve for $z$. Then, for each solution $z=\h z$, we will have $M$ Bethe roots:
\be
\h x_\ell(\h z) = e^{2\pi i \ell\ov M} \h z~,\qquad \ell\in \Z_M~,
\ee
which are permuted by $\Z_M^{(0)}$:
\be
\CU^{n \gamma_0}\; : \; \ket{\h u_{\ell}+n \gamma_0} \mapsto \ket{\h u_{\ell+n}}~. 
\ee
Moreover, all insertions of topological lines are trivial:
\be\label{trivial corrs U1}
\langle \CU^{n \gamma_0}(\CC) \rangle_{\Sigma_g} = \delta_{n, 0}\langle 1 \rangle_{\Sigma_g}~, \qquad
\langle \Pi^{m \gamma_0} \rangle_{\Sigma_g} = \delta_{m, 0}\langle 1 \rangle_{\Sigma_g}~, 
\ee
as one can readily check. If $K=0$ mod $M$ the 't Hooft anomaly vanishes, and we can then gauge the full $(\Z_M^{(1)})_{\rm 3d}$, yet due to~\eqref{trivial corrs U1} this has only the trivial effect of dividing the twisted index by $M^{2g}$. The interpretation is simply that, in the absence of the anomaly, we can rescale the vector multiplet by $1/M$, which leads to a $\U(1)$ theory with minimal charges $Q_i/M$. 

\sss{The pure $\U(1)_K$ theory.}  In the special case of the pure $\U(1)$ $\CN=2$ CS theory without matter, we have $M=K$. Thus we have the Bethe equation and Bethe roots:
\be
\Pi(u)= q (-x)^K~, \qquad \h u_\ell = {\ell - \tau \ov K} +\half \;{\rm mod}\, 1~, \quad \ell \in \Z_K~,
\ee
where the $K$ vacua are permuted by $\Z_K^{(0)}$. The handle-gluing operator is $\CH= K$, and hence $Z_{\Sigma_g\times S^1}= K^g$ for this theory. The anomaly is $\mathfrak{a}=1$ mod $K$ and it is maximal, in the sense of section~\ref{sec:tHooftanomalies_vacuum}. Nonetheless, we can still gauge a subgroup $\Z_r\subset \Z_K$ 
in 3d if $r^2 |K$. In such a case, since all non-trivial insertions of symmetry operators vanish, the $\Z_r^{(1)}\oplus \Z_r^{(0)}$ gauging trivially gives:
\be
Z_{\Sigma_g\times S^1}[\U(1)_K/(\Z_M^{(1)})_{\rm 3d}] = {1\ov r^2} Z_{\Sigma_g\times S^1}[\U(1)_K] = \left({K\ov r^2}\right)^g~. 
\ee
This is interpreted as a rescaling of the vector multiplet by $1/r$, which gives us a well-defined $\U(1)_{{K\ov r^2}}$ CS theory precisely if $r^2|K$.

\subsection{\texorpdfstring{$\SU(N)_k$}{SU(N)k} with adjoint matter}
\label{sec: SU(N) with adj}

Let us discuss some general aspects of the $\SU(N)_k$ $\CN=2$ CS theory coupled with $n_{\text{adj}}$ chiral multiplets in the adjoint representation of the gauge group. For $n_{\text{adj}}=1$, this theory allows one to compute the so-called equivariant Verlinde formula~\cite{Gukov:2015sna} for $\t G=\SU(N)$, which was extended to $G=\t G/\Gamma$ in~\cite{Gukov:2021swm}.

Here, the integer $k$ is the UV effective CS level. In our conventions (see \eg~\cite{Closset:2019hyt}), it is related to the bare CS level $K$ that appear in the twisted superpotential as: 
\begin{equation}\label{effective-CS-adj}
    K = k + n_{\text{adj}} N~.
\end{equation}
The effective twisted superpotential  of this theory is given by:
\begin{equation}\label{tsp-SU(N)-adj}
    \mathcal{W}(u,\nu) =\frac{K}{2}\sum_{a=1}^{N} u_a^2  + u_0\sum_{a=1}^{N} u_a +\frac{1}{(2\pi{i})^2} \sum_{i=1}^{n_{\text{adj}}} \sum_{\substack{a,b=1 \\ a\neq b}}^{N} \text{Li}_2\left(e^{2\pi{i} (u_a-u_b)}y_i\right)~,
\end{equation}
where the first two contributions are the same as in~\eqref{tsp-for-SU(Nc)K}, including the Lagrange multiplier $u_0$ that imposes tracelessness, while the dilogarithms are the contributions from the adjoint chiral multiplets,  with $\nu_i$ their twisted masses and $y_i \equiv e^{2\pi i\nu_i}$.  Then, the Bethe equations read:
\begin{align}\label{BAE-SU(N)-adj}
\begin{split}
     &\Pi_a(u, \nu) \equiv q {x}_{a}^k \prod_{i=1}^{n_{\text{adj}}}\prod_{\substack{b=1\\b\neq a}}^{N} \frac{{x}_a - y_i {x}_b}{{x}_b - y_i {x}_a} = 1~, \qquad a = 1, \cdots, N~,\\
      &\Pi_0(u, \nu)  \equiv  \prod_{a=1}^N x_a  = 1~,
\end{split}
\end{align}
where we defined $q=e^{2\pi i u_0}$ exactly as in section~\ref{sec:general_structure_SU(Nc)}.

\medskip
\noindent
{\textbf{Eigenvalues of the $A$-operators.}} Since the adjoint matter multiplets are not charged under the centre $\mathbb{Z}_{N}\subset \SU(N)$,  we retain the full 3d 1-form symmetry  $(\mathbb{Z}_{N}^{(1)})_{\rm 3d}$. Using the explicit form of the effective twisted superpotential provided above, we can explicitly compute the eigenvalues of the $\Pi^{\gamma}$ operator  \eqref{def Piu gamma} (which is diagonalised by the Bethe vacua):
\begin{equation}\label{Pigamma onshell adjs}
    \Pi^{\gamma_0}(\hat{u}) = \hat{x}_{1}^k \prod_{i=1}^{n_{\text{adj}}}\prod_{b=2}^{N} \frac{\hat{x}_1 - y_i \hat{x}_b}{\hat{x}_b - y_i \hat{x}_1}  = \h q^{-1}~,
\end{equation}
where $\gamma_0$ is the generator of $\Z_N^{(1)}$ defined in~\eqref{gamma-a-SU(Nc)} and $\h q$ denotes the on-shell value of $q$, as we used the equations \eqref{BAE-SU(N)-adj}. Note that, in general, one might want to add an overall phase to~\eqref{Pigamma onshell adjs}, as discussed in the previous sections. 

\medskip
\noindent
{\textbf{The anomaly factor.}} The fact that the adjoint matter is not charged under the $(\Z_N^{(1)})_{\rm 3d}$ symmetry directly implies that:
\begin{equation}
    \mathcal{W}_{\text{adj}}(u+\gamma_0) = \mathcal{W}_{\text{adj}}(u)~,
\end{equation}
as one can readily check. The same argument would apply to any gauge theory with matter preserving some 1-form symmetry. 
Hence, we see that the 't Hooft anomaly factor \eqref{anomlay-factor} only receives contributions only from the bare CS terms. 
We thus find that the 't Hooft anomaly is:
\be
\mathfrak{a}= - K \; {\rm mod} N =- k \; {\rm mod} N~,
\ee
where we use the relation \eqref{effective-CS-adj}. Hence the general constraints on orbit structures that follow from the presence of a 't Hooft anomaly are exactly the same as for the pure $\CN=2$ CS theory.

\medskip
\noindent
In principle, it is now straightforward to apply the formalism of section~\ref{sec:1-form and A-model} to this theory. In practice, explicit computations remain challenging. We hope to come back to this problem in future works.

\section{Conclusion and outlook}\label{sec:conclusion}
Three-dimensional $\CN=2$ supersymmetric gauge theories with a $\U(1)_R$ $R$-symmetry allow for the computation of many exact observables, but previous methods on half-BPS three-manifolds $\CM_3$ with non-trivial first homology were restricted to gauge groups $\t G$ whose fundamental group $\pi_1(G)$ is a free abelian group. In this work, we lifted this restriction in the case of $\CM_3=\Sigma_g\times S^1$, wherein the supersymmetric partition computes the topologically twisted index~\cite{Nekrasov:2014xaa, Benini:2015noa, Benini:2016hjo, Closset:2016arn}. 

We developed a systematic formalism to compute the twisted index $Z_{\Sigma_g\times S^1}$ by gauging a one-form symmetry $(\Gamma^{(1)})_{\rm 3d}\cong \Gamma$ of the 3d gauge theory. We performed this gauging in the 3d $A$-model formalism on $\Sigma_g$, hereby gauging the two distinct discrete higher-form symmetries $\Gamma^{(1)}$ and $\Gamma^{(0)}$ in  the 2d TQFT description:
\be
Z_{\Sigma_g\times S^1}[\CT/(\Gamma^{(1)})_{\rm 3d}] = {1\ov |\Gamma|^{2g}} \sum_{B\in H^2(\Sigma_g,\Gamma)}\sum_{C\in H^1(\Sigma_g,\Gamma)} Z_{\Sigma_g\times S^1}[\CT](B,C)~.
\ee
This gauging was be done very explicitly, building directly on earlier insights~\cite{willett,Eckhard:2019jgg,Gukov:2021swm}. In particular, we refined the results of~\cite{Gukov:2021swm} and connected them to the 2d Hilbert-space approach in the $A$-model. As an intermediate step, we carefully studied the correlation functions of topological point and line operators that implement the symmetries in the $A$-model:
\be\label{correlators conclusion}
Z_{\Sigma_g\times S^1}[\CT](B,C)\cong \langle \Pi^{\gamma}({\rm pt}) \, \CU^{\delta}(\CC)   \rangle_{\Sigma_g}~, \qquad \gamma\in \Gamma^{(1)}~, \quad \delta\in \Gamma^{(1)}~.
\ee
We also studied the 't Hooft anomalies that can affect these symmetries and showed how they usefully constrain the structure of the 2d ground states (also known as Bethe vacua).

The core of this paper was the study of the $\SU(N)_K$ $\CN=2$ Chern--Simons theory with a $\Z_N$ symmetry, for any $N$, and of all the $(\SU(N)/\Z_r)_K$ $\CN=2$ CS theories that one can obtain by discrete gauging. Our analysis leads to very explicit formulas for the twisted indices, and especially for the Witten index (at $g=1$), which we can write down in terms of simple number-theoretic functions for any $N$, $r$ and $K$. While our results match with many previous results in special cases (see \eg~\cite{beauville1996verlinde,alekseev2000formulas,Oprea+2011+181+217,Meinrenken2018} in the mathematical literature), our most general formulas appear to be new results. In particular, we noticed and exploited a subtle gravitational-$(\Gamma^{(1)})_{\rm 3d}$ mixed anomaly of the $\SU(N)_K$ theory for $N$ even, whose treatment is crucial in order to obtain the correct results for any $r|N$. As already emphasised in the introduction, this mixed anomaly is already present for $\SU(2)_K$ with $K/2$ even, and it is always related to the corresponding  $(\SU(N)/\Z_r)_{k=K-N}$ $\CN=0$ CS theory being a spin-TQFT instead of a bosonic 3d TQFT. Unfortunately, the presence of the gravitational mixed anomaly actually means that our naive index computation in this case does not capture the full story. This will be explained and expanded on in future work~\cite{Closset:2025lqt,CFKK-24-II}.

\sss{Outlook.} Many different research directions could be followed to extend the results of this paper. Let us briefly enumerate the most salient ones:

\begin{itemize}
    \item Using the formalism of this paper, one can compute twisted indices for any $\CN=2$ gauge theory with a real compact gauge group $G$, modulo the caveats already mentioned.%
    \footnote{That is, with the usual assumptions for the 3d $A$-model to exist, including the existence of an $R$-symmetry, and also assuming the Bethe states of the $G$ theory are all bosonic.} To do this as explicitly as possible, one should further develop powerful computational algebraic methods, as was recently done for unitary gauge groups~\cite{Closset:2023vos} -- in that language, $\Gamma^{(0)}$ and $\Gamma^{(1)}$ correspond to a non-trivial action and to a grading, respectively, on the Bethe ideal defined by the Bethe equations. It is then conceivable that a more algebraic approach to the computation of the correlators~\eqref{correlators conclusion} is possible. 

    \item One low-hanging fruit might be the explicit computation of Witten indices for general gauge groups with general $(\Gamma^{(1)})_{\rm 3d}$ a finitely-generated abelian group, possibly with matter, given the result for $G=\t G$.  For non-cyclic groups $\Gamma^{(1)}_{3d}$, we expect an analogous result as for the $\t G=\SU(N)$ case of this paper, with the enumerating functions being appropriate totient functions for finite abelian groups (see \eg~\cite{HALL1936}).

    \item While we focussed on one-form symmetries of 3d $\CN=2$ theories, such field theories can admit even more interesting generalised symmetries, including non-invertible symmetries. This will be particularly important to explore when studying $\SO(N)$ gauge theories, where we would need to study the gauging of both 1-form and 0-form symmetries in 3d, corresponding to rich combinations of 1-, 0- and $(-1)$-form symmetries in the $A$-model. See \eg~\cite{Kapustin:2011gh,Mekareeya:2022spm,Verlinde:1988sn,Tachikawa:2017gyf,Huang:2021zvu,Benini:2018reh,Hsin:2020nts,Cordova:2017vab,Aharony:2016jvv,Delmastro:2022pfo} for relevant studies. 

 \item The approach of this paper can be extended to compute more general partition functions for $G=\t G/\Gamma$, by inserting Seifert-fibering operators on $\Sigma_g$ to obtain $Z_{\CM_3}$ for any Seifert 3-manifold $\CM_3$~\cite{Closset:2012ru, Closset:2018ghr, Closset:2019hyt} -- this will be discussed in~\cite{Closset:2025lqt}.
 
     \item The $A$-model approach is also applicable to 4d $\CN=1$ supersymmetric field theories compactified on $T^2$~\cite{Nekrasov:2014xaa, Closset:2013vra, Closset:2017bse, Closset:2019ucb}, and it will be very interesting to study higher-form symmetries in that context. We expect 't Hooft anomalies and their consequences to be particularly intricate in this case.

\end{itemize}
\noindent
Other interesting questions include the study of higher-form symmetries for the $T[M_3]$ theory of the 3d/3d correspondence~\cite{Dimofte:2011ju, Gukov:2016gkn, Gukov:2017kmk, Eckhard:2019jgg}, and the application of the 2d perspective to other 3d TQFTs (whether or not the UV completion is supersymmetric), for instance the theories of~\cite{Gang:2018huc, Gang:2021hrd, Gang:2023rei} and of~\cite{Bonetti:2024cvq}. Finally,  our explicit expressions for the $(\SU(N)/\Z_r)_K$   indices seem amenable to large-$N$ studies, which could be very interesting to explore, for instance, in connection to supersymmetric black holes in holography~\cite{Benini:2016rke}.

\acknowledgments 
We are grateful to Riccardo Argurio, Lakshya Bhardwaj, Lea Bottini, Andrea Ferrari, Sergei Gukov, Dragos Oprea, Du Pei, Sakura Schafer-Nameki, and Shu-Heng Shao for correspondence and discussions, and we are particularly grateful to Brian Willett for allowing us to use his unpublished results in this work. CC also acknowledges many influential discussions with Heeyeon Kim and with Brian Willett on the subject of the 3d $A$-model over many years. Moreover, we would like to thank the referees for their helpful comments.
CC is a Royal Society University Research Fellow. EF is supported by the EPSRC grant ``Local Mirror Symmetry and Five-dimensional Field Theory''. The work of OK is supported by the School of Mathematics at the University of Birmingham. We thank the Galileo Galilei Institute for Theoretical Physics for the hospitality and the INFN for partial support during the completion of this work.

\appendix

\section{3d \texorpdfstring{$A$}{A}-model for \texorpdfstring{$\t G$}{G}: a lightning review}\label{app:Amodel}
In this appendix, we review some aspects of the 3d $A$-model for 3d $\mathcal{N}=2$ Chern--Simons-matter gauge theory $\mathcal{T}_{\t G}$ with a gauge group $\t G$, a product of simply connected compact Lie groups and of unitary gauge groups. We refer to \cite{Closset:2017zgf, Closset:2019hyt} for further background and explanations, as well as to~\cite{Closset:2023vos} for a recent review in the case of unitary gauge groups.

\subsection{The Coulomb branch parameters}
The main player in this discussion is the 3d classical Coulomb branch parameter, which we denote by $u$. To define this variable, we put our 3d theory on $\mathbb{R}^2\times S_\beta^1$. Effectively, this gives us a 2d $\mathcal{N}=(2,2)$ theory with an infinite number of massive Kaluza-Klein (KK) modes that carry momentum along the compactified dimension $S^1_\beta$ with radius $\beta$. In the 2d $\mathcal{N}=(2,2)$ language, the vector multiplet is repackaged into a twisted chiral multiplet whose lowest component is a complex scalar $u$. This dimensionless scalar is defined by combining the real scalar $\sigma$ with the 3d gauge field along the $S^1$-direction, $A_3$:
\begin{equation}
    u  = \beta(i \sigma + A_3) \in \mathfrak{t}/\Lambda^{\t G}_{\rm mw}~.
\end{equation}
Here, $\mathfrak t$ is the Cartan subalgebra of $\t G$, and $\Lambda^{\t G}_{\rm mw}$ is the magnetic weight lattice. We refer to appendix~\ref{app: Lattices} for a brief review of the relevant electric and magnetic weight lattices for general gauge groups. Choosing $\{e^a\}_{a=1}^{\text{rk}\;{\t G}}$ to be an integral basis for the magnetic weight lattice $\Lambda^{\t G}_{\rm mw}$, we can expand $u$ as follows:
\begin{equation}
    u = u_ae^a~, 
\end{equation}
where the sum over the index $a= 1, \cdots, {\rm rank}(\t G)$ is assumed. This basis is integral in the sense that we have:
\begin{equation}
\rho(u) = \rho^a u_a~, \qquad\qquad    \rho^a \equiv \rho(e^a) \in \mathbb{Z} ~,
\end{equation}
for any electric weight $\rho\in \Lambda_{\text{w}}^{\t G}$. Under large gauge transformations, the Coulomb branch parameters $u_a$ transform as:
\begin{equation}\label{large-gauge-transform}
    u_a \sim u_a + n_a~, \qquad n_a\in\mathbb{Z}~.
\end{equation}
Therefore, it is sometimes useful to introduce the single-valued parameters:
\begin{equation}
    x_a \equiv e^{2\pi i u_a}~, \qquad a = 1, \cdots, \text{rk}\;{\t G}~.
\end{equation}
One can play a similar game for any flavour symmetry group $G_{\text{F}}$ that might be present in the theory. In this case, we consider a 3d $\CN=2$ vector multiplet with real scalar $m^{\text{F}}$ and gauge field $A^{\text{F}}$, and we define the 2d twisted masses:
\begin{equation}
    \nu \equiv \beta(i m^{\text{F}} + A_{3}^{\text{F}})\in \mathfrak{t}^{\text{F}}/\Lambda^{G_{\text{F}}}_{\rm mw}~.
\end{equation}
As  in the case of the gauge group $\t G$,  we can pick an integral basis $\{e^\alpha_{\text{F}}\}_{\alpha=1}^{\text{rk}\;G_{\text{F}}}$ for the flavour magnetic weight lattice $\Lambda^{G_{\text{F}}}_{\rm mw}$, so that:
\begin{equation}
    \nu = \nu_\alpha e_{\text{F}}^\alpha~, 
\end{equation}
where $\alpha=1, \cdots, {\rm rank}(G_\text{F})$ runs over a maximal torus of the flavour group.

\subsection{The effective twisted superpotential and the effective dilaton}
The 2d $\CN=(2,2)$ low energy effective description on $\Sigma$ is controlled by the so-called effective twisted superpotential $\mathcal{W}(u, \nu)$ and by the effective dilaton  $\Omega(u,\nu)$,  which one obtains upon integrating out the massive charged chiral multiplets on $\Sigma\times S^1$~\cite{Nekrasov:2014xaa}.

\medskip
\noindent
\textbf{Effective twisted superpotential.}
The twisted superpotential receives contributions from the CS action and from the 3d $\CN=2$ chiral multiplets. It has the following general form:
\begin{equation}\label{superpotential-general-form}
    \mathcal{W}(u,\nu)  = \mathcal{W}_{\text{matter}}(u,\nu) + \mathcal{W}_{\text{CS}}(u, \nu) ~.
\end{equation}
The matter contribution reads:
\begin{equation}\label{matter-cont-gen}
    \mathcal{W}_{\text{matter}}(u,\nu) \equiv \frac{1}{(2\pi i)^2}  \sum_{(\rho, \rho^{\text{F}})\in \mathfrak{R}\times\mathfrak{R}^{\text{F}}}\text{Li}_2\left(e^{2\pi i (\rho(u) + \rho_{\text{F}}(\nu))}\right)~,
\end{equation}
where $\mathfrak{R}\times\mathfrak{R}^{\text{F}}$ is the gauge and flavour representation of the 3d $\CN=2$ chiral multiplets $\Phi$  under ${\t G}\times G_{\text{F}}$, and $(\rho, \rho^{\text{F}}) \in \Lambda^{\t G}_{{\rm mw}} \times \Lambda^{G_{\text{F}}}_{{\rm mw}}$ are the corresponding weights. The CS contributions are schematically given by:
\begin{equation}\label{CS-cont-gen}
    \mathcal{W}_{\text{CS}}(u) = \frac{1}{2}\sum_{a,b} K_{ab} \left(u_a u_b + \delta_{ab} u_a\right) ~,
\end{equation}
where $K_{ab}$ denote the effective UV CS levels associated with the gauge group $\t G$ in the so-called $\U(1)_{-\frac{1}{2}}$ quantization~\cite{Closset:2019hyt}.%
\footnote{For a recent review, see also section 2.2 of~\protect\cite{Closset:2023vos}.} The expression~\eqref{CS-cont-gen} is the contribution from the gauge CS terms, but the flavour CS levels~\cite{Closset:2012vp} contribute similarly.

\medskip
\noindent
\textbf{Effective dilaton.} This effective dilaton is a holomorphic function that couples $u$ to the curvature of $\Sigma$~\cite{Closset:2014pda, Nekrasov:2014xaa}. For our gauge theory, it reads:
\begin{equation}\label{eff-dilaton-G}
    \Omega(u,\nu) = \Omega_{\text{CS}}(u,\nu) + \Omega_{\text{matter}} (u,\nu)+ \Omega_{\text{W-boson}}(u)~.
\end{equation}
The CS contribution involves (mixed) CS levels for the $\U(1)_R$ symmetry~\cite{Closset:2017zgf}:
\begin{equation}
    \Omega_{\text{CS}}(u,\nu) = K_{RG} \sum_{a=1}^{\text{rk}(\t G)} u_a + \sum_{\alpha=1}^{\text{rk}(G_{\text{F}})}K_{R\alpha} \nu_\alpha + \frac{1}{2}K_{RR}~.
\end{equation}
The matter contribution reads:
\begin{equation}
    \Omega_{\text{matter}}(u,\nu)  = -\frac{1}{2\pi i } \sum_{(\rho, \rho^{\text{F}})\in \mathfrak{R}\times\mathfrak{R}^{\text{F}}} (r_{\rho^{\text{F}}}-1) \log\left(1-e^{2\pi i (\rho(u) + \rho^{\text{F}}(\nu))}\right)~,
\end{equation}
where $r_{\rho^{\text{F}}}$ denote the R-charges of the chiral multiplets. Finally, the W-bosons contribute to the effective dilaton potential as:
\begin{equation}
    \Omega_{\text{W-bosons}} (u) =-\frac{1}{2\pi i} \sum_{\alpha\in\Delta} \log\left(1-e^{2\pi i \alpha(u)}\right)~,
\end{equation}
where the sum is over the roots of the gauge group $\t G$.

\subsection{The 3d topologically twisted index}
The topologically twisted index for a 3d $\CN=2$ gauge theory with gauge group $\t G$ can be computed as a trace over certain operators in the 3d $A$-model~\cite{Nekrasov:2014xaa, Closset:2016arn}. All these operators $\CO$ are given `off-shell' as function $\CO(u)$ of the gauge and flavour parameters $u$ and $\nu$. They diagonalise the Bethe vacua, and their `on-shell' values at solutions of the Bethe equations are denoted by $\CO(\h u)$:
\be
\CO \ket{\h u} = \CO(\h u) \ket{\h u}~.
\ee
The Bethe equations themselves are written in terms of  the gauge flux operators:
\begin{equation}\label{gauge-flux-op}
    \Pi_a(u,\nu) = \exp{\left(2\pi i \frac{\partial \mathcal{W}(u,\nu)}{\partial u_a}\right)} ~, \qquad a= 1, \cdots, \text{rk}\;\t G~.
\end{equation}
By definition, the gauge flux operators are trivial on-shell, $\Pi_a(\h u)=1$. The set of Bethe vacua is defined as in~\eqref{bethe-vacua-simply-connected}, namely:
\begin{equation} 
    \SBE \equiv \left\{ \hat{u}\in \mathfrak{t}/\Lambda_{\rm mw}^{\t G} :\; \Pi_{a}(\hat{u}, \nu) = 1~,\forall a\quad \text{and}\quad w\cdot \hat{u}\neq \hat{u}~, \forall w\in W_{\t G} \right\}/{W_{\t G}}~.
\end{equation}
Here we need to exclude putative solutions that have a non-trivial stabiliser for the action of the Weyl group $W_{\t G}$, and we then identify all solutions related by the Weyl symmetry.
 Given a non-trivial flavour symmetry group $G_{\text{F}}$,%
 \footnote{Or rather a flavour symmetry algebra; we may assume that the fundamental group of $G_{\text{F}}$ is a free abelian group for our purposes here.} 
 we similarly define the flavour flux operator:
\begin{equation}\label{flavour-flux-op}
    \Pi_{\text{F},\;\alpha}(u,\nu)=  \exp{\left(2\pi i \frac{\partial\mathcal{W}(u,\nu)}{\partial \nu_\alpha}\right)}~, \qquad \alpha = 1, \cdots, \text{rk}\;G_{\text{F}}~.
\end{equation}
Finally, the most important operator is the handle-gluing operator $\mathcal{H}_{\t G}(u,\nu)$ which is given by:
\begin{equation}\label{handle-gluing-defn}
    \mathcal{H}_{\t G}(u,\nu) = \exp{\left(2\pi i \Omega(u,\nu)\right)} \;\det_{1\leq a,b\leq \text{rk}\;\t G} \left(\frac{\partial^2\mathcal{W}(u,\nu)}{\partial u_a \partial u_b}\right)~.
    \end{equation}
We simply denote it by $\CH$ whenever it is clear that we are talking about the $\t G$ theory.
 The insertion of this operator on $\Sigma$ has the effect of adding a handle, thus increasing the genus of the Riemann surface~\cite{Nekrasov:2014xaa}.

\medskip
\noindent
\textbf{The 3d flavoured twisted index from the $A$-model.}
The 3d flavoured twisted index is the Witten index defined as a trace over the Hilbert space of the theory compactified on $\Sigma_g$ with a topological $A$-twist and with fugacities $y_\alpha\equiv e^{2\pi i \nu_\alpha}$ and background magnetic fluxes $\m_{\text{F}}$ for the flavour symmetry:
\begin{equation}\label{twisted-index}
    Z_{\Sigma_g\times S^1} = \Tr_{\mathcal{H}_{\Sigma_g; \mathfrak{m}_{\text{F}}}} \left((-1)^{\text{F}} \prod_{\alpha=1}^{\text{rk}\;G_{\text{F}}}y_\alpha^{Q_{\alpha}^{\text{F}}}\right)~.
\end{equation}
It can be computed in the $A$-model formalism as a trace over $\Hilb_{S^1}$, the ground-state Hilbert space spanned by the Bethe vacua:
\be\label{twisted index 2d TQFT gen}
    Z_{\Sigma_g\times S^1} =  \Tr_{\Hilb_{S^1}} \left(\CH^{g-1} \Pi_{\rm F}^{\m_{\rm F}}\right)~.  
\ee
For the $\t G$ gauge theory, this 2d TQFT formula takes the explicit form~\cite{Closset:2016arn}:
\begin{equation}\label{3d-twisted-index}
    Z_{\Sigma_g\times S^1}[\t G](\nu)_{\m_{\rm F}} = \sum_{\hat{u}\in \SBE} \CH_{\t G}^{g-1}(\hat{u}, \nu) \Pi(\hat{u}, \nu)^{\mathfrak{m}} \Pi_{\text{F}}(\hat{u},\nu)^{\mathfrak{m}_\text{F}}~.
\end{equation}
where we evaluate~\eqref{flavour-flux-op}  and~\eqref{handle-gluing-defn} at the Bethe vacua. Note that we use the shorthand notation $\Pi_{\rm F}^{\m_{\rm F}}\equiv \prod_\alpha \Pi_{{\rm F},\;\alpha}^{\m_{{\rm F},\alpha}}$.

The formula~\eqref{twisted index 2d TQFT gen} holds for any 3d $A$-model, and in particular for any 3d $\CN=2$ gauge theory with any choice of (compact, real) gauge group $G$. In the main text, we effectively compute $\CH_G$ for $G= \t G/\Gamma$ if $\Gamma$ is non-anomalous. We can further develop the 2d TQFT perspective by assigning Hilbert-space operations to basic two-dimensional cobordisms, as discussed in figure~\ref{fig:pop} in the main text.

\section{Lattices associated with Lie groups}\label{app: Lattices}
In this appendix, we review some basic facts about Lie groups. In particular, 
starting from any (simple) Lie group $G$, one can build six generally distinct lattices which are important in the representation theory of $G$. See~\cite{Humphreys1972,Fulton2004,Kapustin:2005py} for a more comprehensive treatment.

\medskip
\noindent
First, for any lattice $L\subseteq \mathbb R^n$,  the dual lattice $L^*$ is defined as:
\begin{equation}
L^*=\{f\in (\text{span}(L))^* \, | \, f(x)\in\mathbb Z\,,\, \forall x\in L\}~.
\end{equation}
It holds that $(L^{*})^*=L$.
 Without too much loss of generally, let us assume that the  Lie algebra $\mathfrak g=\text{Lie}(G)$ of $G$ is simple and compact.\footnote{Once we complexify the Lie algebra, it does not make sense to talk about whether the Lie algebra is compact or not. We can still use the fact that every complex semi-simple Lie algebra always possesses a real form that is compact.} Then, the complexification $\mathfrak g_{\mathbb C}$ of $\mathfrak g$ admits a decomposition :
\begin{equation}
\mathfrak g_{\mathbb C}=\mathfrak t_{\mathbb C}\oplus\bigoplus_{\alpha\in \Phi}V_\alpha~,
\end{equation}
where  ${\mathfrak t}_\bC$ is the \emph{Cartan subalgebra}, {\it i.e.} the maximal abelian subalgebra of $\mathfrak g$.\footnote{It is customary to be a bit careless about whether by $\mathfrak g$ one actually means $\mathfrak g_{\mathbb C}$.}
The set $\Phi\subseteq \mathfrak t^*$ is the set of roots, and $V_\alpha\subseteq \mathfrak g_{\mathbb C}$ are the root spaces. The Cartan subalgebra $\mathfrak t_{\mathbb C}\ni H$ acts on the root spaces $V_\alpha\ni X_\alpha$ as:
\begin{equation}
    [H, X_\alpha]= \alpha(H) X_\alpha~.
\end{equation}
We can always ``diagonalise'' the $V_\alpha$. For $E_\alpha\in V_\alpha$, there is an $H_\alpha\in \mathfrak t$ for each $\alpha\in\Phi$ such that:
\begin{equation}
[E_\alpha, E_{-\alpha}]=H_\alpha~, \qquad [H_\alpha, E_{\pm \alpha}]=\pm 2 E_{\pm\alpha}~.
\end{equation}
The $E_\alpha$ are called \emph{root vectors} and the $H_\alpha$ are the \emph{coroot vectors}. They have a natural pairing: for any $\alpha,\beta\in \Phi$, $\alpha(H_\beta)\in \mathbb Z$. 
 The coroots span the Cartan subalgebra $\mathfrak t_{\mathbb C}$ over $\C$.

The roots span a lattice called the \emph{root lattice} $\Lambda_{\rm r}\subseteq \mathfrak t^*$ of $\mathfrak g$, while the coroots span the coroot lattice $\Lambda_{\rm cr}\subseteq \mathfrak t$. The root system typically has a large isometry group, which is called the Weyl group $W$ of $\mathfrak g$. The \emph{weight lattice} $ \Lambda_{\rm w}=\Lambda_{\rm cr}^*$ is defined as the dual of the coroot lattice. It contains the root lattice $\Lambda_{\rm r}$ as a sublattice.  The quotient 
\begin{equation}
\Lambda_{\rm w}/\Lambda_{\rm r}\cong Z(\widetilde G)~,
\end{equation}
 is isomorphic (as an abelian group) to the centre of $\widetilde G$, which is the unique simply connected compact Lie group with $\text{Lie}(G)=\text{Lie}(\widetilde G)=\mathfrak g$. Furthermore, the \emph{magnetic weight lattice of $\Fg$}, $\Lambda_{\rm mw}=\Lambda_{\rm r}^*$, is defined as the dual of the root lattice. It contains the coroot lattice as a sublattice, and their quotient is again isomorphic to $Z(\widetilde G)$.

 The four lattices $\Lambda_{\rm r}$, $\Lambda_{\rm w}$, $\Lambda_{\rm mw}$ and $\Lambda_{\rm cr}$ depend only on the Lie algebra $\mathfrak g$ and not on the global form of the Lie group $G$. In order to construct representations for $G$ rather than for $\mathfrak g$, one must study the exponential map $\exp: \mathfrak g\to G$, $H\mapsto e^{2\pi i H}$. A fundamental theorem states that, if $G$ is connected and compact, the exponential map is surjective, {\it i.e.}  $\exp(\mathfrak g)=G$. By restricting the exponential map to the Cartan subalgebra, we can construct two new lattices which depend on the global form of $G$.
 Firstly, we have the lattice:
 \begin{equation}\label{lambdamw}
\Lambda_{\rm mw}^G\equiv  \ker \exp|_{\mathfrak t}=\{H\in \mathfrak t \, | \, e^{2\pi i H}=\mathbbm 1\}~ 
\end{equation}
 which we call {\it the magnetic weight lattice of $G$.} This is the lattice of GNO-quantised magnetic fluxes for $G$~\cite{Goddard:1976qe}.  The dual lattice $\Lambda_{\rm w}^G\equiv (\Lambda_{\rm mw}^G)^*$ is {\it the weight lattice of the group $G$}. This is a
sublattice of $\Lambda_{\rm w}$ which always contains $\Lambda_{\rm r}$. In summary, we have:
\begin{equation}
\begin{tikzcd}
\mathfrak t^*: &\arrow[d,"*", leftrightarrow]\Lambda_{\rm r}&   \stackrel{Z(G)}{\subseteq}  
& \Lambda_{\rm w}^G\arrow[d,"*",leftrightarrow]& \stackrel{\pi_1(G)}{\subseteq}   
&\Lambda_{\rm w}\arrow[d,"*",leftrightarrow]\\
\mathfrak t:&\Lambda_{\rm mw}& \stackrel{Z(G)}{\supseteq}&  \Lambda_{\rm mw}^G&\stackrel{\pi_1(G)}{\supseteq} &\Lambda_{\rm cr}
\end{tikzcd}~
\label{lattice_relations}
\end{equation}
 The groups $Z(G)$ and $\pi_1(G)$ above the inclusions denote the groups that the respective quotients are isomorphic to, {\it i.e.}. $B\stackrel{\CG}{\supseteq}A$ means that $B/A\cong \CG$.

For the purpose of gauging higher-form symmetries, it is useful to further refine these sequences  of sublattices. Let $\Gamma\subseteq Z(G)$ be a subgroup of the centre of $G$. Then there exists a magnetic weight lattice $\Lambda_{\rm mw}^{G/\Gamma}$ such that: 
\begin{equation}\label{Gamma_lattice}
    \Gamma\cong \Lambda_{\rm mw}^{G/\Gamma}/\Lambda_{\rm mw}^G~.
\end{equation}
Its dual lattice $\Lambda_{\rm w}^{G/\Gamma}$ in $\mathfrak t^*$ is a sublattice of $\Lambda_{\rm w}^G$, with the  quotient again being isomorphic to $\Gamma$. 
The other quotient is isomorphic to $Z(G)/\Gamma$, which is well-defined since $Z(G)$ is abelian and thus $\Gamma\triangleleft Z(G)$ is normal in $Z(G)$.
We illustrate this in the following diagram: 
\begin{equation}\label{lattice-with-Gamma}
\begin{tikzcd}
\mathfrak t^*: &\arrow[d,"*", leftrightarrow]\Lambda_{\rm r}&   \stackrel{Z(G)/\Gamma}{\subseteq}  
& \Lambda_{\rm w}^{G/\Gamma}\arrow[d,"*",leftrightarrow]& \stackrel{\Gamma}{\subseteq}   
&\Lambda_{\rm w}^G\arrow[d,"*",leftrightarrow]\\
\mathfrak t:&\Lambda_{\rm mw}& \stackrel{Z(G)/\Gamma}{\supseteq}&  \Lambda_{\rm mw}^{G/\Gamma}&\stackrel{\Gamma}{\supseteq} &\Lambda_{\rm mw}^G
\end{tikzcd}~
\end{equation}
Obviously, as suggested by the notation, $\Lambda_{\rm w}^{G/\Gamma}$ is the weight lattice for the group $G/\Gamma$.

\section{Mixed 't Hooft anomaly from dimensional reduction}\label{app:tHooftanomlay}

In this appendix, we further discuss the 4d anomaly theory \eqref{PB_anomaly} associated with the 3d 1-form symmetry $\Gamma^{(1)}_{\text{3d}}\cong \Gamma$. We reduce the theory along $S^1$ and show how the 't Hooft anomaly reduces to a mixed 't Hooft anomaly between the 2d 0-form and 1-form symmetries $\Gamma^{(0)}$ and $\Gamma^{(1)}$. We refer to~\cite{Benini:2018reh,Kapustin:2013qsa} for recent physics discussions and to~\cite{Whitehead1949,browder1962} for the original mathematical background. 

Associated with $\Gamma$, there exists a unique abelian group $\h \CA({\Gamma})$, such that for any abelian group $\Gamma'$  there exists a quadratic function $\gamma:\Gamma\to \h\CA(\Gamma)$ that satisfies $q=\tilde q\circ \gamma$ for any quadratic function $q:\Gamma\to \Gamma'$ and some $\tilde q\in \text{Hom}(\h\CA(\Gamma),\Gamma')$ uniquely determined by $q$.%
 \footnote{A map $q:\Gamma\rightarrow \Gamma'$ is said to be a quadratic function iff $q(\gamma) = q(-\gamma)$ and $\langle\gamma, \t\gamma\rangle_q \equiv q(\gamma+\t\gamma) - q(\gamma) - q(\t\gamma)$ is bilinear.}  Let us choose an explicit set of generators of $\Gamma$ such that $\Gamma \cong \bigoplus_i \mathbb{Z}_{N_i}$.  Then, the associated abelian group is given by:
 \begin{equation}
     \h\CA(\Gamma) = \bigoplus_i \h\CA(\mathbb{Z}_{N_i}) \oplus \bigoplus_{i<j} \mathbb{Z}_{N_i}\oplus\mathbb{Z}_{N_j}~,
 \end{equation}
with:
\begin{equation}
    \h\CA(\mathbb{Z}_{N_i}) = \begin{cases}
        \mathbb{Z}_{N_i} \qquad &\text{if } N_i\in2\mathbb{Z}+1~,\\
        \mathbb{Z}_{2N_i}\qquad &\text{if } N_i\in2\mathbb{Z}~.
    \end{cases}
\end{equation}
The map $\Gamma \rightarrow \h \CA(\Gamma)$ is also known as the universal quadratic functor.

\medskip
\noindent
\textbf{Pontryagin square and anomaly theory.} The 4d anomaly theory \eqref{PB_anomaly} is defined in terms of the Pontryagin square:
\begin{equation}
    \mathcal{P} : H^2(\mathfrak{M}_4, \Gamma) \rightarrow H^4(\mathfrak{M}_4, \h\CA(\Gamma))~. 
\end{equation}
The 4d topological action is  essentially a multiple of $\mathcal{P}(B_{\text{4d}})$, with $B_{\rm 4d}\in H^2(\mathfrak{M}_{4}, \Gamma)$ being a background gauge field for the 3d 1-form symmetry $\Gamma$ extended into the 4d bulk. The explicit form of the Pontryagin square depends on $\Gamma$. For example, for $\Gamma = \mathbb{Z}_N$ and assuming that $H_1(\mathfrak{M}_4, \Z)$ is torsion-free, as will be the case for us:
\begin{equation}
    \mathcal{P}(B_{\text{4d}}) = \begin{cases}
        B_{\text{4d}} \cup B_{\text{4d}}~, \qquad &N\in 2\mathbb{Z}+1~,\\
        \t B_{\text{4d}} \cup \t B_{\text{4d}} \mod 2N~, \qquad &N\in 2\mathbb{Z}~,\\
    \end{cases}
\end{equation}
where  $\t B_{\text{4d}}\in H^2(\mathfrak{M}_4, \mathbb{Z})$ is an integral uplift of $B_{\text{4d}}$. In the more general case $\Gamma = \bigoplus_i \mathbb{Z}_{N_i}$, we decompose the gauge field as $B_{\text{4d}} = \sum_{i}B_{i}$ with $B_i\in H^2(\mathfrak{M}_4, \mathbb{Z}_{N_i})$. Then, the Pontryagin square can be expanded accordingly:
\begin{equation}
    \mathcal{P}(B_{\text{4d}}) = \sum_i \mathcal{P}(B_i) + 2\sum_{i<j} B_i\cup B_j~.
\end{equation}
The four-dimensional anomaly action for $\Gamma^{(1)}_{\rm 3d}$ is given by the natural generalisation of  $\mathcal{P}(B_{\text{4d}})$ given in~\eqref{PB_anomaly with aij}.

\medskip
\noindent
{\bf Continuum formulation of the 4d anomaly theory and circle reduction.}  Let us consider the continuum form of the anomaly theory \eqref{PB_anomaly with aij}. It reads~\cite{Kapustin:2014gua, Hsin:2018vcg}:
\begin{equation}\label{anom-theory-continuum}
    S_{\text{4d}}^{\text{anom}} = \int_{\mathfrak{M}_4} \sum_{i,j}\frac{\mathfrak{a}_{ij}N_i N_j}{4\pi \gcd(N_i,N_j)} B_i\wedge B_j + \sum_i \frac{N_i}{2\pi} B_i\wedge dA_i~,
\end{equation}
where $B_i\in H^2(\mathfrak{M}_4, \U(1)^{(1)})$ and $A_i\in H^1(\mathfrak{M}_4, \U(1)^{(0)})$. The integers $\mathfrak{a}_{ij}$ satisfy the periodicity conditions $\mathfrak{a}_{ij}\sim \mathfrak{a}_{ij}+{\rm gcd}(N_i, N_j)$ on spin four-manifolds.%
\footnote{It is worth noting that, on general four-manifolds, the 4d TQFT~\protect\eqref{anom-theory-continuum} would have the periodicities $\mathfrak{a}_{ii}\sim \mathfrak{a}_{ii} + 2N_i$ (and $\mathfrak{a}_{ij}\sim \mathfrak{a}_{ij}+{\rm gcd}(N_i, N_j)$ for $i\neq j$), and would only be well-defined if $\mathfrak{a}_{ii} N_i$ were even~\protect\cite{Hsin:2018vcg}. We consider spin manifolds only because we are studying supersymmetric field theories, in which case the anomalies coefficients $\mathfrak{a}_{ij}$ are indeed valued in $\Z_{{\rm gcd}(N_i, N_j)}$ for every $i, j$.}
The topological action~\eqref{anom-theory-continuum} enjoys a gauge symmetry under which the gauge fields $B_i$ and $A_i$ transform as 
\begin{equation}
    B_i \mapsto B_i - d\lambda_i~,\qquad A_i\mapsto A_i+\sum_{j}\frac{\mathfrak{a}_{ij}N_j}{\gcd(N_i,N_j)}\lambda_j~.
\end{equation}
Given the continuum formulation, it is easy to consider the dimensional reduction of the anomaly theory on a circle. That is, we  consider  $\mathfrak{M}_4 = \mathfrak{M}_3\times S^1$ with $\partial\mathfrak{M}_3 = \Sigma_g$. The gauge fields $B_i$ and $A_i$ can be decomposed as:
\begin{equation}
    B_i \rightarrow B_i + \eta_i\wedge C_i~, \qquad A_i \rightarrow A_i + \eta_i\phi_i~,
\end{equation}
where $\eta_i\in H^1(S^1, \Z)$, $B_i\in H^2(\mathfrak{M}_3, \U(1))$, $C_i\in H^1(\mathfrak{M}_3, \U(1))$, and $\phi_i\in H^0(\mathfrak{M}_{3}, \U(1))$. Here $B_i$ and $C_i$ are the continuum versions of the 3d fields $B$ and $C$ for the anomaly theory of the two-dimensional theory on $\Sigma$.
 Plugging this back into \eqref{anom-theory-continuum} and dimensionally reducing, we find that the 3d anomaly theory is given by:
\begin{equation}
    S_{\text{3d}}^{\text{anom}}  = \int_{\mathfrak{M}_3} \sum_{i,j}\frac{\mathfrak{a}_{ij}N_iN_j}{4\pi \gcd(N_i, N_j)} (C_i\wedge B_j + B_i\wedge C_j) + \sum_{i}\frac{N_i}{2\pi} (C_i\wedge dA_i + B_i\wedge d\phi_i)~. 
\end{equation}
This is the continuum version of the  mixed 't Hooft anomaly~\eqref{BC mixed anomaly} between $\Gamma^{(1)}$ and $\Gamma^{(0)}$ discussed in the main text.

\section{Proofs for \texorpdfstring{$\SU(N)_K$}{SU(Nc)K} CS theory}
In this appendix, we discuss proofs and extended derivations of various results we use in section~\ref{sec:SU(Nc)K} for calculating expectation values of topological operators in the $\SU(N)_K$ theory, for the discrete gauging procedure.
In  section~\ref{app:proof_of_fixed_point}, we find the fixed points under the action of the full $\mathbb Z_{N}$ 0-form symmetry, in the absence of an anomaly. In section~\ref{app:fp_under_subgroup}, we extend this analysis to the subgroups of $\mathbb Z_{N}$, which covers all possible anomalies for $\SU(N)_K$.
Section~\ref{app:calculationofABsum} discusses the sum over the mixed correlators by carefully studying the mixed gravitational anomaly. In section~\ref{app:theta-angle}, we extend Jordan's totient function to accommodate a $\theta$-angle for the gauged 1-form symmetry.

\subsection{Fixed points under the full group}
\label{app:proof_of_fixed_point}
In this section, we complete the proof that there is a unique fixed point under the zero-form symmetry $\mathbb Z_{N}^{(0)}$  in the $\SU(N)_K$ theory, for any $N$, whenever $K$ a multiple of $N$. In section~\ref{sec:SU(Nc)K}, we have shown this for $N$ odd. Let us now consider the case $N$ even. In \eqref{fp_s}, we found the general expression for the fixed point  
\be\label{fp_ZN_app}
(l_a; \ell) = \Big(s, s+\kappa,s+ 2\kappa, \cdots,s+ (N-1)\kappa ;\,  N s+ \kappa{N(N-1)\ov 2}\; \text{mod}\;  \kappa N\Big)~,
\ee
where $\kappa=\frac{K}{N}\in\mathbb Z$. As discussed in section~\ref{sec:SU(Nc)K}, for $N$ odd the fixed point has $\ell=0$.

\sss{Existence and uniqueness.} 
Let us now prove that the fixed point~\eqref{fp_ZN_app} is unique also if $N$ is even. First, we prove that $s=\lceil \frac{\kappa}{2}\rceil$ gives a fixed point, and then we show that it is unique. Consider $\ell=N s+ \kappa{N(N-1)\ov 2}\mod \kappa N$ for $\kappa$ even. With $s=\frac{\kappa}{2}$ we have $\ell=\frac{\kappa}{2} N^2$, which is $\equiv 0\mod \kappa N$, since $N$ is even and thus $\ell=\frac{N}{2} \kappa N\equiv 0\mod \kappa N$. For $\kappa$ odd, on the other hand, with $s=\frac{\kappa+1}{2}$ we have $\ell=\frac{N}{2}+\frac{\kappa}{2} N^2$, where the second term is equivalent to 0 modulo $\kappa N$, for the same reason as above. Since $\frac{N}{2}\in\mathbb N$ and is $<N$, we thus have $\ell=\frac{N}{2}$ in this case. We can summarise all cases as follows:
\begin{equation}\label{ell-value-FP}
e^{-2\pi i\frac{\ell}{N}}=(-1)^{\kappa(N-1)}~.
\end{equation}
We have shown that for both $\kappa$ even and odd, $s=\lceil \frac \kappa2\rceil$ in \eqref{fp_s} gives a fixed point. In order to show uniqueness, we add to $s$ an integer $b$ in a valid range and show that it vanishes. For $\kappa$ even, consider therefore $s=\frac \kappa2+b$ with $b\in\mathbb Z$, which since $0\leq s<\kappa$ is between $-\frac \kappa2\leq b<\frac \kappa2$. For this value of $s$, we find $\ell=N b$. Since $|b|\leq \frac \kappa2<\kappa$, the remainder of $\ell$ is $N b\mod N \kappa=N b$. We now impose $\ell<N$, from which it is clear that $b=0$. Similarly if $\kappa$ is odd, then for $s=\frac{\kappa+1}{2}+b$ we find $\ell=N b\mod N \kappa\equiv N b<N$ and thus $b=0$. 

\sss{Fixed Bethe vacua.} We have proven the existence and uniqueness of the fixed point \eqref{fp_s} for all $N$ and all multiples $K$ of $N$. In fact, the fixed Bethe vacuum is independent of the value of $K$ and takes a simple form: if $N$ is odd, then $l_a=(a-1)\kappa$ and $\ell=0$. Thus  if we denote by $\hat{u}_{\text{fixed},a}$ the components of the fixed point solution $\hat{u}_{\text{fixed}}$, we find
\begin{equation}
  \hat{u}_{\text{fixed},a}=\frac{1}{N}(a-1)~, \qquad N \text{ odd}.
\end{equation}
When $N$ is even, we need to distinguish the two cases $\kappa$ even and odd. If $\kappa$ is even, then $l_a=\kappa(a-\frac12)$, while $\ell=0$. Therefore, we find $\hat{u}_{\text{fixed},a}=\frac{1}{N}(a-\frac12)$. If $\kappa$ is odd rather, we have $l_a=\frac12+\kappa(a-\frac12)$, while $\ell=\frac{N}{2}$, such that
\begin{equation}
    \hat{u}_{\text{fixed},a}=\frac{1}{N}(a-\frac12)~,\qquad N \text{ even},
\end{equation}
which is precisely the same as for $\kappa$ even. Since the index $a$ runs from $1,\dots, N$, we find that $2N \hat{u}_{\text{fixed}}$ is the list of even numbers $0,\dots, 2N-2$ if $N$ is odd, and it is the list of odd numbers $1,\dots ,2N-1$ if $N$ is even. Note that for both cases $N$ even and odd, we have that
\begin{equation}\label{diff_ell_FP}
    l_a-l_b=(a-b)\kappa~.
\end{equation}
This identity is important when evaluating the handle-gluing operator~\eqref{handlegluing_u} on the fixed point, which we use in~\eqref{HG-fixed}.

\subsection{Fixed points under subgroups}\label{app:fp_under_subgroup}
When $N$ is prime, every nontrivial element of $\mathbb Z_{N}$ is a generator of the full group, and the result in the previous subsection applies to any $\gamma\neq 0$. If $N$ has nontrivial divisors instead, particular elements of $\mathbb Z_{N}$ can generate proper subgroups, as discussed in detail in section~\ref{sec: 2d 0-form and orbits}. 

\sss{Enumeration of fixed points.}  
Let us now study the fixed points under a general subgroup $\mathbb Z_d$ of $\mathbb Z_{N}$, where of course $d$ is a divisor of $N$ and $K$. Above, we proved that under the action of the $\mathbb Z_{N}$ subgroup, there is precisely one fixed point. From the shift \eqref{Z_Nc_shift_Bv} of a Bethe vacuum under the generator $\gamma_0$, we find that $\ell\to \ell$ and $l_a\to l_a+\frac{K}{N} \mod K$. Applying this map $n$ times, the action under $n\gamma_0$ is therefore  $l_a\to l_a+n\frac{K}{N}\mod K$. Since the $l_a\in \{0,\dots, K\}$, and we are free to reorder the $l_a$'s, this means that two particular values of $l_a$ are related by an action of $n\gamma_0$ if they differ by $\gcd(n,N)\frac{K}{N}$. On the other hand, all fixed points under $n\gamma_0$ are fixed points under the subgroup $\langle n\gamma_0\rangle\cong \mathbb Z_d$, where $d=N/\gcd(n,N)$. This shows that the fixed points under any $\mathbb Z_d$ subgroup have $l_a$ values whose differences are multiples of $\frac Kd$. This generalises the result \eqref{fp_s}, where $d=N$ for the generator $\langle \gamma_0\rangle\cong\mathbb Z_{N}$,  and we  found that the values of $l_a$ differed by $\frac{K}{N}$.

The above analysis proves that we can decompose any fixed point $(l_1,\dots, l_{N};\ell)$ under the action of a  $\mathbb Z_d$ subgroup into $d$ orbits of tuples $(\tilde l_1,\dots, \tilde l_{\frac{N}{d}})$, which is determined from $(l_1,\dots, l_{N};\ell)$ by collapsing the whole list modulo $\frac Kd$ (\ie~$\tilde l_a\in \mathbb Z_{\frac Kd}$), and adding the orbits\footnote{We abuse notation by using $\cup$ as the concatenation of tuples.}
\begin{equation}
    (l_1,\dots, l_{N})=\bigcup_{j=0}^{d-1}(\tilde l_1+j\tfrac Kd,\dots, \tilde l_{\frac{N}{d}}+j\tfrac Kd)~.
\end{equation}
This allows us to count the number of fixed points under a $\mathbb Z_d$ subgroup. Since the $d$ full orbits do not have any degrees of freedom, we merely need to enumerate the set of integers $(\tilde l_1,\dots, \tilde l_{\frac{N}{d}})$ where each $\tilde l_a\in \mathbb Z_{\frac Kd}$. This number is simply $\binom{\frac Kd}{\frac{N}{d}}$. Since the `last' value $\ell$ is not affected by the $\mathbb Z_d$ action, we have $N$ initial possibilities for it. As discussed in section~\ref{sec:general_structure_SU(Nc)}, only every $K$-th candidate tuple satisfies the tracelessness condition, which leaves us with 
\begin{equation}
    \frac{N}{K}\binom{\frac Kd}{\frac{N}{d}}=\binom{\frac Kd-1}{\frac{N}{d}-1}
\end{equation}
fixed points. This concludes our counting exercise on fixed points, which we use in~\eqref{fixed_points_d}.

\subsection{Sum over mixed correlators}\label{app:calculationofABsum}
In section~\ref{sec:gauging 1-form sym on T3}, we consider the gauging of the 0-form symmetry on $T^2$. Due to the particular structure~\eqref{UBUCcorr} of the fixed points, this amount to summing $\langle \CU^{\gcd(m,n)\gamma_0}(\CC)\big\rangle_{T^2}$ over all $m,n\in\mathbb Z_N$, and results in a counting exercise of mutual fixed points. 
In order to study the full 3d one-form gauging, we need to consider mixed correlators such as 
\be\label{AB=BC_app}
\big\langle \Pi^{m \gamma_0} \CU^{n\gamma_0}(\CC)\big\rangle_{T^2} = (-1)^{m n \left({K\ov N}+1\right)(N-1)}
\big\langle   \CU^{{\rm gcd}(m,n)\gamma_0}(\CC)\big\rangle_{T^2}~,
\ee
which we studied in section~\ref{subsec:T3 modular anom}, or the `maximal' insertion~$\langle\Pi^{m \gamma_0}\CU^{n\gamma_0}(\CC) \;\CU^{n'\gamma_0}(\t\CC)\rangle_{T^2}$. Due to the modular anomaly on $T^3$, the sign~\eqref{AB=BC_app} makes the sum more elaborate. Before deriving the sum over the maximal insertions required in section~\ref{sec:gauging 1-form sym on T3}, let us consider the simpler sum over the smaller mixed correlators~\eqref{AB=BC_app} first. 

As explained in the main text, when $N$ is odd, the relative sign is equal to one, and the summation proceeds as in \eqref{UBUCsumNc}. Let us thus consider the case $N$ even in the following.

\sss{Case 1: no relative sign.}
There are various cases that we can study, in order to find a general formula for the sum over $\mathbb Z_{N}$. Clearly, it depends on the value of $\frac{K}{N}$ for which $n,m$ the sign~\eqref{AB=BC_app} is $+1$ or $-1$. For instance, if $N|K$ we can have the cases where $\frac{K}{N}+1$ is even or odd. If it is even, then  there is no relative sign between the two correlators, and we can proceed in summing over them as described above,
\begin{equation}\label{UAUBNcKK/Ncodd}
 N|K \text{ and } \frac{K}{N} \text{ odd:}\quad   \sum_{m,n=0}^{N-1}\langle \Pi^{m \gamma_0} \CU^{n\gamma_0}(\CC)\big\rangle_{T^2}=\sum_{ d|\gcd(N,K)}J_2(d)\binom{\frac Kd-1}{\frac{N}{d}-1}~.
\end{equation}

\sss{Case 2: alternating sign.}
Let us thus consider the next simplest case, where $N|K$ but $\frac{K}{N}+1$ is odd. In that case, the relative sign is merely $(-1)^{mn}$. We can now attempt to enumerate the contributions from all the $m,n$ with these alternating signs. Rather than partitioning $N^2$ into the divisors of $N$, we first partition the $N^2$ tuples $(m,n)$ into two sets where $(-1)^{mn}$ is  $+1$ and $-1$. Since  $(-1)^{mn}=1$ if either $m$ or $n$ or both are even, while  $(-1)^{mn}=-1$ only if both $m,n$ are odd, this gives a partition of the sort
\begin{equation}\label{Nc2altdecomp}
    \sum_{m,n=0}^{N-1}(-1)^{mn}=3\left(\frac{N}{2}\right)^2-\left(\frac{N}{2}\right)^2=\frac{N^2}{2}~.
\end{equation} 
Indeed, if $N$ is even, then there are $\frac{N}{2}$ even and $\frac{N}{2}$ odd numbers in the set $\{0,\dots, N-1\}$. 
This counting exercise allows us to find a general rule for which divisors $d$ of $N$ have contributions from both signs. Collapsing the sum over the $(m,n)$ to a sum over divisors $d$, we identify  $d \gcd(m,n,N)=N$. The sign $(-1)^{mn}$ is a minus sign if and only if both $m,n$ are odd. Here it becomes apparent that the condition becomes dependent on the nature of $N$: if $N$ has no odd divisors (\ie it is $2^n$), then this gcd is always equal to 1. In general, with  $\gcd(m,n,N)$ for $m,n$ odd we can probe in this way all \emph{odd} divisors of $N$. Thus the sum is alternating precisely if $\frac{N}{d}$ is odd. For any such odd $d|N$ we count now the number of values of $(-1)^{mn}\langle U_{\gcd(m,n)\gamma_0}(B)\rangle_{T^2}$ with plus and minus signs. 
Using a similar analysis as above \eqref{Nc2altdecomp}, one can show that for any fixed odd value of $\frac{N}{d}$, there are twice as many $+$ signs as $-$ signs, schematically $n_+=2n_-$. Since Jordan's totient $J_2=n_++n_-$ enumerates those regardless of signs, we find that the difference is $n_+-n_-=\frac13 J_2$. On the other hand if $\frac{N}{d}$ is even, then $\gcd(m,n,N)$ must be even, in which case  $(-1)^{mn}=1$. We have thus shown
\begin{equation}\label{UAUBNcKK/Nceven}
 N|K \text{ and } \frac{K}{N} \text{ even:}\quad   \sum_{m,n=0}^{N-1}\langle \Pi^{m \gamma_0} \CU^{n\gamma_0}(\CC)\big\rangle_{T^2}=\sum_{ d|\gcd(N,K)}\widetilde{\mathscr J}_2^{N}(d)\binom{\frac Kd-1}{\frac{N}{d}-1}~,
\end{equation}
where we temporarily introduce the symbol
\begin{equation}\label{tNc}
   \widetilde{\mathscr J}_2^{N}(d)=\begin{cases}\tfrac13 J_2(d)~, \quad \frac{N}{d} \text{ odd}~, \\
    J_2(d)~, \quad \frac{N}{d} \text{ even}~,
    \end{cases}
\end{equation}
which we will generalise below.

\sss{Case 3: $N$ even.}
The case remains where $N \nmid K$. Of course, for most integers $m$ and $n$, the values $\langle \Pi^{m \gamma_0} \CU^{n\gamma_0}(\CC)\big\rangle_{T^2}$ vanish due to the mixed anomaly~\eqref{mixed_anomaly_corr}. As shown above, in this situation $\frac{K}{N}$ is not necessarily an integer, however for all $m$ and $n$ such that the correlator does not vanish, $\frac{mnK}{N}$ is an integer, and the phase in~\eqref{AB=BC_app} is merely a sign. 

Let us study this sign in detail. We focus on the case where the sign is a $-$, that is, $mn(\tfrac{K}{N}+1)$ is odd. The tricky part is to identify all those $m,n\in\{0,\dots, N-1\}$ for which the above combination is odd. This is however not necessary, since as we showed above, it depends only on the fixed value of $\gcd(m,n,N)$ if such correlators are alternating. Since we translate $d=N/\gcd(m,n,N)$ to divisors $d$ of $N$, we can pick the representatives $m=n=\frac{N}{d}$ for any $d|N$. This number can be either odd or even. Indeed, $\gcd(\frac{N}{d},\frac{N}{d},N)=\frac{N}{d}$, since of course $\frac{N}{d}$ is a divisor of $N$. For these representatives, we can study the parity of 
\begin{equation}
    mn\left(\frac{K}{N}+1\right)=\frac{N}{d}\left(\frac{K}{d}+\frac{N}{d}\right)~.
\end{equation}
If $\tfrac{N}{d}$ is even, then this number is even and does not give rise to a sign. If $\tfrac{N}{d}$ is odd, then this number if odd only if $\tfrac{K}{d}$ is even.

We have shown that the contribution from a given pair $(m,n)$ can be odd only if $\tfrac{N}{d}$ is odd and at the same time $\tfrac{K}{d}$ is even. This allows us to generalise \eqref{tNc} to arbitrary $K$, 
\begin{equation}\label{tNcK}
    \widetilde{\mathscr J}_2^{N,K}(d)=\begin{cases}\tfrac13 J_2(d)~, \quad &\frac{N}{d} \text{ odd and } \frac{K}{d} \text{ even}~, \\
    J_2(d)~, \quad & \text{otherwise~.}
    \end{cases}
\end{equation}
From this we find
\begin{equation}
 N \text{ even:}\quad    \sum_{m,n=0}^{N-1}\langle \Pi^{m \gamma_0} \CU^{n\gamma_0}(\CC)\big\rangle_{T^2}=\sum_{ d|\gcd(N,K)}\widetilde{\mathscr J}_2^{N,K}(d)\binom{\frac Kd-1}{\frac{N}{d}-1}~,
\end{equation}
which holds for $N$ even but arbitary $K$. 
Let us confirm that it is compatible with the above cases where $N|K$. In either cases $\frac{N}{d}$ even/odd, there exists only a sign issue if $\frac{N}{d}$ is odd. Let now  $\frac{K}{N}$ be odd, then $\frac{K}{d}$ is an odd multiple of the odd number  $\frac{N}{d}$, which means that $\frac{K}{N}$ is odd. In that case $\widetilde{\mathscr J}_2^{N,K}(d)=J_2(d)$, and we derive \eqref{UAUBNcKK/Ncodd}. Let now $\frac{K}{N}$ be even, then $\frac{K}{d}$ is an even multiple of the odd number $\frac{N}{d}$ and therefore even. In that case $\widetilde{\mathscr J}_2^{N,K}(d)=\widetilde{\mathscr J}_2^{N}(d)$ just returns to the temporary definition \eqref{tNc}, and indeed we rederive \eqref{UAUBNcKK/Nceven}.

\sss{Arbitrary $N$ and $K$.}
Let us now combine all cases, with $N$ and $K$ arbitrary integers. From \eqref{AB=BC_app} and \eqref{UBUCsumNc} it is clear that for $N$ odd the correct enumeration is $J_2(d)$ for all divisors $d$ of both $N$ and $K$. This means we can slightly modify $\widetilde{\mathscr J}_2^{N,K}(d)$ to include all cases,
\begin{equation}\label{cj2NcK_app}
    \mathscr J_2^{N,K}(d)=\begin{cases}\tfrac13 J_2(d)~, \quad &N \text{ even,  } \frac{N}{d} \text{ odd,  } \frac{K}{d} \text{ even}~, \\
    J_2(d)~, \quad &\text{otherwise}~.
    \end{cases}
\end{equation}
After all this numerical gymnastics, we arrive at the general expression:
\begin{equation}
N,K\in\mathbb N: \quad \sum_{m,n=0}^{N-1}\langle \Pi^{m \gamma_0} \CU^{n\gamma_0}(\CC)\big\rangle_{T^2}=\sum_{d|\gcd(N,K)} \mathscr J_2^{N,K}(d)\binom{\frac Kd-1}{\frac{N}{d}-1}~.
\end{equation}
One important check concerns the divisibility of $J_2(d)$ by 3. Since $\mathscr J_2^{N,K}(d)=\tfrac13 J_2(d)$ only if $N$ is even and $\frac{N}{d}$ odd, this case only occurs if $d$ is even. We can show that $J_2(d)$ for $d$ even is always divisible by 3: For this, let $d=2m$. If $m$ is odd, then we can show using the definition \eqref{jord_totient_def} that $J_2(2m)=3J_2(m)$. Similarly if $m$ is even, then $J_2(2m)=4J_2(m)$. Thus if $2m$ contains an odd divisor, we can apply this recursion and find a divisor $3|J_2(2m)$. If $2m$ does not contain any odd divisors, we have $2m=2^n$, for which $J_2(2^n)=3\times 2^{2n-2}\in3\mathbb N$,  such that $3|J_2(2m)$ as well. We have thus shown that for any even integer $d$, $J_2(d)$ is divisible by 3, and $  \mathscr J_2^{N,K}: \mathbb N\to \mathbb N$ is a well-defined integral map for all integers $N$ and $K$.

\sss{Other mixed correlators.}
On $T^3$, there is only one other mixed correlator, which is the maximal insertion $\langle \Pi^{m \gamma_0} \CU^{n\gamma_0}(\CC)\CU^{n'\gamma_0}(\t\CC)\rangle_{T^2}$. Due to~\eqref{UAUBUCallcases}, the sum over $m,n,n'\in\mathbb Z_N$ only slightly generalises the summation exercise above. Clearly, the phase is a $\pm$ sign at most, and the counting of the number $n_\pm$ of $\pm$ contributions proceeds by relating $3n_+=4n_-$ for any divisor $d$ giving rise to minus signs. Then $n_++n_-=J_3(d)$, while we are interested in the number $n_+-n_-=\frac17 J_3(d)$ after the cancellation of the minus signs. This justifies the definition $\mathscr J_3^{N,K}$
in~\eqref{J3symbol} and proves~\eqref{3ptfnABCsum} for all values of $N$ and $K$.

\subsection{Theta angle for general \texorpdfstring{$N$}{Nc}}\label{app:theta-angle}
In this section, we elaborate on the calculation of the Witten index of the $\PSU(N)_K^{\theta_s}$ theory with angle $\theta_s$, as defined in \eqref{ZT3 PSUN gen sum ZN3 with theta}. More generally, for any value of $N$ and $K$ we aim to evaluate
\begin{equation}\label{sum_with_theta_angle}
   \sum_{m, n, n'\in \Z_N}   e^{2\pi i {m s\ov N}} \langle\Pi^{m \gamma_0}\CU^{n\gamma_0}(\CC) \;\CU^{n'\gamma_0}(\t\CC)\rangle_{T^2}~,
\end{equation}
regardless of the precise form of the mixed anomaly.
This sum is rather involved to evaluate explicitly since the phases for nontrivial theta angle interfere with the enumeration of signs originating from the gravitational anomaly~\eqref{UAUBUCallcases}. For vanishing theta angle, this resulted in the refinement \eqref{J3symbol} of Jordan's totient function, as discussed in the previous subsection.

In section~\ref{sec:gauging 1-form sym on T3}, we argue that the sum~\eqref{sum_with_theta_angle} should be analysed in three steps of increasing difficulty, which is due to the two distinct intricacies: The $\theta$-angle probes the theory in a different fashion depending on whether $N$ is square-free or not~\cite{Aharony:2013hda,Gaiotto:2014kfa}.%
\footnote{Recall that the property of a number to be square-free is important in the study of $S$-duality orbits of $\CN=4$ SYM  with gauge algebra $\mathfrak{su}(N)$~\protect\cite{Aharony:2013hda}. If $N$ is square-free, then there is a single orbit relating the $\SU(N)/\mathbb Z_{p}$ theories with $p|N$.} 
The second obstruction is the gravitational anomaly discussed in section~\ref{subsec:T3 modular anom}, which we can omit if $K$ rather than $N$ is square-free. Let us thus study these three steps in detail.

\sss{Both $N$ and $K$ square-free.}
First, let us assume that both $N$ and $K$ are square-free, that is, no prime factor appears more than once in their prime factor decomposition. This assumption simplifies the discussion for the following reason. Summing over exponentials as in~\eqref{sum_with_theta_angle} gives rise to sums of  geometric series of the form
\begin{equation}\label{geometric_series}
    \sum_{n=0}^{d-1}e^{2\pi i\frac{ns}{d}}=d\, \delta_{s \text{ mod } d,0}~,
\end{equation}
where $d$ is any divisor of $N$. 
Evaluating the triple sum~\eqref{ZT3 PSUN gen sum ZN3 with theta} involves products of several such geometric-type series, which are difficult to bring to a simple form. However, we can express 
\begin{equation}\label{kroneckermodprod}
\delta_{s\text{ mod } d,0}=\prod_{p|d}\delta_{s\text{ mod } p,0}
\end{equation}
as a product over all prime divisors $p$ of $d$, if and only if every divisor $d$ is square-free -- that is, if $N$ is square-free. This is clearly not the case if $N$ is not square-free, for instance, $\kron{s}{4}$ can not be written as a product over prime divisors of 4.\footnote{If $d$ is not square-free, the lhs of~\protect\eqref{kroneckermodprod} would be replaced by $\delta_{m\text{ mod} \text{ rad}(d)}$, where the \emph{radical} $\text{rad}(d)$ is the product of the distinct prime numbers dividing $d$.} We will return to this point below. 

These elaborations lead us to a conjecture on the general form of the Witten index of the $\PSU(N)_K$ theory with arbitrary theta angle. For this, we need to introduce an integer-valued `square-free' refinement of Jordan's totient function,
\begin{equation}\label{Jkdmsf_app}
    J_k^{\text{sf}}(d,s)\equiv d^k\prod_{p|d}(\delta_{s\text{ mod } p,0}-\frac{1}{p^k})~,
\end{equation}
where the product is again over prime divisors of $d$. 
Clearly, $J_k^{\text{sf}}(d,0)=J_k(d)$ is the ordinary Jordan totient.
Using \eqref{kroneckermodprod}, one can show that
\begin{equation}\label{JKsfsum}
    \sum_{d|N}J_k^{\text{sf}}(d,s)=N^k\delta_{s\text{ mod } N,0}
\end{equation}
for all $k\in\mathbb N$ and all square-free integers $N$, generalising the relation \eqref{Jordan_Jk}.
Then we find for both $N$ and $K$ square-free:
\begin{equation}\label{PSUNcthetaNcJ3}
   \sum_{m, n, n'\in \Z_N}   e^{2\pi i {m s\ov N}} \langle\Pi^{m \gamma_0}\CU^{n\gamma_0}(\CC) \;\CU^{n'\gamma_0}(\t\CC)\rangle_{T^2}=\sum_{ d|\gcd(N,K)}J_3^{\text{sf}}(d,s)\binom{\frac Kd-1}{\frac{N}{d}-1}~.
\end{equation}

\sss{Arbitrary $N$ and square-free $K$.}
We can generalise these results to arbitrary $N$ with square-free $K$. As described above, the result will hold more generally for all values of $N$ and $K$ such that the conditions~\eqref{statement_mod_anomaly} are false for all divisors $r|\gcd(N,K)$. 
For this, we define for any integer $d$ the \emph{radical}  $\text{rad}(d)$ as the product of the distinct prime numbers dividing $d$. Then we refine \eqref{Jkdmsf_app} as \cite{Oprea+2011+181+217,oprea2010note,osti_10067008,Oprea2019}
\begin{equation}\label{Jkdm_app}
     J_k(d,s)\equiv d^k\delta_{s\text{ mod } \frac{d}{\text{rad}(d)},0}\, \prod_{p|d}\left(\delta_{s\text{ mod } p^{e_d(p)},0}-\frac{1}{p^k}\right),
\end{equation}
where for any prime divisor $p$ of $d$, $e_d(p)$ is the maximal exponent of which $p$ appears in the prime factor decomposition of  $d$, that is, we can write $d=\prod_{p|d}p^{e_d(p)}$. The number-theoretic interpretation is that $J_k(d,s)$ is the contribution of the partition of $N^k$ into a sum over divisors $d$ of any integer $N$, 
\begin{equation}
    \sum_{d|N}J_k(d,s)=N^k\delta_{s\text{ mod } N,0}~,
\end{equation}
generalising \eqref{JKsfsum} to arbitrary integers $N$.
Then for any value of  $N$ and any square-free integer $K$ we find
\begin{equation}\label{triple_sum_J3s}
  \sum_{m, n, n'\in \Z_N}   e^{2\pi i {m s\ov N}} \langle\Pi^{m \gamma_0}\CU^{n\gamma_0}(\CC) \;\CU^{n'\gamma_0}(\t\CC)\rangle_{T^2}=\sum_{ d|\gcd(N,K)}J_3(d,s)\binom{\frac Kd-1}{\frac{N}{d}-1}~.
\end{equation}

\sss{Arbitrary $N$ and $K$.}
Finally, let us comment on the case where both $N$ and $K$ are arbitrary. In these cases, rather than the geometric series \eqref{geometric_series}, we get alternating geometric series of the form
\begin{equation}\label{alternating_geometric_series}
d\text{ even: }\qquad     \sum_{n=0}^{d-1}(-1)^ne^{2\pi i\frac{ns}{d}}=d\, \delta_{s-\frac d2\text{ mod } d,0}~.
\end{equation}
Indeed, consider the example of $N=4$ and $K=8$, such that $N$ has a divisor $d=4$ with $N/d$ odd and  $K/d=2$  even. By direct calculation of the sum~\eqref{sum_with_theta_angle}, we indeed get a term $\delta_{s-2\text{ mod } 4,0}$, 
\begin{equation}
 4^2\textbf{I}_\text{W}[\PSU(4)^{\theta_s}_{8}]=32+40\kron{s-2}{4}+24\kron{s}{2}+8\kron{s}{4}
  ~.
\end{equation}
A yet simpler example is $N=2$ with $K=4$. Here,
\begin{equation}
    4\textbf{I}_\text{W}[\SO(3)^{\theta_s}_{4}]=3+\kron{s}{2}+5\kron{s-1}{2}~,
\end{equation}
while $J_3(3,s)=8\kron{s}{2}-1$. We leave it for future work to find a suitable modification of \eqref{Jkdm_app} that works for all integers $N$ and $K$ and generalises~\eqref{triple_sum_J3s}.

\section{Number theory tidbits}\label{app:number_theory}
In this appendix, we collect some number theoretic identities that are used in the body of the paper, and we review properties of the classical Jordan totient function.

\subsection{Glaisher's theorem}\label{app:Glaisher}
Following section~\ref{sec:Ncprime} and in particular~\eqref{Glaisher_const},  want to prove that there exists an integer $M_{N,K}$, such that 
\begin{equation}
    \binom{K-1}{N-1}=1+M_{N,K}N^2~,
\end{equation}
assuming that  $N\geq 3$ is prime and $N|K$. 
Let us bring this to a slightly more standard form:
\begin{proposition}\label{prop_prime}
Let $p\geq 3$ be a prime and $n\in\mathbb N$ an integer. Then 
\begin{equation}
    \binom{np-1}{p-1}\equiv 1 \mod p^2~.
\end{equation}
\end{proposition}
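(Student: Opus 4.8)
The plan is to reduce the claim to a congruence for a harmonic-type sum modulo $p$, which is the arithmetic heart of the matter. First I would write the binomial coefficient as a ratio of products,
\begin{equation}
\binom{np-1}{p-1} = \frac{\prod_{j=1}^{p-1}(np-j)}{(p-1)!}~,
\end{equation}
so that everything hinges on expanding the numerator $\prod_{j=1}^{p-1}(np-j)$ as a polynomial in the single quantity $np$. Since any monomial containing $(np)^2$ or a higher power of $np$ is divisible by $p^2$, only the constant and linear terms survive modulo $p^2$. The constant term is $\prod_{j=1}^{p-1}(-j)=(-1)^{p-1}(p-1)!=(p-1)!$, using that $p$ is odd, and the coefficient of $np$ is $-H$, where $H = \sum_{i=1}^{p-1}\frac{(p-1)!}{i}$ is an integer (each summand being a genuine integer).

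This yields, modulo $p^2$,
\begin{equation}
\binom{np-1}{p-1}\,(p-1)! \equiv (p-1)! - np\,H \pmod{p^2}~.
\end{equation}
Because $(p-1)!$ is coprime to $p^2$, it may be cancelled, and the proposition becomes equivalent to the assertion $np\,H \equiv 0 \pmod{p^2}$ for every $n$, that is, to $H \equiv 0 \pmod{p}$. The problem has thus been localised to a single, $n$-independent divisibility statement.

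It then remains to verify $H \equiv 0 \pmod{p}$. Here I would invoke Wilson's theorem, $(p-1)! \equiv -1 \pmod p$, which gives $\frac{(p-1)!}{i} \equiv -\,i^{-1} \pmod p$, so that $H \equiv -\sum_{i=1}^{p-1} i^{-1} \pmod p$. Since inversion is a bijection of $\{1,\dots,p-1\}$ modulo $p$, the sum of inverses equals $\sum_{i=1}^{p-1} i = \tfrac{p(p-1)}{2}$, which is divisible by $p$ precisely because $p$ is odd, so that $\tfrac{p-1}{2}$ is an integer. Hence $H \equiv 0 \pmod p$, completing the argument.

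The only genuinely delicate step is the bookkeeping around the ``$1/i$'' manipulations: one must treat $H=\sum_i (p-1)!/i$ as a bona fide integer throughout, rather than as a formal fraction, and only pass to inverses modulo $p$ at the very end. I would also emphasise that the hypothesis $p\geq 3$ is used essentially and exactly once — for $p=2$ the harmonic sum does not vanish and the congruence indeed fails — so the oddness of $p$ is not a cosmetic assumption. This elementary route moreover makes the refinement to the Wolstenholme-type $p^3$ congruence (Proposition~\ref{prop_prime_2}) transparent: it would require upgrading $H\equiv 0 \pmod p$ to the stronger $\sum_{i=1}^{p-1} i^{-1}\equiv 0 \pmod{p^2}$, which holds only from $p=5$ onward.
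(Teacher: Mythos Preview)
Your proof is correct and self-contained. The paper takes a different route: it quotes a lemma of Gr\"unberg stating that $\binom{np^l}{kp^l}\equiv\binom{np^{l-1}}{kp^{l-1}}\pmod{p^{3l-1}}$ for any prime $p$, then specialises to $k=l=1$ to get $\binom{np}{p}\equiv n\pmod{p^2}$ and finishes via the identity $\binom{np}{p}=n\binom{np-1}{p-1}$. Your argument is more elementary --- it needs only Wilson's theorem and the bijection $i\mapsto i^{-1}$ on $(\mathbb Z/p\mathbb Z)^\times$ --- and it makes transparent exactly where the hypothesis $p\geq 3$ enters (the parity of $\tfrac{p(p-1)}{2}$), as well as why the step up to Proposition~\ref{prop_prime_2} requires the stronger Wolstenholme input $\sum_{i=1}^{p-1} i^{-1}\equiv 0\pmod{p^2}$. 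The paper's approach, by contrast, packages the result as a special case of a much more general congruence family, which is efficient but outsources the arithmetic core.
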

The proof follows directly from the following
\begin{lemma}[Lemma 19 of \cite{Grunberg:2003aa}]\label{lemma:gunberg}
    Let $p$ be prime and $k,n\in\mathbb N$. Then
    \begin{equation}
        \binom{np^l}{kp^l}\equiv \binom{np^{l-1}}{kp^{l-1}}\mod p^{3l-1}~.
    \end{equation}
\end{lemma}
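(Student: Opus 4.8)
The plan is to strip the powers of $p$ out of both binomial coefficients so that the statement reduces to a congruence between two products of units, and then to control that unit ratio by a $p$-adic expansion. Assume $0\le k\le n$. Write
\[
f(N)\equiv\prod_{\substack{1\le j\le N\\ p\nmid j}} j ,
\]
and observe that in $(Np)!$ the factors divisible by $p$ contribute exactly $p^{N}\,N!$, so that $(Np^l)!=p^{Np^{l-1}}(Np^{l-1})!\,f(Np^l)$ for every $N$. Feeding this into $\binom{np^l}{kp^l}=(np^l)!/\big((kp^l)!\,((n-k)p^l)!\big)$, the powers of $p$ cancel because $np^{l-1}=kp^{l-1}+(n-k)p^{l-1}$, and the reduced factorials reassemble into $\binom{np^{l-1}}{kp^{l-1}}$. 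Thus
\[
\binom{np^l}{kp^l}=\binom{np^{l-1}}{kp^{l-1}}\cdot W,\qquad
W\equiv\frac{f(np^l)}{f(kp^l)\,f((n-k)p^l)} .
\]
Since each $f(\cdot)$ is a $p$-adic unit, $W\in\mathbb Z_p^\times$; and because $\binom{np^l}{kp^l}-\binom{np^{l-1}}{kp^{l-1}}=\binom{np^{l-1}}{kp^{l-1}}(W-1)$ is an integer, it suffices to prove $W\equiv 1\pmod{p^{3l-1}}$ in $\mathbb Z_p$.

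The second step is to decompose each $f(mp^l)$ into blocks of length $p^l$. Writing $R=\{r:1\le r\le p^l,\ p\nmid r\}$ and $g_a=\prod_{r\in R}(ap^l+r)$, one has $f(mp^l)=\prod_{a=0}^{m-1}g_a$ and $g_a=f(p^l)\prod_{r\in R}\big(1+ap^l r^{-1}\big)$. Taking $p$-adic logarithms (an isometry $1+p\mathbb Z_p\to p\mathbb Z_p$ for $p$ odd) gives
\[
\log g_a=\log f(p^l)+\sum_{j\ge 1}\frac{(-1)^{j-1}}{j}\,(ap^l)^j\,S_j,\qquad S_j\equiv\sum_{r\in R}r^{-j}.
\]
Summing over the blocks and forming $\log W=\log f(np^l)-\log f(kp^l)-\log f((n-k)p^l)$, the $\log f(p^l)$ terms cancel (their coefficient is $n-k-(n-k)=0$), and the $j=1$ coefficient telescopes to $\binom n2-\binom k2-\binom{n-k}2=k(n-k)$, an integer. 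Hence $\log W=k(n-k)\,p^l S_1-\tfrac12 p^{2l}S_2\,(\cdots)+\cdots$, with integer coefficients; so the whole problem becomes a valuation estimate on the power sums $S_j$.

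The heart of the argument, and the step I expect to be the main obstacle, is proving the two sharp estimates $v_p(S_2)\ge l-1$ and $v_p(S_1)\ge 2l-1$. The first follows from the classical fact that $\sum_{s\in(\mathbb Z/p^l)^\times}s^{2}\equiv 0\pmod{p^{l-1}}$, using that $r\mapsto r^{-1}$ permutes the units; this is $\equiv 0\pmod{p^l}$ unless $p=3$, where it is exactly $-p^{l-1}\pmod{p^l}$, which is precisely the source of the lost power of $p$. The second estimate comes from the involution $r\leftrightarrow p^l-r$ on $R$: expanding $(p^l-r)^{-1}$ $p$-adically gives $r^{-1}+(p^l-r)^{-1}\equiv -p^l r^{-2}\pmod{p^{2l}}$, whence $S_1\equiv-\tfrac{p^l}{2}S_2\pmod{p^{2l}}$ and therefore $v_p(S_1)\ge l+v_p(S_2)\ge 2l-1$. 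Plugging these back, $v_p\big(k(n-k)p^lS_1\big)\ge 3l-1$ and $v_p\big(\tfrac12 p^{2l}S_2\big)\ge 3l-1$, while for $j\ge 3$ the factor $p^{jl}/j$ already has valuation $\ge jl-\log_p j>3l-1$; the case $p=3,\ j=3$ is absorbed by the extra $v_p(S_3)\ge l$. Thus $v_p(\log W)\ge 3l-1$, so $W\equiv 1\pmod{p^{3l-1}}$, which closes the odd-$p$ case. The tight exponent $3l-1$ is seen to arise exactly at $p=3$.

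Finally, the prime $p=2$ must be addressed separately, since the $2$-adic logarithm is only injective on $1+4\mathbb Z_2$. I would run the parallel block computation directly on the products $\prod_{r\in R}(1+ap^l r^{-1})$ rather than through $\log$, using the corresponding $2$-adic power-sum valuations over $R=\{$odd $r\le 2^l\}$; the same leading term $p^l S_1$ governs the precision and again yields $3l-1$. For the application to Proposition~\ref{prop_prime} only $p\ge 3$ is needed, so the odd-prime argument already suffices there.
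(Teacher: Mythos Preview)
The paper does \emph{not} prove this lemma: it is quoted verbatim as Lemma~19 of an external reference and then immediately applied (with $k=l=1$) to derive Proposition~\ref{prop_prime}. There is therefore no in-paper argument to compare your proposal against.

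That said, your proposal is essentially the standard $p$-adic proof of this Jacobsthal--Kazandzidis-type congruence, and for odd $p$ it is correct. Two small points of care are worth noting. First, the quantity $\log f(p^l)$ that you write is not literally defined, since $f(p^l)\equiv -1\pmod{p}$ by the generalised Wilson theorem; what makes the manipulation legitimate is that these undefined terms cancel identically when you form $\log W$, or equivalently that one should work with $\log\!\big(g_a/f(p^l)\big)$ from the outset. Second, in the tail estimate you claim $jl-\log_p j>3l-1$ for $j\ge 3$, but the borderline case is precisely $p=3,\ j=3$, which you correctly flag; there the trivial bound $v_3(S_3)\ge 0$ already suffices to give valuation $\ge 3l-1$, so your appeal to a sharper $v_3(S_3)\ge l$ is not strictly needed (though it does follow from your pairing argument). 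With these cosmetic adjustments the odd-$p$ proof is complete. Your treatment of $p=2$ is only a sketch, but since the paper invokes the lemma solely for $p\ge 3$ in deriving Proposition~\ref{prop_prime}, this is entirely adequate for the application at hand.
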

\begin{proof}
Let us prove Proposition  \ref{prop_prime}. Set $k=l=1$ in Lemma \ref{lemma:gunberg}. Thus $\binom{np}{p}\equiv n\mod p^2$. But $\binom{np}{p}=n\binom{np-1}{p-1}$, and the claim follows. 
\end{proof}
Similar statements are known in the literature (see \eg~\cite{YAQUBI2018428}). Corollaries of Lemma \ref{lemma:gunberg} are Wolstenholme's theorem~\cite{wolstenholme1862certain} and Babbage's theorem~\cite{babbage1819}, while Proposition \ref{prop_prime} is somewhat inconsistently called Glaisher's theorem, being consequences of Glaisher's congruence~\cite{glaisher1900congruences,glaisher1900residues}. See~\cite{mestrovic2011wolstenholme} for an excellent review. Let us also note also that excluding further the prime $p=3$, we can divide by another factor of $p$:
\begin{proposition}\label{prop_prime_2}
For any prime $p\geq 5$ and any integer $n$, we have 
\begin{equation}
    \binom{np-1}{p-1}\equiv 1 \mod p^3~.
\end{equation}
\end{proposition}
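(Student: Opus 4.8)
The plan is to reduce the congruence to Wolstenholme's theorem by expanding the binomial coefficient as a product. First I would write
\[
\binom{np-1}{p-1} = \prod_{j=1}^{p-1}\frac{np-j}{j} = \prod_{j=1}^{p-1}\left(1 - \frac{np}{j}\right),
\]
using $(-1)^{p-1}=1$ and noting that each $j$ with $1\le j \le p-1$ is invertible modulo $p^3$. Expanding the product in elementary symmetric functions and discarding every term of order $(np)^3$ or higher, which is divisible by $p^3$, yields
\[
\binom{np-1}{p-1} \equiv 1 - np\,H_1 + (np)^2 H_2 \mod p^3,
\]
where $H_1 = \sum_{j=1}^{p-1} j^{-1}$ and $H_2 = \sum_{1\le i<j\le p-1}(ij)^{-1}$ are regarded as elements of $\mathbb{Z}_{(p)}$.

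Next I would control the two harmonic sums using the classical Wolstenholme congruences, which hold precisely for $p\ge 5$: the harmonic sum satisfies $H_1 \equiv 0 \mod p^2$, and the inverse-square sum satisfies $\sum_{j=1}^{p-1} j^{-2}\equiv 0 \mod p$. The vanishing of $H_2$ modulo $p$ then follows from Newton's identity $H_1^2 = \sum_{j=1}^{p-1} j^{-2} + 2H_2$: since $H_1\equiv 0\mod p^2$ forces $H_1^2\equiv 0\mod p$, and the inverse-square sum is $\equiv 0\mod p$, we obtain $2H_2\equiv 0\mod p$ and hence $H_2\equiv 0\mod p$ because $p$ is odd. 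Assembling the pieces finishes the proof: the linear term $np\,H_1$ is divisible by $p^3$ since $H_1$ contributes a factor $p^2$, and the quadratic term $(np)^2 H_2$ is divisible by $p^3$ since $H_2$ contributes a factor $p$. Both drop out modulo $p^3$, leaving $\binom{np-1}{p-1}\equiv 1 \mod p^3$ for every integer $n$, including the case $p\mid n$, since the expansion never divides by $n$.

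I expect the main obstacle to be the input Wolstenholme congruences rather than their assembly, and here I would either cite them directly from \cite{wolstenholme1862certain,mestrovic2011wolstenholme} or supply a short self-contained argument. The standard route pairs the terms $j$ and $p-j$ in $H_1$, which reduces the mod-$p^2$ statement for $H_1$ to the mod-$p$ vanishing of $\sum_{j=1}^{p-1} j^{-2}$; the latter in turn follows because $j\mapsto j^{-1}$ permutes the nonzero residues, so $\sum_{j} j^{-2}\equiv \sum_j j^2 = \tfrac{(p-1)p(2p-1)}{6} \mod p$, which is a multiple of $p$ exactly when $p\ge5$. This last step also pinpoints why $p=3$ must be excluded: for $p=3$ the factor $p$ in the numerator is cancelled by the $3$ dividing the denominator $6$, so $\sum_{j=1}^{2} j^2 = 5\not\equiv 0 \mod 3$ and the refinement to $p^3$ genuinely fails, consistent with $\binom{5}{2}=10\equiv 1\mod 9$ but $\not\equiv 1 \mod 27$. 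Finally, I note that this harmonic-sum method is preferable to deducing the claim from the generalized Wolstenholme statement $\binom{np}{p}\equiv n \mod p^3$ through $\binom{np}{p}=n\binom{np-1}{p-1}$, since that route would require cancelling the common factor $n$, which is illegitimate precisely in the delicate case $p\mid n$; working directly with $\binom{np-1}{p-1}$ treats all $n$ uniformly.
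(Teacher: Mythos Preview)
Your proof is correct. The expansion of $\binom{np-1}{p-1}$ as $\prod_{j=1}^{p-1}(1-np/j)$, the truncation modulo $p^3$, and the invocation of Wolstenholme's congruences $H_1\equiv 0\bmod p^2$ and $\sum_j j^{-2}\equiv 0\bmod p$ (hence $H_2\equiv 0\bmod p$ via Newton's identity) are all valid, and the argument handles every integer $n$ uniformly, including $p\mid n$.

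The paper does not actually supply a proof of this proposition: it merely records the statement after proving the weaker $p^2$ version (Proposition~\ref{prop_prime}) via the Ljunggren--Jacobsthal type congruence of Lemma~\ref{lemma:gunberg}, and points to the literature \cite{wolstenholme1862certain,mestrovic2011wolstenholme}. So there is no paper proof to compare against in detail. That said, your closing observation is on point: the natural extension of the paper's method for Proposition~\ref{prop_prime}---namely, lifting $\binom{np}{p}\equiv n\bmod p^3$ and cancelling the factor $n$ from $\binom{np}{p}=n\binom{np-1}{p-1}$---fails precisely when $p\mid n$, whereas your direct harmonic-sum computation avoids this obstruction entirely. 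Your approach is therefore not just different but genuinely cleaner for the statement as written (``any integer $n$'').
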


\subsection{Totient functions}\label{app:totients}
Totient functions are arithmetic functions which are associated with divisors of a given integer. In this appendix, we list important definitions and properties that are used in the body of the paper and refer the reader to  \cite{Sndor2004,McCarthy1986,schulte1999,freedbrown2012,zbMATH05903090} for comprehensive treatments. 

\sss{Euler's totient.} The simplest example of a totient function is Euler's totient  $\varphi$, which counts the positive integers up to a given integer $n$ that are relatively prime to $n$. In other words, $\varphi(n)$ is the number of integers $k$ in the range $k=1,\dots, n$ for which the greatest common divisor $\gcd(n, k)$ is equal to 1. The integers $k$ of this form are sometimes referred to as \emph{totatives} of $n$, giving the function its name. Clearly, $\varphi(1)=1$ and $\varphi(p)=p-1$ for prime numbers $p$. 

An important property is that every integer $n$ can be partitioned into the Euler totients of its divisors,
\begin{equation}
    \sum_{d|n}\varphi(d)=n~.
\end{equation}
This identity is tightly linked to cyclic groups: For any integer $d$, $\varphi(d)$ is the number of possible generators of the cyclic group $\mathbb Z_d$. Indeed, if $\mathbb Z_d$ is generated by some element $g$ with $g^d=1$, then $g^k$ is another generator if and only if $k$ is coprime to $d$. Since every element of $\mathbb Z_n$ generates a cyclic subgroup, and all subgroups $\mathbb Z_d\subset \mathbb Z_n$ are generated by precisely $\varphi(d)$ elements of $\mathbb Z_n$, the above formula holds. 

The Euler totient can be calculated in several ways. Euler's product formula states that
\begin{equation}
    \varphi(n)=n\prod_{p|n}\left(1-\frac 1p\right)~,
\end{equation}
where the product is over all prime numbers $p$ dividing $n$.

\sss{Jordan's totient.} A generalisation of Euler's totient function is Jordan's totient function, denoted by $J_k(n)$.\footnote{Another notation used in the literature is $\varphi_k(n)$.}
For both $k$ and $n$ positive integers, $J_k(n)$ equals the number of $k$-tuples of positive integers that are less than or equal to $n$ and that together with $n$ form a coprime set of $k+1$ integers,
\begin{equation}
    J_k(n)=\left| \{(m_1,\dots, m_k)\in\mathbb N \, | \, 1\leq m_i\leq n, \, \gcd(m_1,\dots, m_k,n)=1\} \right|~.
\end{equation}
The function $J_k$ has a group-theoretical interpretation as well. Indeed,  $J_k(n)$ counts the number of sequences $(g_1,\dots, g_k)$ of elements in $\mathbb Z_n$ such that, if $G_i$ is the subgroup generated
by $\{g_1,\dots, g_i\}$, then we have the subgroup sequence 
\begin{equation}
\{0\}\leq G_1\leq G_2\leq \dots \leq G_k=\mathbb Z_n~.
\end{equation}
By an inclusion-exclusion principle it can be shown that Jordan's totient function equals 
\begin{equation}\label{jord_totient_def_app}
    J_k(n)=n^k\prod_{p|n}\left(1-\frac{1}{p^k}\right)~,
\end{equation}
where $p$ again ranges through all prime divisors of $n$. 
Clearly, Jordan's totient function is a generalisation of $\varphi$, since $\varphi=J_1$. An important property is 
\begin{equation}
    \sum_{d|n}J_k(d)=n^k~.
\end{equation}
For any $p$ prime, we have $J_k(p)=p^k-1$. Clearly, $J_k(1)=1$.
Jordan's totient is a \emph{multiplicative} function, meaning that whenever $m$ and $n$ are coprime (that is, $\gcd(m,n)=1$), then 
\begin{equation}\label{Jordan_multiplicative}
    J_k(m)J_k(n)=J_k(mn)~.
\end{equation}
Furthermore, we have Gegenbauer's identity
\begin{equation}
    J_{k+l}(n)=\sum_{d|n}d^l \,J_k(d)\, J_l(\tfrac nd)
\end{equation}
for $n,k,l\in \mathbb N$~\cite{zbMATH05903090}.
This gives a relation between $J_3$ and $J_1=\varphi$,
\begin{equation}
    J_3(n)=\sum_{d|n}d\, J_2(d)\, \varphi(\tfrac nd)~.
\end{equation}
Jordan's totient function has various interesting divisibility properties. For instance, it is straightforward to prove that for even integers $2n$, we have
\begin{equation}
2^k-1\, \big| \, J_k(2n)
\end{equation}
for all  $k,n\in\mathbb N$. 
Finally, we note that there are various other generalisations of Euler's totient $\varphi$. One interesting generalisation of Jordan's totient $J_k$ is the extension to general finite groups \cite{HALL1936}.

\bibliography{bib} 
\bibliographystyle{JHEP} 
\end{document}